\definecolor{Darkblue}{rgb}{0,0,0.4}
\definecolor{Brown}{cmyk}{0,0.81,1.,0.60}
\definecolor{Purple}{cmyk}{0.45,0.86,0,0}
\definecolor{myblue}{RGB}{25,25,129}
\newcommand{\mydriver}{hypertex}
 \renewcommand{\mydriver}{pdftex}
\newcommand{\lref}[2][]{\hyperref[#2]{#1~\ref*{#2}}}
\newcommand{\qedsymbol}{$\Halmos$}
\newcommand{\eop}{$\!$\hfill\qedsymbol}
\DeclarePairedDelimiterX{\infdivx}[2]{(}{)}{%
  #1\;\delimsize\|\;#2%
}
\newcommand{\bfpm}{{\textsf{BFPM}}\xspace}
\newcommand{\fppe}{{\textsf{FPPE}}\xspace}
\newcommand{\sppe}{{\textsf{SPPE}}\xspace}
\newcommand{\erce}{{\textsf{ERCE}}\xspace}
\newcommand{\eg}{{\textsf{EG}}\xspace}
\newcommand{\kkt}{{\textsf{KKT}}\xspace}
\newcommand{\fppet}{(\alpha, x)}
\newcommand{\newfppet}{(\alpha', x')}
\newcommand{\mcp}{{\textsf{CP}}\xspace}
\begin{document}
%%%%%%%%%%%%%%%%

% Outcomment only when entries are known. Otherwise leave as is and
%   default values will be used.
%\setcounter{page}{1}
%\VOLUME{00}%
%\NO{0}%
%\MONTH{Xxxxx}% (month or a similar seasonal id)
%\YEAR{0000}% e.g., 2005
%\FIRSTPAGE{000}%
%\LASTPAGE{000}%
%\SHORTYEAR{00}% shortened year (two-digit)
%\ISSUE{0000} %
%\LONGFIRSTPAGE{0001} %
%\DOI{10.1287/xxxx.0000.0000}%

% Author's names for the running heads
% Sample depending on the number of authors;
% \RUNAUTHOR{Jones}
% \RUNAUTHOR{Jones and Wilson}
% \RUNAUTHOR{Jones, Miller, and Wilson}
% \RUNAUTHOR{Jones et al.} % for four or more authors
% Enter authors following the given pattern:
\RUNAUTHOR{Conitzer et al.}

% Title or shortened title suitable for running heads. Sample:
% \RUNTITLE{Bundling Information Goods of Decreasing Value}
% Enter the (shortened) title:
\RUNTITLE{Pacing Equilibrium in First Price Auction Markets}

% Full title. Sample:
% \TITLE{Bundling Information Goods of Decreasing Value}
% Enter the full title:
\TITLE{Pacing Equilibrium in First Price Auction Markets}

% Block of authors and their affiliations starts here:
% NOTE: Authors with same affiliation, if the order of authors allows,
%   should be entered in ONE field, separated by a comma.
%   \EMAIL field can be repeated if more than one author
\ARTICLEAUTHORS{%
% Enter all authors
\AUTHOR{Vincent Conitzer}
\AFF{Econorithms LLC, and
Computer Science Department, Duke University
\EMAIL{conitzer@cs.duke.edu}} %, \URL{}
\AUTHOR{Christian Kroer\footnote{This work was done while the author was full time at Facebook Core Data Science.}}
\AFF{Facebook Core Data Science, and
IEOR Department, Columbia University
\EMAIL{christian.kroer@columbia.edu}}
\AUTHOR{Debmalya Panigrahi}
\AFF{Computer Science Department, Duke University
\EMAIL{debmalya@cs.duke.edu}}
\AUTHOR{Okke Schrijvers, Nicolas E. Stier-Moses, Eric Sodomka}
\AFF{Facebook Core Data Science
\EMAIL{okke@fb.com}
\EMAIL{nicostier@yahoo.com}
\EMAIL{sodomka@fb.com}
}
\AUTHOR{Christopher A. Wilkens\footnotemark[0]}
\AFF{Tremor Technologies
\EMAIL{c.a.wilkens@gmail.com}}
} % end of the block

\ABSTRACT{%
Mature internet advertising platforms offer high-level campaign management tools to help advertisers run their campaigns, often abstracting away the intricacies of how each ad is placed and focusing on aggregate metrics of interest to advertisers. On such platforms, advertisers often participate in auctions through a proxy bidder, so the standard incentive analyses that are common in the literature do not apply directly. 
In this paper, we take the perspective of a budget management system that surfaces aggregated incentives---instead of individual auctions---and compare first and second price auctions. We show that theory offers surprising endorsement for using a first price auction to sell individual impressions.  In particular, first price auctions guarantee uniqueness of the steady-state equilibrium of the budget management system, monotonicity, and other desirable properties, as well as efficient computation through the solution to the well-studied Eisenberg-Gale convex program. Contrary to what one can expect from first price auctions, we show that incentives issues are not a barrier that undermines the system. Using realistic instances generated from data collected at real-world auction platforms, we show that bidders have small regret with respect to their optimal ex-post strategy, and they do not have a big incentive to misreport when they can influence equilibria directly by giving inputs strategically. Finally, budget-constrained bidders, who have significant prevalence in real-world platforms, tend to have smaller regrets. Our computations indicate that bidder budgets, pacing multipliers and regrets all have a positive association in statistical terms.
% Enter your abstract
}%

% Sample
%\KEYWORDS{deterministic inventory theory; infinite linear programming duality;
%  existence of optimal policies; semi-Markov decision process; cyclic schedule}

% Fill in data. If unknown, outcomment the field
\KEYWORDS{mechanism design, auction systems, pacing equilibria}
\ORMSCLASS{Games/group decisions: Bidding/auctions; Marketing: Advertising/promotion; Programming: Nonlinear: Convex}
%CCS Concepts: Applied computing, Online auctions; Mathematics of computing, Convex optimization.
%\HISTORY{This paper was first submitted on April 12, 1922 and has been with the authors for
%83 years for 65 revisions.}

\maketitle
%%%%%%%%%%%%%%%%%%%%%%%%%%%%%%%%%%%%%%%%%%%%%%%%%%%%%%%%%%%%%%%%%%%%%%

% Samples of sectioning (and labeling) in MNSC
% NOTE: (1) \section and \subsection do NOT end with a period
%       (2) \subsubsection and lower need end punctuation
%       (3) capitalization is as shown (title style).
%
%\section{Introduction.}\label{intro} %%1.
%\subsection{Duality and the Classical EOQ Problem.}\label{class-EOQ} %% 1.1.
%\subsection{Outline.}\label{outline1} %% 1.2.
%\subsubsection{Cyclic Schedules for the General Deterministic SMDP.}
%  \label{cyclic-schedules} %% 1.2.1
%\section{Problem Description.}\label{problemdescription} %% 2.

% Text of your paper here

\section{Introduction}
\label{sec:introduction}

The early days of internet advertising were the Wild West. Savvy search advertisers at that time would carefully tune their keywords and bids, building complex tools that precipitated cyclic bidding patterns \citep{edelman2007cycling}. For example, it would pay off to differentiate between users who search for ``flower'' and ``flowers'' in search advertising. Deft display advertisers would exploit real-time bidding (RTB) capabilities---bidding programmatically on a per impression basis---to identify high-value users where competitors lacked data, reaching them at bargain basement prices. Crafty clickers would sabotage competitors' campaigns to gain advantage. Such complex strategic behavior was enabled by low-level bidding tools and driven by advertisers' focus on the value of each customer.

As platforms matured and developed higher-level functionality, programmatic advertising attracted advertisers who cared more about the population of users they reached than about the value of an individual user. Early programmatic systems like search advertising and RTB bidding, where advertisers more-or-less bid on each opportunity, catered to advertisers targeting events deep in the sales funnel where the value of each event was substantial and measurable. Even as these systems came to dominate search and remnant display advertising\footnote{Remnant inventory in display comprised ad opportunities that were not otherwise sold by a salesperson through a guaranteed contract.}
in the first decade of the 21st century, advertisers simply looking to reach an audience would often go through salespeople to buy so-called guaranteed placements. With only a rough measure of the value in expectation of reaching each individual, competitive bidding on a per-user basis made little sense for these advertisers. The gap between these worlds was bridged by platforms that abstracted away the auctions to focus on reach.
Advertisers would input basic data to set up a campaign, and platforms help them run the campaign relying on a proxy bidder.
For example, an advertiser at a social media platform today might simply say ``I want to show my ad to as many junior tech professionals in Colorado as possible for \$1,000'' but not indicate the value per individual.

Facebook's current advertiser workflow exemplifies this mindset as illustrated in Figure~\ref{fig:fb_ad}. The campaign creation flow asks an advertiser four questions: 
(1) what do you want people to do (see, watch, click, buy, etc.),
(2) who do you want to reach (targeting criteria), 
(3) where do you want to show your ad (e.g., Facebook, Instagram, Messenger), and
(4) how much money do you want to spend (budget).
Notably missing in the basic flow is the willingness to pay to reach an individual person (bid).
\footnote{Specifying the bid is available as an advanced feature offered for certain types of campaigns. We refer the reader to Facebook ad product documentation, which includes a list of campaign types in a table where it can be seen in the second column that most of the campaign types are {\em auto-bid}. See \url{https://www.facebook.com/business/m/one-sheeters/facebook-bid-strategy-guide}\,.}

\begin{figure}[t]
\centering\includegraphics[width=5in]{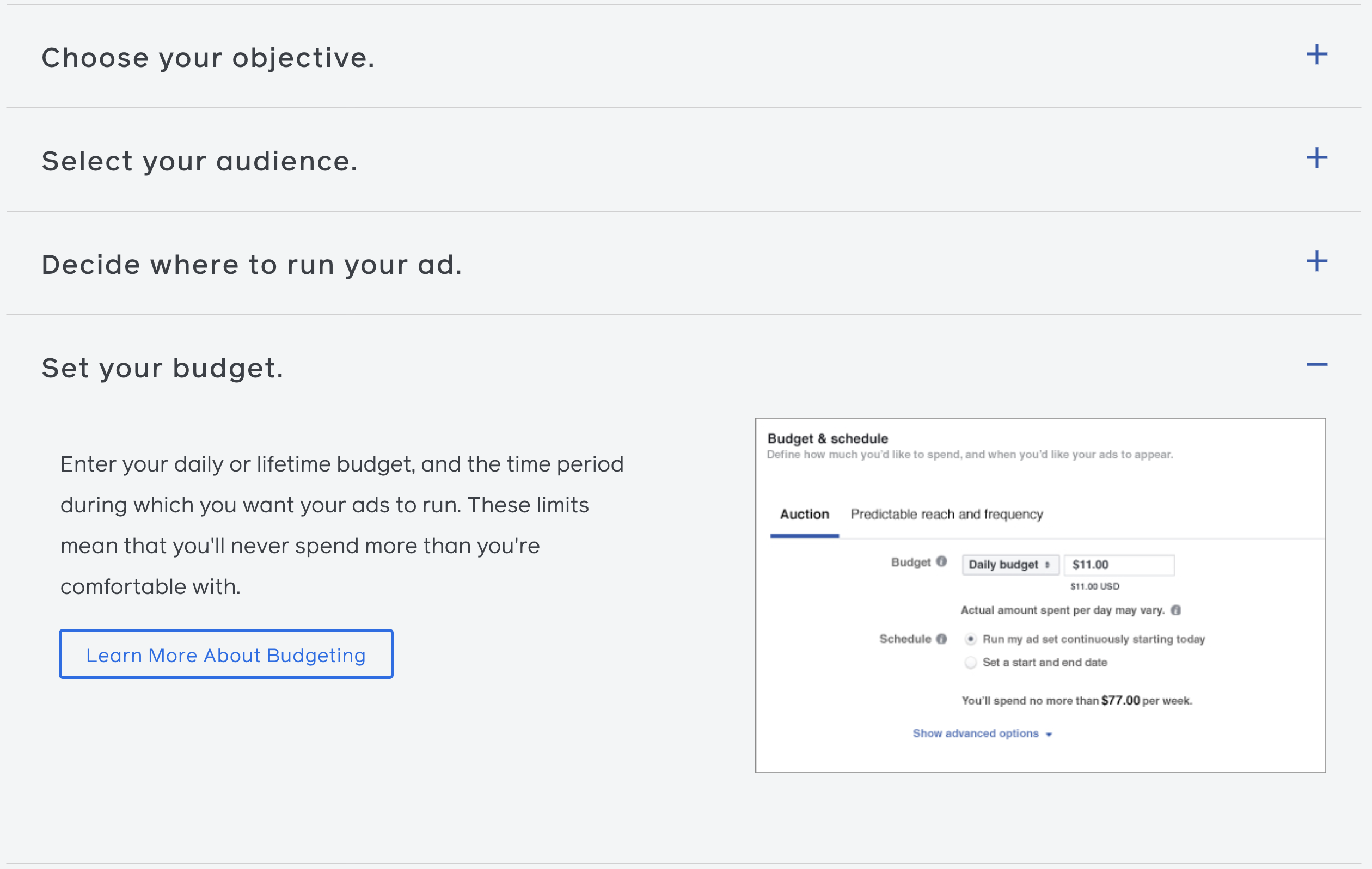}
\caption{Flow to create a Facebook ad as described at \url{https://www.facebook.com/business/ads}\,.}\label{fig:fb_ad}
\end{figure}

This shift from the primacy of the individual user to the primacy of the population being reached substantially changes the mechanism designer’s problem. In the standard theory, the bid is assumed to be the primary strategic implement and budgets, when present, are often assumed to be public knowledge to make analysis tractable. However, when focusing on the population reached, advertisers’ primary strategic levers are budgets and audience targeting, with control of the bid ceded to a proxy bidder. This turns much of the standard auction literature on its head. The advertiser’s bid, which was already the product of a constant per-event bid and a platform-generated impression-specific event probability, is now entirely managed by a proxy bidder tasked with managing the advertiser’s budget across impressions. 
This leads us to the motivation of the present paper: in a system where advertisers' primary or only strategic lever is the budget, what is the best way to allocate and price ads? 

The goal of our study is to understand the strengths and weaknesses of different auction rules in the context of proxy bidders that help run campaigns.
We study a specific question on this topic: {\em when the proxy bidder is designed to shade bids so the advertiser's budget is exhausted at the end of the budget horizon, would the system as a whole perform better if individual impressions are sold through a first price or a second price mechanism?} Since the proxy bidders shade bids on behalf of advertisers to maximize their utilities over the course of the campaign, the standard analyses of first and second price auctions no longer apply. Instead, the focus is the study of equilibria resulting from the independent and strategic behavior of all proxy bidders.
First, we show that when a first price auction is used to sell each impression, the resulting {\em first price pacing equilibrium} (\fppe) has many theoretical guarantees that would not hold in an analogous {\em second price pacing equilibrium} (\sppe), including uniqueness and monotonicity. Second, we show that \fppe are not an arbitrary construction: they correspond to market equilibria, which can be efficiently computed using the well-studied Eisenberg-Gale (\eg) convex program (unlike \sppe which are PPAD-complete to compute). Finally, we run simulations on real data collected from Facebook and Instagram advertising auctions, in order to study the impact on incentives for advertisers.
In these simulations, we find that the ex-post incentive for an individual advertiser to shade their bids via reporting a lower value per conversion or budget is very small in almost all cases.
In an individual first price auction, the winner has a significant incentive to bid strategically when the gap between the first and second bid is large, and closing the gap requires a ``thick’’ market where many bidders are present. However, in a pacing equilibrium, the proxy bidder is picking a single bid to be used in all auctions, so it is enough that the gap between the first and second bids is small in a fraction of the auctions in which the proxy bidder will participate.
We also find that the incentive for misreporting the value per conversion or budget as inputs to the mechanism is vanishingly small. We hypothesize that this conclusion has to do with the coarseness at which manipulations can be performed when buyers do not have the ability to shade bids in individual auctions.

\subsection{Pacing Equilibria in Internet Advertising}

The primary function of proxy bidders in internet advertising is to optimize the way advertisers’ budgets are spent. Most modern implementations are based on control systems that tune one or more control parameters, and apply them to the incoming opportunities. A common budget management method known as bidder selection or {\em throttling} tries to enforce budget constraints by adaptively selecting which advertisers participate in each auction.
This is commonly done by tuning a probability for each ad to randomize the participation in the auctions it targets. The platform adjusts the ad's participation probability continuously: while an ad keeps spending its budget too fast, its participation probability is continuously reduced, and vice-versa. Other platforms instead modify the bid of each ad by applying a shading factor, referred to as a {\em (multiplicative) pacing multiplier}, to manage the budget. Tuning the pacing multiplier changes the spending rate, thereby attempting to exhaust the ad's budget precisely at the end of the budget horizon. (For further details, we expand the description of pacing in the next section.) Between these two options, shading-based systems tend to increase advertisers' utilities by allocating the cheapest impressions to them, while throttling offers advertisers a more representative sample of the available opportunities. There is no one-size-fits-all solution and the design space of budget management systems is large and rife with trade-offs. 

Both methods have precedent. Throttling has been studied in both first and second price settings.
In the context of first price auctions, the celebrated work of \citet{mehta2007adwords} gives an algorithm for online allocation of impressions to bidders to maximize overall ad revenue. Their allocation algorithm, and those in the large body of follow-up work (see the survey in \citealt{mehta2013survey}), can be interpreted as running a first price auction for each impression after removing a subset of bidders from the auction based on their remaining budgets. Bidder selection has also been explored in the context of Generalized Second Price (GSP) auctions, particularly for multi-objective optimization in search engines~\citep{abrams2008comprehensive,azar2009gsp,goel2010gsp,karande2013pacing}. An important feature of bidder selection is that a bidder who is chosen to participate in the auction does so with her original bid, i.e., the platform does not modify bid values of participating bidders.

Our work focuses on bid modification, wherein the platform shades an advertiser's bids in order to preserve her budget for the future. 
This is commonly implemented by scaling an advertiser's bid by a pacing multiplier between $0$ and $1$. Many advertising platforms provide a free option to advertisers to automatically have their bids scaled, and an impressive body of work has focused on advertiser strategies for bid modification in order to maximize their ROI~\citep{rusmevichientong2006,feldman2007budget,hosanagar2008}. In addition to advertiser strategies, \cite{cray2007} also studies the limit point if all advertisers participate in a repeated position auction. In that setting, prices converge to Vickrey-Clarke-Groves (VCG) prices, but there is no heterogeneity in impression opportunities. Closest to our work is that of \cite{borgs2007dynamics} who study first price auctions with budget constraints in a perturbed model. They show that in the limit of their perturbations, prices converge to those of an (equal-rates) competitive equilibrium. The limit point they describe is an \fppe, hence this shows guaranteed existence and the relation to competitive equilibria. It leaves open the question of uniqueness, revenue, and other properties of \fppe. They also point out a convex set of constraints that can be used to compute this limit point, and thus an \fppe. 

Recent work has studied pacing equilibria in the context of second price auctions: \citet{balseiro2015repeated} investigate budget-management in auctions through a fluid mean-field approximation, which leads to elegant existence results and closed-form descriptions of equilibria in certain settings. \citet{balseiro2017budget} studies several budget smoothing methods including multiplicative pacing in a stochastic context where ties do not occur; \citet{balseiro2017dynamic} studies how an individual bidder might adapt their pacing multiplier over time in a stochastic continuous setting; \citet{conitzer2021pacing} studies second price pacing when bidders, goods, budgets, and valuations are known, and proves that equilibria exist under fractional allocations.
In this paper, we study pacing in the context of first price auctions. While
second price auctions displaced first price auctions in Internet advertising
because of their many desirable robustness guarantees, particularly related to
stability~\citep{edelman2007cycling} and strategyproofness, first price auctions
are regaining popularity because they are simple to operate and offer a higher degree of transparency and ease of use than second price auctions~\citep{digiday2018firstprice,adexchanger2017firstprice}. 

While second price auctions displaced first price auctions in Internet advertising
because of their many desirable robustness guarantees, particularly related to
stability~\citep{edelman2007cycling} and strategyproofness, first price auctions
are regaining popularity. In 2017, exchanges such as AppNexus, Index Exchange and OpenX began moving their inventory to from second- to first-price auctions~\citep{adexchanger2017firstprice}. Google transitioned all display advertising publisher inventory to first price auctions in 2019, citing a desire for greater transparency and simplicity in the ad market; they reported a neutral to positive impact on revenue for publishers and a more competitive market~\citep{google2019firstprice}. Mopub (Twitter's ad platform) transitioned all publisher inventory to a first price auction from June to October of 2020, also citing a desire for increased transparency, reduced complexity, and increased fairness~\citep{mopub2020firstprice}. 
To explain this recent shift to first price auctions, \citet{paes2020competitive} analyzed the equilibria of a game played between auction markets, where each market can choose its auction rules. We focus on a game between proxy bidders, as opposed to a game between auction markets. 
We remark that this shift to first price auctions has happened in display advertising, which typically uses real-time bidding, as opposed to a centralized pacing system (although intermediaries likely have their own budget pacing systems for their advertisers). By contrast, in this work, we explore the properties that arise when first price auctions are combined with centralized pacing systems.

In the context of 
position auctions, \citet{dutting2018expressiveness} further showed that first price does not
suffer from the same equilibrium selection problems as GSP and VCG. Moreover, as we will see later, first price auctions provide a clean characterization of equilibrium solutions in the context of pacing, unlike in the case of second price auctions~\citep{conitzer2021pacing}. Indeed, as stated earlier, one of the most developed line of research in ad auctions relates to bidder selection in first price auctions~\citep{mehta2013survey}. Our work complements this line of work by focusing on bid modification instead of bidder selection as the preferred method for budget management.\footnote{In \citet{mehta2007adwords} and most other related papers, bidder selection is performed by using a ``scaling parameter'' for bids based on remaining budgets, but once the winner of an auction is selected, she pays her entire (unscaled) bid for the impression. In contrast, in our work, the individual bids are scaled using the respective pacing multipliers, and the winner pays the scaled bid instead of her original bid for the impression. Another difference is that in the budgeted allocation literature the scaling multiplier changes from one impression to the next since it is only a tool for winner selection, whereas we choose a single pacing multiplier for a bidder that is used to scale her bids for all the impressions in the problem instance.}

\subsection{Contributions}

We introduce the first-price pacing equilibrium (\fppe) problem, where each buyer participates in many first-price auctions, with budget constraints that span the auctions.
The platform seeks to find a vector of \emph{pacing multipliers}, one for each buyer, and buyers bid their value times their pacing multiplier. At a high level, the platform seeks to find multipliers such that buyers are only just paced enough that they do not overspend their budget.
We allow goods to be allocated fractionally if there are ties; this is well-motivated in the ad-auction setting, where a good may be interpreted as representing thousands of impressions.
We show that \fppe can be understood as a maximal point within the larger class of \emph{budget-feasible pacing multipliers} (\bfpm), which is the set of vectors of pacing multipliers that satisfy all budgets, but potentially pace beyond what is necessary in order to satisfy budgets. 

{\em Existence and Uniqueness.}
From \cite{borgs2007dynamics} it follows that \fppe are guaranteed to exist. Our first result shows that not only does an \fppe always exist, but also that it is {\em essentially unique}.\footnote{More precisely, pacing multipliers are unique but there may be different equivalent allocations due to tie breaking. Tie breaking does not impact revenue, social welfare or individual utilities.} In fact, we show that the \fppe exactly coincides with the (unique) {\em maximal} set of pacing multipliers. This also leads to the observation that the \fppe is revenue-maximizing among all \bfpm. Our structural characterization of \fppe in terms of \bfpm is a powerful tool for reasoning about properties of \fppe and is the basis of several of our results and additional ones by \citet{peysakhovich2019fair} in follow-up work on competitive equilibria from equal incomes. In contrast, \sppe are guaranteed to exist but are not unique in general~\citep{conitzer2021pacing}. 
Furthermore, we also show that an \fppe yields a {\em competitive equilibrium}, i.e., every bidder is exactly allocated her {\em demand set}.
\sppe also corresponds to a competitive equilibrium, but only for buyers that are \emph{supply-aware} when computing their demands, whereas \fppe is a competitive equilibrium both for supply-aware and supply-unaware buyers.

\noindent{\em Computability.}
We show an interesting connection between \fppe and a generalization of the classical \eg convex program for quasi-linear utilities~\citep{cole2017convex}. Using Fenchel duality, we use the \eg program to infer that the unique maximal \bfpm, which is also revenue maximizing, yields the unique \fppe. Moreover, this connection with the \eg program immediately yields a (weakly) polynomial algorithm to compute the \fppe. This also contrasts with \sppe, for which maximizing revenue (or other objectives like social welfare) is known to be NP-hard~\citep{conitzer2021pacing}, and very recent results show that the general problem is PPAD-complete~\citep{chen2021complexity}.
While \citet{borgs2007dynamics} already showed a convex program, this connection to \eg is important because it lends itself to scalable first-order methods~\citep{nesterov2018computation,kroer2021computing}.

\noindent{\em Monotonicity and Sensitivity:} We show that \fppe satisfies many notions of monotonicity, both in terms of revenue and social welfare. Adding an additional good weakly increases both revenue and social welfare, and adding a bidder or increasing a bidder's budget weakly increases revenue.
Again, this stands in sharp contrast with \sppe, which generally do not satisfy such monotonicity conditions.

In fact, in an \fppe not only is the revenue monotonically non-decreasing with budgets, but it also changes smoothly in the sense that increasing the budget by a constant $\Delta$ can only increase the revenue by $\Delta$. This does not hold for \sppe where the revenue can increase by a substantial amount even for a small change in the budgets.

\noindent{\em Shill-proofness}: Using monotonicity, we also establish that there is no incentive for the platform to enter fake bids to an \fppe mechanism.

\noindent{\em Simulations}: To test the properties mentioned earlier and compare equilibria computationally, we rely on the \eg convex program formulation to perform simulations on
realistic instances constructed from elements of real-world data. We find that the ex-ante and ex-post regret associated with \fppe is very small when bidders can only change the scale
of their utilities, as is usually the case in ad markets when bidding on clicks or conversions.
Finally, we compare the revenue and social welfare of \fppe and \sppe across instances. 
We find that \fppe always provide higher revenue, while welfare splits about evenly on which solution concept performs better.
The contributions and the comparison between \fppe and \sppe are summarized in Table~\ref{tab:comparison}.

In Section~\ref{sec:assumptions discussion} we discuss the assumptions we have made, and put them in perspective versus more general or alternative assumptions.

\begin{table}
\begin{center}
\begin{tabular}{|c|c|c|}
\hline
  & \sppe & \fppe \\ 
  \hline
  \hline
 Exists? & Yes & Yes  \\  
 \hline
 Buyers best responding? & Yes & No \\
 \hline
 Is market eq.? & Yes & Yes (even for supply-unaware buyers)\\
 \hline
 Is unique? & No & Yes, in utilities, multipliers, and prices\\
 \hline
 Is efficiently computable? & PPAD-complete & Convex program \\
 \hline
 Is welfare monotone? & No & Yes, in goods \\
 \hline
 Is revenue monotone? & No & Yes, in goods/bidders/budgets   \\
 \hline
 Is shill proof? & No & Yes \\
 \hline
 \hline
 Simulated regret/IC & No regret, very small IC & small regret, small IC \\
 \hline
 Simulated revenue & \multicolumn{2}{|c|}{\sppe $\leq$ \fppe} \\
 \hline
Simulated welfare & \multicolumn{2}{|c|}{Ambiguous}  \\
 \hline
\end{tabular}
\caption{A comparison of \fppe and \sppe. 
In the second row, we note that under \sppe, each bidder's multiplier is a mutual best response given other-agent bids are fixed---and would be even if the set of deviating strategies were expanded to allow the bidder to submit arbitrary per-auction bids (for a proof, see Proposition 1 in \citealt{conitzer2021pacing}). Under \fppe, a buyer may increase its utility by shading its multiplier (or shading individual bids, if it could submit arbitrary bids for each auction).
}\label{tab:comparison}
\end{center}
\end{table}

\section{First price Pacing Equilibria}\label{s:model}

We consider a single-slot auction market in which a set of bidders
$N=\{1,\ldots,n\}$ target a set of (divisible) goods $M=\{1,\ldots,m\}$. Each bidder $i$ has
a valuation $v_{ij}\ge 0$ for each good $j$, and a budget $B_i>0$ to be spent
across all goods. We assume that the goods are sold through independent (single
slot) first price auctions, and the valuations and budgets are assumed to be
known to the auctioneer. When multiple bids are tied for an item, we assume that the item 
can be fractionally allocated, and we allow the auctioneer to choose the fractional allocation 
(although our results on equivalence to competitive equilibrium show that the  fractional choices made by the auctioneer are optimal for the bidders as well).

The goal is to compute a vector of \emph{pacing multipliers} $\alpha$ that smooths out
the spending of each bidder so that they stay within budget. A pacing
multiplier for a bidder $i$ is a real number $\alpha_i\in [0,1]$ that is used to
scale down the bids across all auctions: for any $i,j$, bidder $i$ participates
in the auction for good $j$ with a bid equal to $\alpha_i v_{ij}$; we refer to
these bids as {\em multiplicatively paced}. We define feasibility as follows:

\begin{definition}\label{def:bfpm}
	A set of {\em budget-feasible first price pacing multipliers} (\bfpm) is a tuple $(\alpha, x)$, of pacing multipliers $\alpha_i \in [0,1]$ for each bidder $i \in N$, and fractional allocations $x_{ij}\in [0,1]$ for each bidder $i \in N$ and good $j \in M$, that satisfies the following properties:
	\begin{itemize}
		\item (Prices) Unit price $p_j = \max_{i \in N} \alpha_i v_{ij}$ for each good $j\in M$.
		\item (Goods go to highest bidders) If $x_{ij}>0$, then $\alpha_i v_{ij} = \max_{i'\in N} \alpha_{i'} v_{i'j}$ for each bidder $i \in N$ and good $j \in M$.
		\item (Budget-feasible) $\sum_{j\in M} x_{ij} p_j \le B_i$ for each bidder $i\in N$.
		\item (Demanded goods sold completely) If $p_j > 0$, then $\sum_{i\in N} x_{ij} = 1$ for each good $j \in M$.
		\item (No overselling) $\sum_{i\in N} x_{ij} \leq 1$ for each good $j\in M$.
	\end{itemize}
\end{definition}

Within the feasible space, we are particularly interested in outcomes that are stable in some sense. Specifically, we are interested in solutions where no bidder is unnecessarily paced, which we call first price pacing equilibria (\fppe). This captures that the platform does not want to pace an advertiser who is not spending the full budget.

\begin{definition}\label{def:fppe}
  A {\em first price pacing equilibrium} (\fppe) is a \bfpm tuple $(\alpha, x)$, of pacing multipliers $\alpha_i$ for each bidder~$i$, and fractional allocation $x_{ij}$ for bidder $i$ and good $j$ with this additional property:
  \begin{itemize}
  	 \item (No unnecessary pacing) If $\sum_{j\in M} x_{ij}p_j < B_i$, then $\alpha_i = 1$ for each bidder $i\in N$.
   \end{itemize}
\end{definition}

\subsection{Existence, uniqueness, and structure of \fppe}
\cite{borgs2007dynamics} show that \fppe are guaranteed to exist. We show that not only are \fppe guaranteed
to exist, but that they are also unique and maximize the seller's revenue over all \bfpm. In the following, inequalities should be interpreted \emph{component-wise} when applied to vectors or sets.

\begin{lemma}
	There exists a Pareto-dominant \bfpm $\fppet$ (i.e., $\alpha \geq \alpha'$ for any \bfpm $\newfppet$\,).
\end{lemma}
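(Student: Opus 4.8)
The plan is to show that the collection of \bfpm's is closed under taking the component-wise maximum of the pacing-multiplier vectors, and then to extract a greatest element by a compactness argument. Since in any \bfpm the prices are forced by the multipliers via $p_j = \max_i \alpha_i v_{ij}$, it suffices to prove the following ``join closure'': given any two \bfpm's $(\alpha, x)$ and $(\alpha', x')$, the vector $\tilde\alpha$ defined by $\tilde\alpha_i = \max(\alpha_i, \alpha'_i)$ extends to a \bfpm $(\tilde\alpha, \tilde x)$ for a suitable allocation $\tilde x$.

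The first small step is to check that the induced prices also take the maximum, $\tilde p_j = \max(p_j, p'_j)$: ``$\le$'' is immediate, and ``$\ge$'' holds because whichever bidder attains $p_j$ (resp.\ $p'_j$) has her bid only weakly raised in $\tilde\alpha$. To build $\tilde x$, partition the bidders into $A = \{i : \alpha_i \ge \alpha'_i\}$ and its complement, and for each good $j$ set $\tilde x_{\cdot j} := x_{\cdot j}$ if $p_j \ge p'_j$ and $\tilde x_{\cdot j} := x'_{\cdot j}$ otherwise. The conditions ``goods go to highest bidders'', ``no overselling'', and ``demanded goods sold completely'' then follow good-by-good from the corresponding property of whichever \bfpm was copied, using $\tilde p_j = \max(p_j, p'_j)$ and noting that a copied winner's bid is either unchanged or raised to exactly the new price $\tilde p_j$.

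The budget constraint is the crux, and it rests on a monotonicity observation. Suppose $i \in A$, so $\tilde\alpha_i = \alpha_i$. On any good $j$ with $p'_j > p_j$ --- where $\tilde x$ copies the $\alpha'$-allocation --- one must have $x'_{ij} = 0$: otherwise $x'_{ij} > 0$ would give $\alpha_i v_{ij} \ge \alpha'_i v_{ij} = p'_j > p_j$, contradicting $\alpha_i v_{ij} \le p_j$. On every other good $j$ we have $\tilde p_j = p_j$ and $\tilde x_{ij} = x_{ij}$. Hence bidder $i$'s total spend under $\tilde x$ is $\sum_{j : p_j \ge p'_j} x_{ij} p_j \le \sum_j x_{ij} p_j \le B_i$, and the symmetric argument handles $i \notin A$. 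The pitfall to avoid here is naively ``adding'' the two allocations, which would only yield a $2B_i$ bound; the point of the partition into $A$ and its complement, together with price-monotonicity, is precisely that it charges each bidder's spend to a single one of the two original \bfpm's.

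Finally, let $\mathcal{A} \subseteq [0,1]^n$ be the set of pacing-multiplier vectors of \bfpm's. It is nonempty ($\alpha = \mathbf 0$ works, or invoke existence from \citet{borgs2007dynamics}) and, by the above, closed under component-wise maximum. I would then argue $\mathcal{A}$ is compact: given \bfpm's $(\alpha^{(k)}, x^{(k)})$ with $\alpha^{(k)} \to \alpha$, pass to a subsequence along which $x^{(k)} \to x$; every defining relation survives the limit, the only care needed being that $x_{ij} > 0$ (resp.\ $p_j > 0$) in the limit forces the same for all large $k$, so $(\alpha, x)$ is a \bfpm. Now set $\alpha^\star_i := \sup_{\alpha \in \mathcal{A}} \alpha_i$ for each $i$; choosing $\alpha^{(i,k)} \in \mathcal{A}$ whose $i$-th coordinate tends to $\alpha^\star_i$, the finite joins $\bigvee_i \alpha^{(i,k)}$ lie in $\mathcal{A}$ (by join closure) and converge to $\alpha^\star$, so $\alpha^\star \in \mathcal{A}$ by closedness. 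By construction every \bfpm's multiplier vector is $\le \alpha^\star$, so a \bfpm with multiplier vector $\alpha^\star$ is the desired Pareto-dominant one.
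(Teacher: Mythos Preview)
Your proof is correct and follows essentially the same approach as the paper's: establish join closure of \bfpm multiplier vectors by copying, for each good, the allocation from whichever \bfpm has the higher price, and then extract the component-wise supremum via compactness. Your budget-feasibility step is in fact a bit cleaner than the paper's case analysis; the only small imprecision is that the ``symmetric argument'' for $i\notin A$ is not literally symmetric at ties $p_j=p'_j$, but in that edge case one checks $v_{ij}=0$ (hence zero spend), so nothing breaks.
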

\proof{Proof.}
	First, we will show that given any two \bfpm $(\alpha^{(1)}, x^{(1)})$ and $(\alpha^{(2)}, x^{(2)})$, there exists a \bfpm with pacing multipliers  $\alpha^* = \max(\alpha^{(1)}, \alpha^{(2)})$ that are the component-wise maximum of $\alpha^{(1)}$ and $\alpha^{(2)}$. 
    Note that, on each item, the resulting paced bid for a bidder is the higher of her two paced bids on this item in the original two \bfpm{s}.
    Let the corresponding allocation $x^*$ be: for each good $j$, identify which of the two \bfpm{s} had the highest paced bid for $j$, breaking ties towards the first \bfpm; then, allocate the good to the same bidder as in that \bfpm (if the good was split between multiple bidders, allocate it in the same proportions), at the same price.  Note that these prices coincide with the winning bidders' paced bids in the new solution.  Thus, we charge the correct prices, goods go to the highest bidders, demanded goods are sold completely, and there is no overselling.
     
All that remains to be verified is that the new solution is budget-feasible. 
Consider bidder $i$ in the bidder-wise-max $\alpha^*$.  In either $\alpha^{(1)}$ or $\alpha^{(2)}$, bidder $i$ had the exact same multiplier---say it was in $\alpha^{(b)}$ for $b \in \{1,2\}$ (breaking ties towards $1$).
To show budget feasibility, we now prove the following: if a bidder $i$ wins fraction $x$ of an item $j$ in $(\alpha^*, x^*)$, 
then she must also win at least fraction $x$ of item $j$ in $(\alpha^{(b)}, x^{(b)})$. 
The key observation is that for any bidder and item, her
paced bid in $(\alpha^*, x^*)$ is at least as much as her paced bid in either $(\alpha^{(1)}, x^{(1)})$ 
or $(\alpha^{(2)}, x^{(2)})$---we call this the {\em monotone} property.
Now, consider two situations. First, 
if $i$ did not have the highest bid on an item $j$ in $(\alpha^{(b)}, x^{(b)})$, then the bidder who had a higher bid in 
$(\alpha^{(b)}, x^{(b)})$
continues to have a higher bid in $(\alpha^*, x^*)$ by the monotone property. Hence, $i$ does not win this item under 
$(\alpha^*, x^*)$.
Second, suppose $i$ did have the highest bid on an item $j$ in $(\alpha^{(b)}, x^{(b)})$, and won a fraction $x$ of the item. 
In this case, every other bidder who matched 
$i$'s bid in $(\alpha^{(b)}, x^{(b)})$ continues to have at least as high a bid in $(\alpha^*, x^*)$ by the monotone property. Therefore,
$i$ shares item $j$ in $(\alpha^*, x^*)$ with at least the set of bidders she shared it with in $(\alpha^{(b)}, x^{(b)})$. There are three further
subcases. First, if the highest paced bid for $j$ is unique, then the highest must be achieved by $(\alpha^{(b)}, x^{(b)})$; in this case, $i$ gets 
exactly fraction $x$ of item $j$. Second, if the highest paced bid for $j$ is tied and $b=1$, then item $j$ is divided 
exactly as in $(\alpha^{(1)}, x^{(1)})$. Again, $i$ gets exactly fraction $x$ of item $j$. Third, if the highest paced bid for $j$ is tied
but $b=2$, then we claim that $i$ does not get item $j$ in $(\alpha^*, x^*)$. This is because the allocation of $j$ under $(\alpha^*, x^*)$
is identical to that under $(\alpha^{(1)}, x^{(1)})$, but since $b =2$, we have $\alpha^{(1)}_i < \alpha^{(2)}_i$ which implies that 
$i$ does not have the highest bid for $j$ in $(\alpha^{(1)}, x^{(1)})$. In summary, in any of these three cases, the fraction of 
item $j$ that bidder $i$ wins under $(\alpha^*, x^*)$ is at most the fraction she wins under $(\alpha^{(b)}, x^{(b)})$.
As a consequence, $i$ spends no more under the new \bfpm than under \bfpm $b$, which is budget-feasible.  Hence, the new \bfpm is budget-feasible.

We now complete the proof. Let $\alpha_i^* = \sup\{\alpha_i \  |  \ \alpha \text{ is part of a \bfpm}\}$.  We will show that $\alpha^*$ is part of a \bfpm, proving the result.  For any $\epsilon>0$ and any $i$, there exists a \bfpm where $\alpha_i> \alpha_i^* - \epsilon$.  By repeatedly taking the component-wise maximum for different pairs of $i$, we conclude there is a single \bfpm $(\alpha^\epsilon, x^\epsilon)$ such that for every $i$, $\alpha_i^\epsilon > \alpha_i^* - \epsilon$.  Because the space of combinations of multipliers and allocations is compact, the sequence $(\alpha^\epsilon, x^\epsilon)$ (as $\epsilon \rightarrow 0$) has a limit point $(\alpha^*, x^*)$.  This limit point satisfies all the properties of a \bfpm by continuity.
\eop\endproof

In addition to there being a maximal set of pacing multipliers over all \bfpm, this maximal \bfpm is actually an \fppe.

\begin{lemma}[Guaranteed existence of \fppe]\label{lem:existence}
The Pareto-dominant \bfpm $\fppet$ with maximal pacing multipliers $\alpha$ has no unnecessarily paced bidders, so it forms an \fppe.
\end{lemma}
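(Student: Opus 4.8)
The plan is a proof by contradiction. Assume the Pareto\nobreakdash-dominant \bfpm $\fppet$ (whose multipliers are maximal by the previous lemma) has an \emph{unnecessarily paced} bidder: some $i$ with $\alpha_i < 1$ and $\sum_j x_{ij}p_j < B_i$. I will produce a \bfpm whose multiplier for $i$ (or for a whole set of bidders containing $i$) is strictly larger, contradicting Pareto\nobreakdash-dominance. (Since ``Pareto\nobreakdash-dominant'' pins down only the multipliers, part of the work is choosing the right allocation at the maximal multipliers; the statement should be read as asserting such a choice exists.)

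First I would try the naive move: raise $\alpha_i$ to $\alpha_i+\eps$ for small $\eps>0$ (capped at $1$), keep all other multipliers fixed, and repair the allocation. If $\eps$ is smaller than every positive gap $\frac{p_j-\alpha_i v_{ij}}{v_{ij}}$ over goods $j$ on which $i$ is strictly outbid, then $i$ stays a losing bidder exactly where she was losing, so prices and the allocation on those goods are untouched and nothing changes for any other bidder except on goods $i$ already touches. On a good where $i$ was the unique highest bidder she is still the unique highest bidder and merely pays a slightly higher price; on a good where $i$ was tied at the top she becomes the unique highest bidder. By the monotone property used in the previous lemma, every bidder other than $i$ weakly loses allocation, hence weakly decreases spending, so their budget constraints survive; the only thing to check is that $i$ stays within budget.

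This is the main obstacle. If $i$ tied for a good $j$ but won only a fraction $x_{ij}<1$, then after the bump she must take the whole unit of $j$ (a demanded good is sold completely and she is now its unique top bidder), so her spending on $j$ jumps from $x_{ij}p_j$ to roughly $p_j$ — a discontinuous increase that may overshoot $B_i$. The remedy is to first \emph{reallocate at fixed multipliers}: on any good $j$ that $i$ ties for but does not fully win, shift allocation from the other top bidders of $j$ to $i$. This leaves prices and top\nobreakdash-bidder sets unchanged (still a \bfpm), weakly decreases every other bidder's spending, and strictly increases $i$'s spending. Continue until either (a) $i$ spends exactly $B_i$ — then this reallocated tuple still has the maximal multipliers and $i$ is no longer unnecessarily paced, a contradiction — or (b) $i$ already wins the full unit of every good she is a top bidder for, yet still $\sum_j x_{ij}p_j<B_i$. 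In case (b) the bump from the previous paragraph is now harmless: on every good $i$ touches she wins the whole unit both before and after, no other bidder is affected, and $i$'s spending rises \emph{continuously} from a value below $B_i$, so for small $\eps$ it stays below $B_i$, yielding a \bfpm with a strictly larger $i$-multiplier — contradiction.

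The remaining gap is that the reallocation in (a) can take goods away from \emph{other} paced bidders who were spending their full budgets, creating new slack paced bidders, so the argument must be run globally. I would organize this around the partition of bidders into tie\nobreakdash-connected components (two bidders linked if both are top bidders for a common good): the goods a component $C$ of paced bidders can win are sold entirely within $C$, so if any member of $C$ is slack then $C$ collectively spends strictly less than $\sum_{c\in C}B_c$ even though it already wins everything it can; scaling every multiplier in $C$ by a common factor $1+\eta$ preserves all ties inside $C$, raises only the prices of $C$'s goods and only by that factor, and for small $\eta$ keeps total forced spending below $\sum_{c\in C}B_c$, so a transportation/flow feasibility argument lets those goods be redistributed within $C$ under the larger multipliers — again contradicting Pareto\nobreakdash-dominance. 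Equivalently, one can phrase the existence of the right allocation as feasibility of a flow with lower bounds (every demanded good fully sold, each paced bidder's sink edge carrying exactly $B_i$), whose only nontrivial Hall condition — that every set $S$ of paced bidders has $\sum_{i\in S}B_i$ at most the total price of the goods they can win — is exactly what the component/bumping argument establishes. I expect the bulk of the difficulty, beyond the discontinuity just discussed, to be this global feasibility bookkeeping.
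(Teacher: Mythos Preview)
Your approach is essentially the same as the paper's: both arguments pass to the tie-connected component (the paper calls it the transitive closure $T$ of $N(i)$), redistribute tied goods within the component so that no member is budget-constrained, and then uniformly bump all multipliers in the component by a small amount to contradict maximality. The main difference is that you are more careful about the redistribution step---the paper simply asserts that one can propagate slack along the tie graph, whereas you correctly flag this as the nontrivial part and propose a flow/Hall-type feasibility argument; this extra rigor is welcome but does not change the underlying strategy.
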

\proof{Proof.}
Suppose bidder $i$ is unnecessarily paced under the maximal \bfpm $\fppet$. If $i$ is not tied for any good, then we can increase her pacing multiplier by a sufficiently small $\epsilon > 0$ such that $i$ is still not tied for any item and is within budget, contradicting the fact that $\alpha$ was maximal. So bidder $i$ must be tied for at least one good. 
Define $N(i)$ as all the bidders that are tied for any good with bidder $i$, i.e., $N(i) = \{\text{bidder }k\ :\ \exists {\text{ good } j}\text{ with }\alpha_i v_{ij} = \alpha_{k} v_{kj}\}$. Now take the transitive closure $T$ of this set $N(i)$, i.e., include all bidders who are tied for an item with a bidder in $N(i)$, etc. Next, redistribute the items that are tied such that none of the bidders in $T$ is budget constrained, while still allocating items completely.
This is always possible since we can slightly increase the share of $i$ for all items she is tied on, while simultaneously reducing the share of all other bidders tied with her. In the next step, we can slightly increase the share of all these bidder for other items they are tied on while reducing the share of the new bidders they are tied with for those items, and so on.
Next, increase the pacing multipliers of all bidders in $T$ by a small enough $\delta > 0$ so that all bidders in $T$ are still not budget-constrained, and no new ties are created; call this set of pacing multipliers $\alpha'$ and the redistribution of goods $x'$. This contradicts that $\alpha$ was the maximal \bfpm to begin with, as $(\alpha', x')$ is a \bfpm yet it has pacing multipliers that are higher than in $(\alpha, x)$.
\eop\endproof

The converse of Lemma~\ref{lem:existence} is also true: any \bfpm for which at least one bidder has a pacing multiplier $\alpha_i$ lower than the maximal \bfpm must have an unnecessarily paced bidder.

\begin{lemma}
Consider two \bfpm $(\alpha^{(1)}, x^{(1)})$ and $(\alpha^{(2)}, x^{(2)})$, where $\alpha^{(1)} \geq \alpha^{(2)}$ and $\alpha^{(1)}_i > \alpha^{(2)}_i$ for some bidder $i$. Then, $(\alpha^{(2)}, x^{(2)})$ must have an unnecessarily paced bidder.
\end{lemma}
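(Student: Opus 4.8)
The plan is to argue by contradiction: assume $(\alpha^{(2)}, x^{(2)})$ has no unnecessarily paced bidder and derive a violation of the budget feasibility of $(\alpha^{(1)}, x^{(1)})$. Let $S = \{k \in N : \alpha^{(1)}_k > \alpha^{(2)}_k\}$; by hypothesis $i \in S$, so $S \neq \emptyset$. For every $k \in S$ we have $\alpha^{(2)}_k < \alpha^{(1)}_k \leq 1$, so the no-unnecessary-pacing property of $(\alpha^{(2)},x^{(2)})$ forces $k$ to spend its entire budget: $\sum_{j \in M} x^{(2)}_{kj} p^{(2)}_j = B_k$, where $p^{(2)}$ and $p^{(1)}$ denote the price vectors of the two \bfpm. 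The strategy is to follow the goods on which the $S$-bidders actually spend this money and show that, on each such good, raising the multipliers of $S$ both strictly raises the price and keeps the good within $S$, so the $S$-bidders collectively cannot stay within budget under $(\alpha^{(1)},x^{(1)})$.

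Concretely I would proceed as follows. (i) Let $G$ be the set of goods $j$ with $p^{(2)}_j > 0$ that are won with positive fraction by some bidder of $S$ under $(\alpha^{(2)}, x^{(2)})$. Summing the full-budget identities over $k \in S$ and exchanging the order of summation gives $\sum_{k\in S} B_k = \sum_{j \in G} p^{(2)}_j \big(\sum_{k\in S} x^{(2)}_{kj}\big)$; since budgets are strictly positive and $S$ is nonempty, $G \neq \emptyset$. (ii) For $j \in G$, take an $S$-bidder $k$ that wins part of $j$ in $(\alpha^{(2)}, x^{(2)})$; then $p^{(2)}_j = \alpha^{(2)}_k v_{kj}$ with $v_{kj} > 0$, so $p^{(1)}_j \geq \alpha^{(1)}_k v_{kj} > \alpha^{(2)}_k v_{kj} = p^{(2)}_j$: prices strictly increase on $G$. (iii) Any bidder $\ell$ with $x^{(1)}_{\ell j} > 0$ for some $j \in G$ must belong to $S$: otherwise $\alpha^{(1)}_\ell \leq \alpha^{(2)}_\ell$ by definition of $S$, hence $p^{(1)}_j = \alpha^{(1)}_\ell v_{\ell j} \leq \alpha^{(2)}_\ell v_{\ell j} \leq p^{(2)}_j$, contradicting (ii); moreover $p^{(1)}_j > 0$ forces good $j$ to be fully sold, so $\sum_{k\in S} x^{(1)}_{kj} = 1$. (iv) Putting these together, the total spend of $S$ under $(\alpha^{(1)}, x^{(1)})$ is at least $\sum_{j\in G} p^{(1)}_j$, which by (ii)--(iii) and $G \neq \emptyset$ strictly exceeds $\sum_{j\in G} p^{(2)}_j \big(\sum_{k\in S} x^{(2)}_{kj}\big) = \sum_{k\in S} B_k$, contradicting the budget feasibility of $(\alpha^{(1)}, x^{(1)})$.

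The only step requiring real care is (iii): it is the formal statement that the extra budget room created by raising the multipliers of $S$ cannot ``leak'' to bidders outside $S$. The mechanism is exactly the monotone-bid observation already used in the proof that a Pareto-dominant \bfpm exists --- on any good where an $S$-bidder was already (tied for) the top bid, that bidder strictly outbids everyone whose multiplier was not raised, so she retains the good and, since the good must still be fully sold, the whole good stays within $S$. A minor bookkeeping point is the treatment of goods with zero price (equivalently, goods won only through zero valuations): these contribute nothing to anyone's spend, so excluding them when defining $G$ at the outset costs nothing. Note that the argument uses only that $(\alpha^{(1)}, x^{(1)})$ is a \bfpm and that $\alpha^{(1)}_i > \alpha^{(2)}_i$ for some $i$; the full componentwise inequality $\alpha^{(1)} \geq \alpha^{(2)}$ is convenient but not essential.
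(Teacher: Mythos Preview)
Your proof is correct and follows essentially the same approach as the paper: both track the set of bidders whose multipliers are strictly higher in $\alpha^{(1)}$ (your $S$, the paper's $I$), observe that the goods they win under $\alpha^{(2)}$ must go entirely to $S$ at strictly higher prices under $\alpha^{(1)}$, and use budget feasibility of $(\alpha^{(1)},x^{(1)})$ to bound $S$'s collective spend. The only difference is framing---the paper argues directly that $I$ spends strictly less under $\alpha^{(2)}$, whereas you take the contrapositive and derive a budget violation under $\alpha^{(1)}$---and your closing remark that the componentwise inequality $\alpha^{(1)}\geq\alpha^{(2)}$ is not actually needed is a correct observation that the paper does not make explicit.
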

\proof{Proof.}
Consider the set $I$ of all bidders whose pacing multipliers are strictly lower in $\alpha^{(2)}$ than in $\alpha^{(1)}$ (by definition there must be at least one bidder in this set). Collectively, $I$ wins fewer (or the same) items under $\alpha^{(2)}$ than under $\alpha^{(1)}$ (the bids from outside $I$ have stayed the same, those from $I$ have gone down), and at lower prices.  Since $(\alpha^{(1)}, x^{(1)})$ was budget feasible, $I$ was not breaking its collective budget before. Since $I$ is spending strictly less, at least some bidder must not spend their entire budget and thus is unnecessarily paced.
\eop\endproof

This implies that the pacing multipliers of \fppe are uniquely determined.

\begin{corollary}[Essential Uniqueness]\label{cor:uniqueness}
	The pacing multipliers of any \fppe are uniquely determined and correspond to the pacing multipliers of the maximal \bfpm.
\end{corollary}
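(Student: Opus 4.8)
The plan is to assemble the three preceding lemmas. Let $(\alpha^\star, x^\star)$ denote the Pareto-dominant \bfpm guaranteed by the first lemma, so that $\alpha^\star \geq \alpha'$ component-wise for every \bfpm $\newfppet$. By Lemma~\ref{lem:existence}, this maximal \bfpm is itself an \fppe, so at least one \fppe with multiplier vector $\alpha^\star$ exists. It remains to show every \fppe has exactly this multiplier vector.

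So let $(\alpha, x)$ be an arbitrary \fppe. Since every \fppe is in particular a \bfpm (Definition~\ref{def:fppe} adds the no-unnecessary-pacing condition on top of Definition~\ref{def:bfpm}), Pareto-dominance gives $\alpha^\star \geq \alpha$ component-wise. Suppose for contradiction that $\alpha^\star \neq \alpha$; then $\alpha^\star_i > \alpha_i$ for some bidder $i$. Now apply the third lemma with $(\alpha^{(1)}, x^{(1)}) = (\alpha^\star, x^\star)$ and $(\alpha^{(2)}, x^{(2)}) = (\alpha, x)$: its hypotheses ($\alpha^{(1)} \geq \alpha^{(2)}$ and strict inequality for some bidder) are exactly what we have, so it concludes that $(\alpha, x)$ contains an unnecessarily paced bidder. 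This contradicts the assumption that $(\alpha, x)$ is an \fppe. Hence $\alpha = \alpha^\star$, which proves that the pacing multipliers of any \fppe are uniquely determined and coincide with those of the maximal \bfpm.

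There is essentially no technical obstacle here; the work was already done in the three lemmas, and the corollary is just their logical combination. The only point worth stating carefully is the observation that an \fppe is a \bfpm, so that the Pareto-dominance of $(\alpha^\star, x^\star)$ applies to it and pins it from above, while the third lemma pins it from below by ruling out any strict drop. I would keep the write-up to a few lines, emphasizing that uniqueness is only claimed for the multipliers $\alpha$ — the allocation $x$ may still differ across \fppe due to tie-breaking, consistent with the ``essentially unique'' terminology used earlier.
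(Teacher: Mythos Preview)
Your proof is correct and is exactly the argument the paper intends: the corollary is stated immediately after the third lemma with no separate proof, as the logical combination of the three preceding lemmas you invoke. Your write-up simply makes explicit what the paper leaves implicit, including the care to note that only the multipliers, not the allocation, are claimed unique.
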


While the pacing multipliers are uniquely determined, the allocation is not:
Tie-breaking may give different goods to different bidders. However,
tie-breaking is inconsequential in the sense that the bidder utilities (and thus
social welfare), item prices (and thus revenue), and the set of
budget constrained bidders, are all uniquely determined.

Given two \bfpm, if the pacing multipliers of one dominate the other, then the revenue of that \bfpm must also be at least as high. In the following, let $\text{Rev}(\alpha, x)$ refer to the revenue of a \bfpm $\fppet$.

\begin{lemma}\label{lem:revenue}
	Given two \bfpm $(\alpha^{(1)}, x^{(1)})$ and $(\alpha^{(2)}, x^{(2)})$, where $\alpha^{(1)} \geq \alpha^{(2)}$, we must have that $\text{Rev}(\alpha^{(1)}, x^{(1)}) \ge \text{Rev}(\alpha^{(2)}, x^{(2)})$.
\end{lemma}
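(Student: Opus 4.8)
The plan is to compare the two revenues good-by-good, showing that the price of each good is weakly larger in the solution with the larger pacing multipliers, and that the total fraction of each good that is sold is also weakly larger. Since $\text{Rev}(\alpha,x) = \sum_{j \in M} p_j \sum_{i \in N} x_{ij}$, both of these facts together will give the result.

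First I would fix a good $j$ and recall that the price is $p_j = \max_{i \in N} \alpha_i v_{ij}$ by the Prices property of Definition~\ref{def:bfpm}. Since $\alpha^{(1)} \geq \alpha^{(2)}$ component-wise and valuations are nonnegative, we have $\alpha^{(1)}_i v_{ij} \geq \alpha^{(2)}_i v_{ij}$ for every $i$, hence $p^{(1)}_j = \max_i \alpha^{(1)}_i v_{ij} \geq \max_i \alpha^{(2)}_i v_{ij} = p^{(2)}_j$. So prices only go up. Next I would handle the fraction sold: by the ``Demanded goods sold completely'' and ``No overselling'' properties, $\sum_i x_{ij} = 1$ whenever $p_j > 0$ and $\sum_i x_{ij} \leq 1$ always. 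The only way the fraction sold can differ between the two \bfpm is if $p^{(2)}_j = 0$ (so possibly nothing is sold) while $p^{(1)}_j > 0$ (so good $j$ is sold completely); in that case the fraction sold went from $\leq 1$ up to exactly $1$, which is still weakly larger. If $p^{(2)}_j > 0$, then $p^{(1)}_j \geq p^{(2)}_j > 0$ as well, and both sell the good completely. In every case the per-good revenue contribution $p_j \sum_i x_{ij}$ is weakly larger under $(\alpha^{(1)}, x^{(1)})$.

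Summing over all goods $j \in M$ then yields $\text{Rev}(\alpha^{(1)}, x^{(1)}) = \sum_j p^{(1)}_j \sum_i x^{(1)}_{ij} \geq \sum_j p^{(2)}_j \sum_i x^{(2)}_{ij} = \text{Rev}(\alpha^{(2)}, x^{(2)})$, as desired. I do not anticipate a serious obstacle here; the only mild subtlety is the bookkeeping around goods with zero price, where the allocation is not pinned down by Definition~\ref{def:bfpm}, and one must be careful that the revenue contribution of such a good is $0$ regardless of how much of it is (nominally) allocated, so it cannot hurt the comparison. Everything else is a direct consequence of monotonicity of the max and of the nonnegativity of valuations.
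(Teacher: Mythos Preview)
Your proof is correct and follows essentially the same approach as the paper: you compare revenues good-by-good by observing that $p^{(1)}_j \geq p^{(2)}_j$ (monotonicity of the max) and that demanded goods are fully sold in any \bfpm, so per-good revenue is weakly larger under $(\alpha^{(1)},x^{(1)})$. Your treatment is in fact more careful than the paper's, since you explicitly handle the zero-price case where the allocation is not pinned down.
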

\proof{Proof.}
	Since $\alpha^{(1)} \geq \alpha^{(2)}$, prices under $(\alpha^{(1)}, x^{(1)})$ must be at least as large as under $(\alpha^{(2)}, x^{(2)})$. By the definition of \bfpm, all demanded items must be sold completely. Therefore, under $(\alpha^{(1)}, x^{(1)})$ we sell at least all the items that we sold under $(\alpha^{(2)}, x^{(2)})$ at prices that are at least as high as those under $(\alpha^{(2)}, x^{(2)})$. Hence, $\text{Rev}(\alpha^{(1)}, x^{(1)}) \ge \text{Rev}(\alpha^{(2)}, x^{(2)})$.
\eop\endproof

\begin{corollary}[Revenue-maximizing]\label{cor:revenue-maximizing}
	The \fppe is revenue-maximizing among all \bfpm.
\end{corollary}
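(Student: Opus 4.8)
The plan is to assemble the corollary directly from the structural results already in hand, with essentially no new work. First I would use Corollary~\ref{cor:uniqueness} (Essential Uniqueness) to pin down the pacing multipliers of the \fppe: they coincide with the multipliers $\alpha$ of the Pareto-dominant \bfpm $\fppet$ produced by the first lemma. Write $\alpha^\star := \alpha$ for these multipliers. By the defining property of that lemma, $\alpha^\star \geq \alpha'$ component-wise for the multipliers $\alpha'$ of every other \bfpm $\newfppet$.

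Next I would appeal to Lemma~\ref{lem:revenue}. Applying it with $(\alpha^{(1)}, x^{(1)}) = (\alpha^\star, x^\star)$ and $(\alpha^{(2)}, x^{(2)}) = \newfppet$, and using $\alpha^\star \geq \alpha'$, we get $\text{Rev}(\alpha^\star, x^\star) \geq \text{Rev}(\alpha', x')$. Since $\newfppet$ was an arbitrary \bfpm, the \fppe (whose revenue equals $\text{Rev}(\alpha^\star, x^\star)$) attains the maximum revenue over all \bfpm, which is exactly the claim.

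The only point that needs a sentence of care—and it is not really an obstacle—is that "the revenue of the \fppe" is well defined despite the allocation being non-unique: the prices are $p_j = \max_{i} \alpha^\star_i v_{ij}$, a function of the (unique) multipliers alone, and the \bfpm property "demanded goods sold completely" forces every good with $p_j > 0$ to be fully sold regardless of how ties are broken, so revenue $\sum_{j : p_j > 0} p_j$ depends only on $\alpha^\star$. This is already noted in the remark following Corollary~\ref{cor:uniqueness}, so I would simply cite it. With that in place, the chain Corollary~\ref{cor:uniqueness} $\to$ Pareto-dominance $\to$ Lemma~\ref{lem:revenue} completes the proof in a few lines.
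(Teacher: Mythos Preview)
Your proposal is correct and follows exactly the route the paper sets up: the corollary is stated in the paper without an explicit proof because it is immediate from Corollary~\ref{cor:uniqueness} (the \fppe multipliers are the Pareto-dominant ones) together with Lemma~\ref{lem:revenue} (component-wise higher multipliers imply weakly higher revenue), which is precisely the chain you spell out. Your added sentence about revenue being well defined under tie-breaking is also faithful to the paper's own remark following Corollary~\ref{cor:uniqueness}.
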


The following theorem summarizes the main properties of \fppe that follow from Lemma~\ref{lem:existence} and Corollaries~\ref{cor:uniqueness} and~\ref{cor:revenue-maximizing}.

\begin{theorem}
	Given input $(N, M, V, B)$, an \fppe is guaranteed to exist. In addition, the uniquely-determined maximal pacing multipliers $\alpha$ maximize the revenue over all \bfpm. 
\end{theorem}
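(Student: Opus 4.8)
The plan is to assemble this theorem directly from the lemmas and corollaries already established in this section; it is a summary statement, so no genuinely new argument is required, only a careful citation of the pieces in the right order.

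\textbf{Step 1 (existence).} First I would invoke the lemma establishing the existence of a Pareto-dominant \bfpm: setting $\alpha_i^\star = \sup\{\alpha_i \mid \alpha \text{ is part of a \bfpm}\}$, that lemma shows (via repeated component-wise maxima and a compactness/continuity argument) that $\alpha^\star$ is realized by some \bfpm $(\alpha^\star, x^\star)$, which therefore component-wise dominates every other \bfpm. Then Lemma~\ref{lem:existence} shows this Pareto-dominant \bfpm has no unnecessarily paced bidder, hence is an \fppe. This already proves the first sentence of the theorem.

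\textbf{Step 2 (uniqueness of the multipliers).} Next I would cite the lemma immediately preceding Corollary~\ref{cor:uniqueness}: if $(\alpha^{(1)}, x^{(1)})$ and $(\alpha^{(2)}, x^{(2)})$ are \bfpm with $\alpha^{(1)} \ge \alpha^{(2)}$ and $\alpha^{(1)}_i > \alpha^{(2)}_i$ for some $i$, then $(\alpha^{(2)}, x^{(2)})$ has an unnecessarily paced bidder and so is not an \fppe. Since $\alpha^\star$ dominates all \bfpm, any \fppe must use exactly the multiplier vector $\alpha^\star$; this is precisely Corollary~\ref{cor:uniqueness}, giving the ``uniquely-determined maximal pacing multipliers'' claim.

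\textbf{Step 3 (revenue maximization).} Finally I would apply Lemma~\ref{lem:revenue}, which gives $\text{Rev}(\alpha^{(1)}, x^{(1)}) \ge \text{Rev}(\alpha^{(2)}, x^{(2)})$ whenever $\alpha^{(1)} \ge \alpha^{(2)}$ component-wise, because larger multipliers only raise the per-item prices and all demanded goods are sold completely in any \bfpm. Taking $\alpha^{(1)} = \alpha^\star$ yields $\text{Rev}(\alpha^\star, x^\star) \ge \text{Rev}(\alpha, x)$ for every \bfpm $(\alpha, x)$, i.e.\ Corollary~\ref{cor:revenue-maximizing}. Combining Steps 1--3 gives the theorem. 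I do not expect any obstacle here: all the technical content — in particular the budget-feasibility verification for the component-wise maximum of two \bfpm, and the tie-redistribution construction used to rule out unnecessary pacing in the maximal \bfpm — has already been carried out in the preceding lemmas, so the proof of the theorem is just a three-line synthesis.
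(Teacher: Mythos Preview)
Your proposal is correct and matches the paper's approach exactly: the paper's own proof is the single line ``Follows from Lemma~\ref{lem:existence} and Corollaries~\ref{cor:uniqueness} and~\ref{cor:revenue-maximizing},'' and your three steps simply unpack precisely those citations in the same order.
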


\section{Properties of First-Price Pacing Equilibria}
\label{sec:properties}

We first show that an \fppe is also a competitive equilibrium.  In fact, we show that the concept of \fppe is equivalent to a natural refinement of competitive equilibrium.

\begin{definition}
A {\em competitive equilibrium} consists of prices $p_j$ of goods and feasible allocations $x_{ij}$ of goods to bidders such that the following properties hold:
\begin{enumerate}
\item Each bidder maximizes her utility under prevailing prices, that is, for all $i\in N$ it holds that  
$$\mathbf x_i \in \arg \max_{\mathbf x_i \in [0,1]^m}\left\{\sum_{j\in M} (v_{ij}-p_j) x_{ij}: \sum_{j\in M} p_j x_{ij} \leq B_i\right\}.$$
\item Every item with a positive price is sold completely, that is, $p_j >0 \Rightarrow \sum_i x_{ij}=1$ for all $j\in M$.
\end{enumerate}
\end{definition}
We now introduce a refinement of competitive equilibrium that requires that {\em each individual dollar} of a bidder is spent (or not spent) in a way that maximizes the utility obtained by the bidder for that dollar.  Thus, there exists a rate $\beta_i$ for each bidder that indicates her return on a dollar.
\begin{definition}
An {\em equal-rates competitive equilibrium} (\erce)
is a competitive equilibrium such that for every bidder $i\in N$, there is a number $\beta_i$ such that:
\begin{enumerate}
\item If $x_{ij}>0$, then $v_{ij}/p_j = \beta_i$.
\item If $i$ does not spend her entire budget, then $\beta_i=1$.
\end{enumerate}
\end{definition}

Through the previous definition we obtain the following characterization of \fppe.

\begin{theorem}\label{thm:erce iff fppe}
A combination of prices $p_j$ and allocations $x_{ij}$ is an \erce if and only if it is an \fppe.
\end{theorem}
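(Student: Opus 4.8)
The plan is to establish both implications at once by exhibiting the natural correspondence $\alpha_i = 1/\beta_i$ between pacing multipliers and per-dollar return rates, and then checking the defining conditions of each concept against those of the other. The reciprocal is the right guess: a winning paced bid satisfies $\alpha_i v_{ij} = p_j$, which rearranges to $v_{ij}/p_j = 1/\alpha_i$, so $\beta_i := 1/\alpha_i$ is exactly the return per dollar that bidder $i$ earns on the goods she wins.

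\emph{From \fppe to \erce.} Starting from an \fppe $\fppet$, I would first note that every $\alpha_i > 0$: if $\alpha_i = 0$ then $i$ wins nothing and spends $0 < B_i$, so no-unnecessary-pacing forces $\alpha_i = 1$, a contradiction. Hence $\beta_i := 1/\alpha_i \ge 1$ is well defined. The \bfpm price/allocation conditions immediately yield the \erce rate condition (if $x_{ij}>0$ then $\alpha_i v_{ij} = p_j$, i.e. $v_{ij}/p_j = \beta_i$) and the clearing condition ($p_j>0$ implies the good is fully sold), and no-unnecessary-pacing gives $\beta_i = 1$ whenever $i$ underspends. The only substantive step is that each bidder maximizes utility at prices $p_j$: since $\alpha_i v_{ij} \le p_j$ for all $j$, the bang-per-buck $v_{ij}/p_j - 1$ never exceeds $\beta_i - 1$, with equality exactly on the goods $i$ wins. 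If $\beta_i = 1$, every good has nonpositive bang-per-buck and $i$'s utility is $0$, which is optimal; if $\beta_i > 1$ then no-unnecessary-pacing forces $i$ to spend her whole budget, so her utility equals $(\beta_i-1)B_i$, whereas any feasible bundle $\mathbf x_i'$ yields at most $(\beta_i-1)\sum_j p_j x_{ij}' \le (\beta_i-1)B_i$. (Goods with $p_j = 0$ have $v_{ij}=0$ for all $i$ because $p_j = \max_i \alpha_i v_{ij}$ with all $\alpha_i > 0$, so they contribute nothing.)

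\emph{From \erce to \fppe.} Conversely, given an \erce with prices $p_j$, allocation $x_{ij}$, and rates $\beta_i$, I would first show $\beta_i \ge 1$: if $i$ buys a positive fraction of a positively-priced good $j$ with $\beta_i < 1$ then $v_{ij} < p_j$ and shrinking $x_{ij}$ strictly improves her utility, contradicting optimality; and if $i$ buys nothing then (as $B_i>0$) she underspends, so the \erce rate condition forces $\beta_i = 1$. Set $\alpha_i := 1/\beta_i \in (0,1]$. The \bfpm budget-feasibility, no-overselling, and demanded-goods-sold conditions are inherited verbatim from competitive-equilibrium feasibility and clearing, and no-unnecessary-pacing is the \erce rate condition read through $\alpha_i = 1/\beta_i$. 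For $x_{ij}>0$ the rate condition gives $\alpha_i v_{ij} = p_j$, so the winner's paced bid equals the price; what remains is that nobody overbids, i.e. $v_{ij}/p_j \le \beta_i$ for all $i,j$. Here utility-maximality is essential: if $v_{ij}/p_j > \beta_i \ge 1$ for some $j$, bidder $i$ could strictly gain either by spending leftover budget on more of $j$ (she cannot already hold all of $j$, since $x_{ij}=1>0$ would force $v_{ij}/p_j = \beta_i$) or, if out of budget, by shifting a dollar from a rate-$\beta_i$ good onto $j$ — contradicting optimality. This gives $p_j = \max_i \alpha_i v_{ij}$ and the highest-bidder condition, so $\fppet$ is an \fppe.

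\emph{Main obstacle.} The two utility-maximization arguments are the crux: in the forward direction, converting ``no unnecessary pacing'' into ``full budget spent, hence optimal'' via the bang-per-buck bound; and in the reverse direction, extracting the price upper bound $\alpha_i v_{ij}\le p_j$ from optimality via a substitution/arbitrage argument. The degenerate goods with $p_j=0$, and the attendant division-by-zero in the rate conditions, are a minor nuisance handled by observing that such goods must have all values zero and hence play no role.
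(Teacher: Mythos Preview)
Your proposal is correct and follows essentially the same approach as the paper: both directions hinge on the correspondence $\alpha_i = 1/\beta_i$, with utility-maximality of the competitive equilibrium supplying the bound $\alpha_i v_{ij} \le p_j$ in the reverse direction and the no-unnecessary-pacing condition supplying optimality in the forward direction. If anything, your treatment is slightly more careful than the paper's about the edge cases (explicitly ruling out $\alpha_i = 0$, noting that $x_{ij}=1$ already forces the rate equality so the arbitrage argument is not blocked, and isolating the $p_j = 0$ nuisance).
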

\proof{Proof.}
We first note that budget feasibility, the no-overselling condition, and the condition that items with a positive price must be sold completely, appear in the definitions of both concepts, so we only need to check the other conditions.

We first prove that an \fppe is also an \erce.  
Let $\beta_i=1/\alpha_i$.
First, consider a bidder with $\alpha_i=1$.  If $x_{ij}>0$ for some $j$, then $v_{ij}/p_j = v_{ij}/v_{ij} = 1 = \beta_i$, proving both conditions in the definition of an \erce.  Moreover, for any item $j$, we have $v_{ij}/p_j \leq v_{ij} / v_{ij} = 1$.  Therefore, the bidder is spending optimally given the prices.

Next, consider a bidder with $\alpha_i<1$. 
If $x_{ij}>0$ for some $j$, then $v_{ij}/p_j = v_{ij}/(\alpha_i v_{ij}) = \beta_i$, proving the first condition in the definition of an \erce.  Moreover, by the definition of \fppe, such a bidder must spend her entire budget, proving the second condition.  Moreover, for any item $j$, we have $v_{ij}/p_j \leq v_{ij} / (\alpha_i v_{ij}) = \beta_i$.  Hence, the bidder is spending all her budget on the optimal items for her, and leaving money unspent would be suboptimal because $\beta_i>1$.  Therefore, the bidder is spending optimally given the prices.  We conclude that an \fppe is also an \erce.

We next prove that an \erce is also an \fppe.
For a bidder $i$ with $\sum_j x_{ij}>0$, consider the set of items $S_i = \{j: x_{ij}>0\}$.  By the \erce property, we have that for $j \in S_i$, 
 $v_{ij} / p_j = \beta_i$.  Let $\alpha_i=1/\beta_i$.
  We must have $\alpha_i \leq 1$, because otherwise $i$'s dollars would be better left unspent, contradicting the first property of competitive equilibrium.  Also, if $\alpha_i<1$ then all of $i$'s budget must be spent, establishing that no bidder is unnecessarily paced.
  For a bidder with $\sum_j x_{ij}=0$, define $\alpha_i = \beta_i = 1$.   
 
Now, we show that no part of an item $j$ can be won by a bidder $i$ for whom $\alpha_i v_{ij}$ is not maximal; if it were, by the first \erce condition we would have $v_{ij}/p_j=\beta_i \Leftrightarrow \alpha_i v_{ij} = p_j$ and another bidder $i'$ for whom $\alpha_{i'} v_{i'j} > \alpha_i v_{ij} = p_j$.  Hence, $v_{i'j} / p_j > 1 / \alpha_{i'} = \beta_{i'}$, but this would contradict that $i'$ is receiving an optimal allocation under the prices.

Next, we prove that the prices are set correctly for an \fppe.
For any item $j$ that is sold completely, consider a bidder with $x_{ij}>0$. Again, by the first \erce condition we have $v_{ij}/p_j=\beta_i \Leftrightarrow \alpha_i v_{ij} = p_j$, and we have already established that this bidder must maximize $\alpha_i v_{ij}$.
If the item $j$ is not entirely sold, then by the second condition of competitive equilibrium it must have price $0$.  This in turn implies that all bidders have value $0$ for it, for otherwise there would be a bidder $i$ with $\beta_i = v_{ij} / p_j =\infty$, who hence should be able to obtain 
a utility of $\infty$ since every one of the bidder's dollars must result in that amount of utility for her---but this is clearly impossible with finitely many resources.
Thus, we have established all the conditions of an \fppe. 
\eop\endproof

Next, we show that the platform does not benefit from adding fake bids under an \fppe mechanism, unlike in the case of second-price payments.

\begin{definition}
A solution concept is {\em shill-proof} if the seller does not benefit from adding fake bids.
\end{definition}

\begin{proposition}
\fppe are shill-proof.
\end{proposition}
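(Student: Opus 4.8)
The plan is to deduce shill-proofness from the quantitative budget-sensitivity of \fppe revenue. First I would pin down the model: ``adding fake bids'' means the seller introduces a set $F$ of fake bidders with arbitrary valuations and budgets, and collects payments only from the real bidders $N$; so shill-proofness says that for every such $F$ the payment collected from $N$ in the \fppe of $N\cup F$ is at most the revenue $R_0$ of the \fppe of $N$. Write $S_f$ for the realized spend of fake bidder $f$ in the \fppe of $N\cup F$. Since the total revenue of that \fppe equals (payment from $N$)~$+\sum_{f\in F}S_f$, it is enough to show that this total revenue is at most $R_0+\sum_{f\in F}S_f$.

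Then I would make two normalizations. (i) We may assume $B_f=S_f$ for every fake $f$: replacing $B_f$ by $S_f$ only shrinks the \bfpm polytope, but the original \fppe tuple is still a \bfpm of the shrunk instance (now with $f$'s budget constraint tight), so by the characterization of the \fppe as the unique Pareto-dominant \bfpm (Corollary~\ref{cor:uniqueness}) the \fppe---and hence its revenue---does not change. (ii) A fake bidder with budget $0$ is inert: on no good can its paced bid strictly exceed the highest real paced bid, since otherwise that good would have a positive price (hence must be sold completely) while only that budgetless fake is eligible to win it---impossible; likewise it cannot win a positive fraction of any positively priced good. So such a fake changes neither prices nor the allocation among real bidders, and deleting the zero-priced scraps assigned to fakes leaves a \bfpm of $N$ that inherits ``no unnecessary pacing'', i.e.\ the \fppe of $N$. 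Thus the \fppe of $N\cup F$ with all fake budgets set to $0$ has total revenue exactly $R_0$.

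The heart of the argument is an interpolation using budget sensitivity. Starting from the instance $N\cup F$ with all fake budgets $0$ (total revenue $R_0$), I would raise the fakes' budgets one at a time, from $0$ up to $S_f$. Using the sensitivity result that increasing a single budget by $\Delta$ raises \fppe revenue by at most $\Delta$, each of these $|F|$ steps increases the revenue by at most $S_f$, so the endpoint instance has total revenue at most $R_0+\sum_{f\in F}S_f$. By normalization (i) the endpoint instance has the same \fppe as $N\cup F$, so the total revenue of the \fppe of $N\cup F$ is at most $R_0+\sum_{f\in F}S_f$; subtracting the fakes' payments $\sum_{f\in F}S_f$ gives that the payment from $N$ is at most $R_0$, which is shill-proofness.

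The main obstacle is that this route genuinely needs the quantitative ``$1$-Lipschitz in each budget'' sensitivity bound rather than mere monotonicity of revenue in the budgets, so the proof is only as clean as that lemma; normalization (i) is the one subtlety otherwise, as it is precisely what prevents the shill from gaining by withholding part of its budget. A more self-contained alternative would be to delete the fakes from the \fppe of $N\cup F$ directly and argue that the restriction to the real bidders can be completed to a \bfpm of $N$ with at least the same revenue from $N$, then invoke Corollary~\ref{cor:revenue-maximizing}; but the delicate point there is re-selling the goods the fakes had won without breaking any real bidder's budget, which appears to require a redistribution argument in the spirit of the \fppe existence proof.
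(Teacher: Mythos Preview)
Your argument is correct and takes a genuinely different route from the paper. The paper's proof is a one-liner built on good-monotonicity (Proposition~\ref{prop:rev-good}): from the shilled \fppe, delete each fake bidder \emph{together with the goods she won}; what remains is still an \fppe on the real bidders and a reduced good set, its revenue equals exactly the real-bidder payment in the shilled equilibrium, and good-monotonicity then gives that this is at most $R_0$. Your route instead normalizes fake budgets to realized spend and interpolates fake budgets up from zero, invoking the $1$-Lipschitz budget-sensitivity bound (Proposition~\ref{prop:sens-rev-ub}) at each step. Both work; the paper's argument is shorter and sidesteps your one real technicality (the model assumes $B_i>0$, so the budget-$0$ starting point of your interpolation is formally outside the framework and needs either a limiting argument or the easy check that the relevant lemmas extend). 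It is also worth noting that the ``more self-contained alternative'' you sketch in your final paragraph is essentially the paper's proof: the delicate re-selling issue you flag is exactly what the paper avoids by deleting the fakes' won goods along with the fakes themselves, rather than trying to reallocate them, and then appealing to good-monotonicity instead of Corollary~\ref{cor:revenue-maximizing} directly.
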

\proof{Proof.}
Note that if we start from an \fppe and remove both a bidder and the items she wins, we still have an \fppe, since the remaining bidders are spending the same as before, and the remaining items are allocated as before and thus fully allocated. Consider an instance of a market with three \fppe: (a) an \fppe with shill bids, (b) an \fppe without shill bids, and (c) the \fppe generated by removing both the shill bids and the items they won from (a). Notice that the seller makes the same revenue in (a) and (c). Moreover, by Proposition~\ref{prop:rev-good} we know that the revenue of (b) is at least as much as the revenue of (c), and therefore also at least as much as the revenue of (a). Thus, the seller cannot benefit from shill bids.
\eop\endproof

\citet{akbarpour2018credible} observe that a first price single-item auction satisfies a property referred to as {\em credible}. This means that the platform cannot benefit from misrepresenting what other agents have done in the mechanism. The example below illustrates that the \fppe mechanism does not necessarily satisfy this property.  

\begin{example}
Suppose $B_1=2, v_{11}=2, v_{12}=2$ and $v_{22}=1$.  The \fppe sets $p_1=p_2=1$ and allocates both items to bidder $1$.  But the auctioneer could lie to bidder $1$ claiming that someone else had bid $3$ for item $2$, and charge bidder $1$ a price of $2$ for item $1$. Meanwhile, she could charge bidder $2$ a price of $1$ for item $2$, for a higher revenue overall.
\end{example}

However, an \fppe does have a {\em price predictability} guarantee: given any allocation, a bidder either pays its full value or pays her budget. Even though individual item prices may not be known, this guarantees a degree of transparency to bidders about the price they will pay.

An \fppe is also robust to deviations by groups of bidders who might form a coalition to benefit themselves:

\begin{definition}
An allocation with a set of payments is in the core if no group of bidders has an incentive to form a coalition with the seller to attain an outcome that is strictly better for all agents in the coalition.
\end{definition}

\begin{proposition}
An \fppe is in the core.
\end{proposition}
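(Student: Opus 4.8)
The plan is to lean on the competitive-equilibrium characterization of \fppe (Theorem~\ref{thm:erce iff fppe}) and on the accounting identity $\text{Rev}=\sum_j p_j$. First I would fix notation: let $(\alpha,x)$ be the \fppe, $p_j=\max_i\alpha_i v_{ij}$ its prices, $u_i=\sum_j(v_{ij}-p_j)x_{ij}$ bidder $i$'s utility, and $\text{Rev}=\sum_j p_j\sum_i x_{ij}$ the seller's payoff (the seller values unsold goods at $0$). Since every good with $p_j>0$ is sold completely while goods with $p_j=0$ contribute nothing, $\text{Rev}=\sum_j p_j$; this is the only place the ``demanded goods sold completely'' condition enters.

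Next I would suppose, for contradiction, that a coalition of a bidder set $S$ together with the seller can block: there is a feasible allocation $\hat x$ of the goods to bidders in $S$ (so $\sum_{i\in S}\hat x_{ij}\le 1$ for each good $j$, since the coalition controls one unit of each good) and payments $\hat q_i\le B_i$ (the deviation must still respect each bidder's budget) such that $\sum_j v_{ij}\hat x_{ij}-\hat q_i>u_i$ for every $i\in S$ and $\sum_{i\in S}\hat q_i>\text{Rev}$. The key claim is that every deviating bidder underpays relative to the \fppe prices, that is, $\hat q_i<\sum_j p_j\hat x_{ij}$ for all $i\in S$. Granting this, summing over $S$ gives $\sum_{i\in S}\hat q_i<\sum_j p_j\sum_{i\in S}\hat x_{ij}\le\sum_j p_j=\text{Rev}$, which contradicts the seller's strict improvement; hence no blocking coalition exists and the \fppe is in the core.

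To prove the claim I would argue by contradiction: if $\hat q_i\ge\sum_j p_j\hat x_{ij}$ for some $i\in S$, then $u_i<\sum_j v_{ij}\hat x_{ij}-\hat q_i\le\sum_j(v_{ij}-p_j)\hat x_{ij}$, so the bundle $\hat x_i\in[0,1]^m$ strictly beats $x_i$ in quasi-linear surplus at prices $p$. But by Theorem~\ref{thm:erce iff fppe} the \fppe is a competitive equilibrium, so $x_i$ already maximizes that surplus over all $y\in[0,1]^m$ with $\sum_j p_j y_j\le B_i$; therefore $\hat x_i$ must violate that budget constraint, i.e., $\sum_j p_j\hat x_{ij}>B_i$. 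Combining with $\hat q_i\le B_i$ gives $B_i\ge\hat q_i\ge\sum_j p_j\hat x_{ij}>B_i$, a contradiction, which establishes the claim.

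I expect the delicate point to be exactly this last step, where the argument must use two constraints at once: that the blocking outcome still respects bidders' budgets, and that the \fppe allocation is individually optimal at the \fppe prices \emph{under that same budget}. The moral is that a deviating bidder can strictly gain only by being subsidized relative to the \fppe prices, so the coalition collectively receives less money from its own bidders than the \fppe seller already does. I would finally sanity-check the degenerate cases — bidders who win nothing in the \fppe (where $u_i=0$, $\alpha_i=1$, and $v_{ij}\le p_j$ for every $j$), goods priced at zero, and the grand coalition $S=N$ (where $\text{Rev}=\sum_j p_j$ still holds) — but the inequalities above go through verbatim in each.
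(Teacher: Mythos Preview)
Your proof is correct. It differs from the paper's approach, which is a two-line appeal to classical general-equilibrium theory: the paper observes that by Theorem~\ref{thm:erce iff fppe} an \fppe is a competitive equilibrium, treats money as an additional good to embed the market into a locally non-satiated exchange economy, and then invokes the standard theorem that Walrasian equilibria lie in the core.

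What you do instead is unpack that classical argument directly in the budget-constrained quasi-linear setting. Your key lemma---that any deviating bidder who is strictly better off must be paying less than the \fppe price of her new bundle---is exactly the step where local non-satiation does the work in the textbook proof, and you carry it out by hand using the competitive-equilibrium optimality of $x_i$ together with the hard budget cap $\hat q_i\le B_i$. The payoff is a self-contained argument that makes explicit where the budget constraint and the ``demanded goods sold completely'' condition enter, at the cost of being longer than the paper's citation. Both approaches ultimately rest on Theorem~\ref{thm:erce iff fppe}; yours just does not require the reader to recall (or check the applicability of) the exchange-economy core theorem.
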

\proof{Proof.}
Since an \fppe is a competitive equilibrium, if we treat money as a good then we have a traditional locally non-satiated Walrasian equilibrium in an exchange economy. Since a locally non-satiated Walrasian Equilibrium is in the core, an \fppe is also in the core. 
(See, e.g., \citet{mascolell-book} for a general exposition of Walrasian equilibria and cooperative game theory, and 
\citet{powell-lecturenotes} for an exposition under assumptions that match those of this paper.)
\eop\endproof

\section{Monotonicity and Sensitivity Analysis}

In the previous section, we showed that \fppe are guaranteed to exist, that they are essentially unique (up to ties that are largely inconsequential), and that they satisfy a number of attractive properties. We now look at how well-behaved \fppe are under changing conditions.
We would ideally like the solution concept to be stable, so that changes in the input do not produce disproportionate changes in the output. We will show that this is largely the case. This is in stark contrast to \sppe, where \citet{conitzer2021pacing} showed that the equilibrium can be very sensitive.
First, \sppe is not unique, and the revenue and welfare can vary drastically
across equilibria. Second, even when there is a unique \sppe, small
changes in the budget can cause disproportionately large changes in revenue.

\subsection{Monotonicity}

We investigate whether \fppe are monotonic when adding bidders or goods, or
when increasing budgets or valuations.
Table~\ref{table:mon-results} summarizes our results.

\paragraph{Revenue}

Revenue monotonicity is maintained when adding bidders, goods, and budget, but not for incremental additions to valuations. Our proofs of revenue monotonicity all rely on Corollary~\ref{cor:revenue-maximizing}: the fact that multipliers in an \fppe are maximal among all \bfpm. Bidder and budget monotonicity both follow from a particularly simple argument: the original solution remains a \bfpm, and thus the maximality of \fppe over \bfpm implies that monotonicity is maintained.

\begin{table}
	\centering
	\begin{tabular}{|l|c|c|c|c|}
		\hline 
		& Add Bidder & Add Good & Incr.\ Budget  & Incr.\ Value $v_{ij}$ \\ 
		\hline \hline
		Revenue & $\ge 0$ & $\ge 0$ & $\ge 0$ & Can go down \\ 
		\hline 
		Social Welfare & Can go down & $\ge 0$ & Can go down &  Can go down\\ 
		\hline 
	\end{tabular} 
	\caption{Overview of monotonicity results.}
	\label{table:mon-results}
\end{table}

\begin{proposition}\label{prop:rev-good}
	In an \fppe, adding a good weakly increases revenue.
\end{proposition}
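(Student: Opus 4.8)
The plan is to invoke Corollary~\ref{cor:revenue-maximizing}: since the \fppe of any instance maximizes revenue over all \bfpm of that instance, it suffices to exhibit a single \bfpm of the augmented instance $(N,\, M\cup\{m+1\},\, V',\, B)$ whose revenue is at least the revenue $R$ of the \fppe of the original instance $(N,M,V,B)$. The natural candidate is the original \fppe $(\alpha,x)$ with the new good $m+1$ appended, so the whole proof reduces to showing we can extend $(\alpha,x)$ to a \bfpm on $M\cup\{m+1\}$ without losing revenue.

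First I would dispose of the easy sub-cases. Let $q=\max_i \alpha_i v_{i,m+1}$ be the price good $m+1$ would carry under the old multipliers. If $q=0$, then $m+1$ has price $0$ and need not be sold, so $(\alpha,x)$ with $m+1$ left unallocated is already a \bfpm on $M\cup\{m+1\}$ with revenue exactly $R$. If $q>0$, let $H=\{i:\alpha_i v_{i,m+1}=q\}$ be the set of top bidders on the new good and let $\sigma_i=B_i-\sum_{j\in M}x_{ij}p_j\ge 0$ be bidder $i$'s remaining budget slack. If $\sum_{i\in H}\sigma_i\ge q$, then we can sell $m+1$ to the bidders in $H$ (all tied at the top bid $q$), splitting it in proportion to their slacks so that nobody overspends; this is a \bfpm of the new instance with revenue $R+q>R$, and we are done.

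The remaining case is the crux: $0<\sum_{i\in H}\sigma_i<q$, so the top bidders on $m+1$ cannot afford it at price $q$ and some multipliers must be lowered. The plan is to lower the multipliers of the bidders in $H$ -- and, in cascading fashion, of any bidder in $M$ who inherits (a fraction of) a good because an $H$-bidder's paced bid dropped -- until good $m+1$ becomes affordable, all the while maintaining the five \bfpm conditions. The key accounting point is that no revenue leaks away: each unit of revenue ``lost'' on a good $j\in M$ when a multiplier is lowered equals a unit of budget freed for the bidder who was winning $j$, and that bidder either was budget-constrained (so the freed budget is immediately re-spent, on $m+1$ or on goods she newly wins, keeping her total spend constant) or was unpaced (so her total spend only rises); hence total spend, which equals total revenue, never decreases, while $m+1$ ends up sold at a positive price. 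Equivalently, one can compare the two \fppe directly: restricting the new \fppe to $M$ is a \bfpm of the old instance, so by the Pareto-dominance lemma the new multipliers satisfy $\alpha'\le\alpha$ componentwise; every bidder paced in the old \fppe is still paced in the new one and spends her entire budget in both, while every other bidder's total spend can be shown not to fall.

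I expect this last cascading/accounting step to be the main obstacle. One must check that the multiplier-lowering process is well-defined and terminates, that every intermediate tuple still satisfies ``goods go to highest bidders'' and ``demanded goods sold completely'' as allocations shift, and -- the genuinely delicate point -- that no revenue escapes when an unpaced bidder is left with unspent budget after a tied good is reallocated away from her; this is exactly the interplay between the (individually inconsequential) tie-breaking among equal paced bids and the budget-feasibility constraints, and it is where the argument needs the most care.
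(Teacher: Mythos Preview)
Your ``equivalent'' direct comparison of the two \fppe is exactly the paper's route; commit to it and drop the cascading \bfpm construction. That construction is an unnecessary detour: once you have $\alpha'\le\alpha$ componentwise, revenue monotonicity follows by a short spend-accounting on the two equilibria themselves, with none of the termination, well-definedness, or tie-reallocation worries you flag at the end.

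On $\alpha'\le\alpha$: your restriction argument --- that $(\alpha',x'|_M)$ is a \bfpm of the original instance and hence is Pareto-dominated by the original \fppe multipliers --- is correct and in fact slightly slicker than the paper's, which instead argues by contradiction that the set of bidders whose multipliers strictly rose would collectively win weakly more goods at strictly higher first prices, violating their (already tight) budgets.

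The step you leave as ``can be shown'' is the only remaining content, and the paper closes it by partitioning on whether the multiplier changed rather than on paced-versus-unpaced. If $\alpha'_i<\alpha_i$ then $\alpha'_i<1$, so $i$ is paced in the new \fppe and spends exactly $B_i$; she spent at most $B_i$ before, so her spend weakly rises. For $T=\{i:\alpha'_i=\alpha_i\}$, each bidder's own paced bids are unchanged while every competitor outside $T$ now bids strictly lower; hence $T$ collectively wins every good it won on $M$ (any former tie with a bidder in $N\setminus T$ is now won outright by $T$), plus possibly a share of the new good, all at the unchanged per-unit prices $\alpha_i v_{ij}$. So bidders in $T$ spend weakly more. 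Summing over both groups, total spend --- i.e., revenue --- weakly increases. In particular, the ``leak'' you worry about (an unpaced bidder losing a tied good and leaving budget on the table) cannot occur in this comparison: an unpaced-in-old bidder with $\alpha_i=1$ either becomes paced in the new \fppe (and spends her full budget) or has $\alpha'_i=1$ and so sits in $T$, only gaining goods.
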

\proof{Proof.}
Let $(\alpha, x)$ be the \fppe for $N$, $M$, and let $(\alpha', x')$ be the \fppe for $N$, $M \cup \{j\}$ which includes the new good $j\not\in M$. We first prove that $\alpha'_i \le \alpha_i$ for all bidders $i\in N$: Suppose there are bidders whose multipliers go up (strictly); consider the set of all such bidders $S$. Collectively, these bidders are now winning weakly more goods because there are more goods and nobody else's paced bids went up. That means they are, collectively, paying strictly more (they are bidding higher and it is first price). But this is impossible, because all of them were running out of budget before since they were paced.

Using the fact that $\alpha \ge \alpha'$, any bidder who was paced in $\alpha$ is still paced in $\alpha'$ and spending her whole budget. Let $T$ be the set of buyers whose pacing multiplier has not changed, i.e., $T = \{i \in N\ |\ \alpha_i = \alpha'_i\}$. They must win weakly more items: Any item they were tied originally with bidders outside $T$ must now go completely to bidders in $T$. Additionally, bidders in $T$ may win (part of) the new item. Since the pacing multipliers of bidders in $T$ did not change, their prices did not change, hence winning weakly more items means they're spending weakly more.

So bidders whose pacing multiplier changed are spending the same, and the remaining bidder spend weakly more. Hence revenue is weakly higher.
\eop\endproof

\begin{proposition}\label{prop:rev-bidder}
	In an \fppe, adding a bidder weakly increases revenue.
\end{proposition}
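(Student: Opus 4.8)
The plan is to realize the original \fppe as a (generally suboptimal) \bfpm of the augmented instance, and then combine the revenue comparison for dominating \bfpm{s} (Lemma~\ref{lem:revenue}) with the maximality of \fppe multipliers over all \bfpm{s} (Corollary~\ref{cor:uniqueness}/Lemma~\ref{lem:existence}).

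Concretely, let $(\alpha, x)$ be the \fppe for $N, M$, and let $(\alpha', x')$ be the \fppe for $N \cup \{n+1\}, M$, where the new bidder has arbitrary valuations $v_{n+1,j} \ge 0$ and budget $B_{n+1}>0$. First I would extend $(\alpha, x)$ to the larger bidder set by setting $\alpha_{n+1}=0$ and $x_{n+1,j}=0$ for all $j$, and check that this is a \bfpm of the augmented instance: since bidder $n+1$'s paced bid is $0$ on every good, the maximum paced bid on each good is unchanged, so prices $p_j$ are unchanged; the ``goods go to highest bidders'', ``no overselling'', ``demanded goods sold completely'', and ``budget-feasible'' conditions all carry over verbatim, because nothing about the bidders in $N$ changed and bidder $n+1$ is allocated nothing. (Note we only need a \bfpm here, not an \fppe, so there is no no-unnecessary-pacing condition to worry about for bidder $n+1$.)

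Next, since $(\alpha', x')$ is the Pareto-dominant \bfpm of the augmented instance and the extended tuple above is \emph{also} a \bfpm of that same instance, we get $\alpha'_i \ge \alpha_i$ for all $i \in N$ (and trivially $\alpha'_{n+1}\ge 0$). Now both $(\alpha', x')$ and the extended tuple are \bfpm{s} of one and the same (augmented) instance with the former's multipliers dominating the latter's, so Lemma~\ref{lem:revenue} yields $\text{Rev}(\alpha', x') \ge \text{Rev}$ of the extended tuple. Finally I would observe that the revenue of the extended tuple in the augmented instance equals $\text{Rev}(\alpha, x)$ in the original instance: revenue is $\sum_j p_j \cdot(\text{amount of } j \text{ sold})$, the prices $p_j$ are identical in the two, every good with $p_j>0$ is fully sold in both and contributes the same, and goods with $p_j=0$ contribute $0$ in both. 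Chaining the two (in)equalities gives $\text{Rev}(\alpha', x') \ge \text{Rev}(\alpha, x)$. There is essentially no obstacle here; the only step requiring care is that Lemma~\ref{lem:revenue} compares \bfpm{s} of a single fixed instance, so the whole argument hinges on first embedding the original \fppe as a \bfpm of the larger instance before applying it.
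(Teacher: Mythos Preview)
Your proof is correct and takes essentially the same approach as the paper: both embed the original \fppe into the augmented instance as a \bfpm by giving the new bidder multiplier and allocation zero, then compare revenues. The only cosmetic difference is that the paper directly invokes Corollary~\ref{cor:revenue-maximizing} (\fppe is revenue-maximizing among all \bfpm), whereas you unpack that corollary into its constituents (maximality of \fppe multipliers plus Lemma~\ref{lem:revenue}); the substance is identical.
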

\proof{Proof.}
	Let $N$ be the original set of bidders, $i \not\in N$ a new bidder, and $M$ the set of goods. Let $(\alpha, x)$ be the \fppe on $N$ and $M$. After adding bidder $i$, for each bidder  $k \in N \backslash \{i\}$ and good $j \in M$, let $\alpha'_k = \alpha_k$ and $x'_{kj} = x_{kj}$. Set $\alpha'_i = x_{ij} = 0$ for bidder $i$ and good $j \in M$, to obtain $( \alpha', x')$. By construction $(\alpha', x')$ is a \bfpm, so by Lemma~\ref{cor:revenue-maximizing}, the revenue of the \fppe for $N\cup \{i\}$ and $M$ must be at least as high. 
\eop\endproof

\begin{proposition}\label{prop:rev-budget}
	In an \fppe, increasing a bidder's budget from $B_i$ to $B'_i > B_i$ weakly increases revenue.
\end{proposition}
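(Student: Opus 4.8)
The plan is to reuse the simple template announced before Table~\ref{table:mon-results}: the original \fppe remains a \bfpm for the modified instance, so revenue-maximality of \fppe over \bfpm (Corollary~\ref{cor:revenue-maximizing}) does the rest. Concretely, let $(\alpha, x)$ be the \fppe for the original instance $(N, M, V, B)$, and let $B'$ be the budget vector agreeing with $B$ on every coordinate except that bidder $i$'s budget is raised from $B_i$ to $B'_i > B_i$. First I would check that $(\alpha, x)$ is still a \bfpm for $(N, M, V, B')$: every condition in Definition~\ref{def:bfpm} other than budget-feasibility (prices, goods going to highest bidders, demanded goods sold completely, no overselling) is stated purely in terms of $\alpha$, $v$, and $x$ and so is untouched; and budget-feasibility only becomes easier, since for bidder $i$ we have $\sum_{j\in M} x_{ij} p_j \le B_i \le B'_i$, while every other bidder's spending and budget are unchanged.

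With that established, I would apply Corollary~\ref{cor:revenue-maximizing} to the instance $(N, M, V, B')$: its \fppe has revenue at least that of any \bfpm for this instance, and in particular at least $\text{Rev}(\alpha, x)$. Since $\text{Rev}(\alpha, x)$ is exactly the revenue of the \fppe for the original instance, the revenue of the \fppe weakly increases when $B_i$ is raised to $B'_i$, which is the claim.

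There is essentially no obstacle here beyond being careful that enlarging a single budget breaks no \bfpm property — the argument is the same one used for adding a bidder in Proposition~\ref{prop:rev-bidder}. I would explicitly avoid routing the comparison through Lemma~\ref{lem:revenue}: increasing $B_i$ can cause bidder $i$'s multiplier to rise and hence push prices up enough that some other bidder's multiplier must fall, so the new \fppe's multipliers need not dominate $\alpha$ component-wise, and Lemma~\ref{lem:revenue} would not apply. The maximality-over-\bfpm corollary sidesteps this entirely.
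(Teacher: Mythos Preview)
Your main argument is exactly the paper's: the old \fppe $(\alpha,x)$ is a \bfpm for the enlarged-budget instance, so Corollary~\ref{cor:revenue-maximizing} gives the revenue comparison immediately. Nothing to add there.

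One correction to your final paragraph, though: your claim that ``the new \fppe's multipliers need not dominate $\alpha$ component-wise'' is false. Precisely because $(\alpha,x)$ is a \bfpm for the new instance and the new \fppe has the Pareto-maximal multipliers among all \bfpm (Corollary~\ref{cor:uniqueness}), the new multipliers $\alpha'$ satisfy $\alpha' \ge \alpha$ component-wise; the paper records this as part of Lemma~\ref{lem:pm-delta}. So Lemma~\ref{lem:revenue} would in fact apply here as well --- your chosen route is simply the more direct one, not the only one that works.
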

\proof{Proof.}
	Let $(\alpha, x)$ be the \fppe where the budget of bidder $i$ is $B_i$. After increasing the budget to $B'_i$, the solution $(\alpha, x)$ is still a \bfpm. Therefore, by Lemma~\ref{cor:revenue-maximizing}, the revenue of the new \fppe weakly increases.
\eop\endproof

\begin{proposition}\label{prop:rev-value}
	In an \fppe, increasing a bidder $i$'s value for some good $j$ from $v_{ij}$ to $v'_{ij} > v_{ij}$ can decrease revenue.
\end{proposition}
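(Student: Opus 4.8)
The plan is to prove the proposition by exhibiting a small counterexample, since the claim is only that revenue \emph{can} decrease. The mechanism to engineer is the following. In any \bfpm the seller's revenue equals the total spend of the bidders, and a budget-constrained bidder always contributes exactly her budget; so, to make revenue fall when some $v_{ij}$ is raised, the extra competition created by the higher valuation must \emph{pace down} a budget-constrained bidder~$i$, which lowers $i$'s paced bids on the other goods she was winning and hence lowers their prices, and the freed-up fractions of those goods are then bought, at these lower prices, by a bidder who is \emph{not} budget-constrained, so that her total spend strictly decreases. Since $i$'s spend stays pinned at $B_i$, the overall revenue drops.

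Concretely, I would take $N=\{1,2\}$, $M=\{1,2\}$, budgets $B_1=\nicefrac{3}{2}$ and $B_2=2$ (any sufficiently large $B_2$ works), valuations $v_{11}=2$, $v_{21}=v_{22}=1$, and let $v_{12}=0$ before the change and $v'_{12}=2>0$ after. In the ``before'' instance bidder~$1$ values only good~$1$; she cannot afford it unpaced, so in the \fppe she is paced to $\alpha_1=\nicefrac{3}{4}$, wins all of good~$1$ at price $\nicefrac{3}{2}$, and spends her whole budget, while bidder~$2$ is unpaced ($\alpha_2=1$) and wins all of good~$2$ at price~$1$; the revenue is $\nicefrac{3}{2}+1=\nicefrac{5}{2}$. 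In the ``after'' instance bidder~$1$ now also values good~$2$ at~$2$; spreading her budget of $\nicefrac{3}{2}$ across two identically-valued goods forces $\alpha_1$ down to $\nicefrac{1}{2}$, so both goods are priced at $\alpha_1 v_{1j}=1$; bidder~$1$ still spends her whole $\nicefrac{3}{2}$ (split in any way across the two goods) and bidder~$2$ buys only the residual half unit of goods at price~$1$, spending just $\nicefrac{1}{2}$; the revenue is $1+1=2<\nicefrac{5}{2}$.

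The steps I would carry out: (i) state the instance; (ii) verify that the indicated $(\alpha,x)$ meets every clause of Definition~\ref{def:fppe} in each instance (prices, goods go to highest bidders, budget feasibility, demanded goods sold completely, no overselling, no unnecessary pacing), which is a routine check; (iii) invoke Corollary~\ref{cor:uniqueness} to conclude these are \emph{the} \fppe pacing multipliers---ruling out any larger $\alpha_1$ via the budget bound ($\alpha_1>\nicefrac{3}{4}$, resp.\ $\alpha_1>\nicefrac{1}{2}$, would make bidder~$1$ pay strictly more than $B_1$) and any smaller $\alpha_1$ via the no-unnecessary-pacing clause; (iv) compare $\nicefrac{5}{2}$ with $2$ and conclude.

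The one mildly delicate point is (iii): I need the claimed solutions to be the \emph{maximal} \bfpm, not merely some \bfpm, since otherwise a different \fppe with different revenue could exist; this is exactly what the two easy $\alpha_1$ bounds above, together with Corollary~\ref{cor:uniqueness} and Lemma~\ref{lem:existence}, provide. A secondary point is that in the ``after'' instance bidder~$1$ is indifferent between the two goods, so the allocation is not unique; but the prices---and hence the revenue, via the ``demanded goods sold completely'' clause---are pinned down, so this is merely the harmless tie-breaking already noted after Corollary~\ref{cor:uniqueness}.
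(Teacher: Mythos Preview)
Your proposal is correct and takes essentially the same approach as the paper: both give a $2$-bidder, $2$-good counterexample in which raising $v_{12}$ forces bidder~$1$'s multiplier $\alpha_1$ down, which lowers the price of good~$1$ and hence total revenue. The paper uses $v_{11}=10,\,v_{12}=5,\,v_{21}=0,\,v_{22}=5$ with $B_1=10,\,B_2=5$ (revenue $15\to 10$ when $v_{12}$ rises to $10$); your instance uses different numbers but the mechanism is identical, and your write-up is somewhat more careful in invoking Corollary~\ref{cor:uniqueness} to pin down the multipliers.
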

\proof{Proof.}
	Consider the following instance: 2 bidders, 2 goods, $v_{11} = 10, v_{12} = 5$, $v_{21} = 0, v_{22} = 5$, with $B_1 = 10, B_2 = 5$. The \fppe consists of $\alpha_1 = \alpha_2 = 1$, $x_{11} = x_{22} = 1, x_{12} = x_{21} = 0$. Both bidders are spending their whole budget so the total revenue is $15$.
	
	However, consider $v'_{12} = 10 > v_{12}$. The \fppe is now $\alpha_1 = \frac12, \alpha_2 = 1, x_{11} = x_{22} = 1, x_{12} = x_{21} = 0$. The bidders still receive the same goods, but the price for the first good dropped to $5$ for a total revenue of $10$ instead of $15$.	
\eop\endproof

\paragraph{Social Welfare}
For social welfare, monotonicity is only maintained for goods. Adding bidders, or increasing budgets or valuations, can lead to drops in social welfare. The cause of non-monotonicity is that there can be a mismatch between valuation and budget: a high-value but low-budget bidder can be lose out to a low-value high-budget bidder.

\begin{proposition}\label{prop:sw-bidder}
	In an \fppe, adding a bidder can decrease social welfare by a factor of $\frac12$.
\end{proposition}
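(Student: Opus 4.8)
The plan is to prove this existential statement by exhibiting a one‑good family of instances, parametrized by a small $\eps > 0$, in which adding a single bidder drops social welfare from $2$ to $1 + \eps$, so that the ratio tends to $\half$. Concretely: take one good, one original bidder with $v_{11} = 2$ and a tiny budget $B_1 = \eps$; the added bidder has the \emph{lower} value $v_{21} = 1$ but a comfortable budget $B_2 = 1$. I would first pin down the \fppe before the addition: with only bidder $1$ present, an unpaced multiplier is infeasible (she would owe $2 > \eps$), so she is paced, spends her whole budget, and wins the entire good at price $p_1 = \eps$ with $\alpha_1 = \eps/2$; social welfare is $x_{11}v_{11} = 2$. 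Then I would compute the \fppe after adding bidder $2$: I claim it is $\alpha_1 = \half$, $\alpha_2 = 1$, with bidder $1$ winning only the fraction $x_{11} = \eps$ and bidder $2$ winning $x_{21} = 1 - \eps$, at the common paced bid and price $p_1 = 1$; here social welfare is $\eps\cdot 2 + (1-\eps)\cdot 1 = 1 + \eps$.

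To verify both equilibria I would invoke Corollary~\ref{cor:uniqueness} (essential uniqueness of \fppe): it suffices to check that each claimed tuple is a \bfpm satisfying the no‑unnecessary‑pacing condition, after which it is automatically \emph{the} \fppe. The ``before'' case is a one‑line check. For the ``after'' case the checks are: prices equal the maximum paced bid ($\alpha_1 v_{11} = \alpha_2 v_{21} = 1 = p_1$, a tie, so both bidders may take positive fractions); the good is fully sold ($x_{11} + x_{21} = 1$); budget feasibility (bidder $1$ pays $x_{11}p_1 = \eps = B_1$, bidder $2$ pays $(1-\eps)\cdot 1 \le 1 = B_2$); and no unnecessary pacing — bidder $1$ is paced ($\alpha_1 < 1$) yet spends her whole budget, and bidder $2$ is unpaced. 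Putting it together, adding bidder $2$ takes social welfare from $2$ down to $1 + \eps$, and letting $\eps \to 0^+$ shows the loss can be made arbitrarily close to a factor of $\half$.

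The only real subtlety, and the one point I would treat carefully, is making sure the ``after'' tuple is genuinely the \fppe rather than an artifact — in particular one should be able to rule out an \fppe with $\alpha_1 = 1$ or with bidder $1$ keeping the whole good. But this is precisely what Corollary~\ref{cor:uniqueness} delivers: once the displayed tuple is confirmed to be a \bfpm with no unnecessary pacing, it is the unique \fppe, so no competing configuration survives. (If a self‑contained argument is wanted instead, the intuition is that bidder $1$'s budget scarcity forces $\alpha_1$ down, while bidder $2$'s budget slack keeps $\alpha_2 = 1$ and hence pins the price at $v_{21} = 1$, which is exactly what prices the high‑value bidder out of all but an $\eps$‑fraction of the good.) No further estimates are needed; the factor $\half$ is the limiting value of the exhibited family.
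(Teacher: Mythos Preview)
Your construction is correct and is essentially the same approach as the paper's: a single-good instance where a high-value, budget-starved bidder is displaced by a lower-value bidder with ample budget, verified directly via the \bfpm/no-unnecessary-pacing conditions and uniqueness. The only cosmetic difference is the parametrization---the paper fixes equal budgets and sends the high value $K\to\infty$ (yielding a $\tfrac12$--$\tfrac12$ split), whereas you fix the values and send the high-value bidder's budget $\eps\to 0$ (yielding an $\eps$--$(1-\eps)$ split); both drive the welfare ratio to $\tfrac12$.
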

\proof{Proof.}
	Consider the following instance: 1 bidder, 1 good. We have $v_{11} = K$ for some parameter $K>2$ and $B_1 = 1$. The \fppe is $\alpha_1 = 1/K, x_{11} = 1$ and the social welfare is $K$.
	
	Now add bidder $2$ with $v_{21} = 2, B_2 = 1$. The new \fppe is $a_1 = \frac2K, a_2 = 1$ and $x_{11} = x_{21} = \frac12$. Social welfare now is $\frac{K}2 + \frac12$. As $K\rightarrow \infty$, the new social welfare is half of what it was before. 
\eop\endproof

\begin{proposition}\label{prop:sw-good}
	In an \fppe, adding a good weakly increases social welfare.
\end{proposition}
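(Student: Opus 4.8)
The plan is to split social welfare into two pieces that are each easy to control. For any \bfpm $(\alpha,x)$ with induced prices $p$, pure bookkeeping gives
\[
\sum_{i\in N,\, j\in M} v_{ij}x_{ij} \;=\; \sum_{j\in M} p_j \sum_{i\in N} x_{ij} \;+\; \sum_{i\in N}\sum_{j\in M}(v_{ij}-p_j)x_{ij} \;=\; \text{Rev}(\alpha,x) \;+\; \sum_{i\in N} u_i ,
\]
where $u_i := \sum_j (v_{ij}-p_j)x_{ij}$ is bidder $i$'s utility. So if $(\alpha,x)$ is the \fppe for $(N,M)$ and $(\alpha',x')$ the \fppe for $(N,M\cup\{j\})$, it suffices to show that (i) $\text{Rev}$ weakly increases and (ii) $\sum_i u_i$ weakly increases. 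Part (i) is exactly Proposition~\ref{prop:rev-good}.

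For part (ii) I would prove the stronger statement that \emph{each} $u_i$ weakly increases. Two facts drive this. First, the proof of Proposition~\ref{prop:rev-good} already establishes that adding a good can only lower pacing multipliers: $\alpha'_i \le \alpha_i$ for every $i$. Second, in any \fppe a bidder wins a good only at a price at or below her value (she wins good $k$ only when $\alpha_i v_{ik}=p_k\le v_{ik}$), so $u_i\ge 0$; and on each won good $v_{ik}-p_k=(\tfrac1{\alpha_i}-1)p_k$, hence $u_i=(\tfrac1{\alpha_i}-1)\,s_i$ with $s_i:=\sum_k p_k x_{ik}$ the spend of $i$. Now split on whether $i$ is paced. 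If $\alpha_i=1$ then $u_i=0\le u'_i$. If $\alpha_i<1$, then $\alpha'_i\le\alpha_i<1$, so by the ``no unnecessary pacing'' condition bidder $i$ spends her whole (unchanged) budget $B_i$ in \emph{both} equilibria; therefore $u_i=(\tfrac1{\alpha_i}-1)B_i\le(\tfrac1{\alpha'_i}-1)B_i=u'_i$. Summing over $i$ and combining with Proposition~\ref{prop:rev-good} gives $\sum_{i,j}v_{ij}x'_{ij}\ge\sum_{i,j}v_{ij}x_{ij}$.

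The conceptual crux is recognizing that one should \emph{not} attempt a bidder-by-bidder argument on welfare directly: a bidder's welfare contribution $\sum_j v_{ij}x_{ij}$ can genuinely fall when a good is added (e.g.\ an unpaced bidder loses a tied good to another bidder, or to the winner of the new good, once prices drop). The decomposition above resolves this — the lost value re-emerges either as seller revenue or as another bidder's value — together with the observation that \emph{utility}, unlike the welfare contribution, is monotone per bidder, because an unpaced bidder has zero utility and a paced bidder's utility $(\tfrac1{\alpha_i}-1)B_i$ only grows as $\alpha_i$ falls. The only minor technical points are that $\alpha_i>0$ always holds in an \fppe (otherwise the bidder spends nothing and would be unnecessarily paced), so the reciprocals are well defined, and that $B_i$ is literally unchanged, so ``spends her whole budget'' denotes the same number in both instances.
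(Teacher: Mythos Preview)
Your proof is correct and relies on the same three ingredients as the paper's: (i) pacing multipliers weakly decrease when a good is added (borrowed from the proof of Proposition~\ref{prop:rev-good}), (ii) a paced bidder spends her full budget and her value-per-dollar is $1/\alpha_i$, and (iii) revenue weakly increases (Proposition~\ref{prop:rev-good} itself). The organization, however, is genuinely different. The paper partitions bidders into the paced set $S$ and the unpaced set $N\setminus S$ and argues that each group's \emph{welfare contribution} weakly increases: for $S$ this is $B_i/\alpha_i$ per bidder, for $N\setminus S$ it is shown in the aggregate by noting that any revenue increase must come from $N\setminus S$ (since $S$ already spends full budget) and that their bang-per-buck is at least~$1$. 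You instead decompose $SW=\text{Rev}+\sum_i u_i$ and show each summand is monotone, with per-bidder utility monotonicity $u_i=(1/\alpha_i-1)B_i$ for paced bidders and $u_i=0$ for unpaced ones. Your route is arguably a bit cleaner: the utility statement is per-bidder with no aggregation needed, and Proposition~\ref{prop:rev-good} is invoked once as a black box rather than threaded through the $N\setminus S$ case. One small remark: your last paragraph says one should \emph{not} attempt a bidder-by-bidder argument on welfare directly, but in fact the paper does exactly that for the paced bidders (where $SW_i=B_i/\alpha_i$ is per-bidder monotone); it is only for the originally-unpaced bidders that the paper passes to an aggregate bound.
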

\proof{Proof.}
	Fix $N$, $M$, and an additional good $j \not\in M$. Let $\fppet$ be the \fppe for $N$ and $M$, $\newfppet$ the \fppe for $N, M\cup\{ j \}$ and let $S$ be the set of bidders who are paced in $\fppet$, i.e., $S = \{ i\ |\ \alpha_i < 1\}$. We will compare the contribution to social welfare of $S$ and $N \backslash S$ separately.

	First we look at the set $S$. From the proof of Proposition~\ref{prop:rev-good}, adding a good weakly decreases pacing multipliers. Since the bidders in $S$ spent their entire budget in $\fppet$, they must also spend their entire budget in $\newfppet$. The bang-per-buck of bidder $i$ is $\frac1{\alpha_i}$, since by Definition~\ref{def:bfpm} they pay $\alpha_i\cdot v_{ij}$ per unit of good $j$, and they receive $v_{ij}$ of value per unit of good $j$. Since pacing multipliers weakly decreased, the bang-per-buck of bidders in $S$ weakly increased, and as they spend their entire budget, their contribution to social welfare weakly increased.
	
	Now, we do it for the set $N \backslash S$. By Proposition~\ref{prop:rev-good}, the total revenue weakly increased. Since the bidders in $S$ spend exactly the same amount as before, the increase in revenue must have come from bidders in $N \backslash S$. Moreover, they were unpaced in $\fppet$ and so had a bang-per-buck of $1$. In $\newfppet$, they have bang-per-buck at least $1$, hence their contribution to social welfare weakly increased.
	
	Since the contribution to social welfare weakly increased for both sets $S$ and $N\backslash S$, the total social welfare weakly increased.
\eop\endproof

\begin{proposition}\label{prop:sw-budget}
	In an \fppe, increasing a bidder's budget from $B_i$ to $B'_i > B_i$ can decrease social welfare.
\end{proposition}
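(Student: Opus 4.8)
The plan is to give a small explicit counterexample, in the style of Propositions~\ref{prop:rev-value} and~\ref{prop:sw-bidder}. I would use two bidders and two goods with $v_{11}=3$, $v_{12}=0$, $v_{21}=2$, $v_{22}=2$, and $B_1=1$, and increase the second bidder's budget from $B_2=1$ to $B_2'=4$. The intuition is that when bidder~$2$ is cash-poor she can only afford good~$2$ (cheap for her, since she is heavily paced), so the high-value bidder~$1$ takes all of good~$1$; once bidder~$2$'s budget is large she becomes unpaced, bids up to her value on good~$1$, and displaces bidder~$1$ from half of it even though she values it less.

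First I would pin down the \fppe for $B_2=1$: I claim it is $\alpha_1=1/3$, $\alpha_2=1/2$, with bidder~$1$ winning all of good~$1$ and bidder~$2$ all of good~$2$, at prices $p_1=p_2=1$, so social welfare is $3+2=5$. Uniqueness is short: bidder~$1$ cannot be unpaced (she would overspend on good~$1$ alone), so she is paced and therefore spends her whole budget on good~$1$, her only valued good; likewise bidder~$2$ cannot be unpaced (she would overspend on good~$2$ alone), so she spends her whole budget, forcing $2\alpha_2=1$; finally bidder~$1$'s paced bid cannot exceed bidder~$2$'s (otherwise bidder~$2$ would need $\alpha_2>1$ to claim part of good~$1$), which pins down $\alpha_1=1/3$, $p_1=1$, and the allocation. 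Verifying the conditions of Definitions~\ref{def:bfpm} and~\ref{def:fppe} against this profile is a direct check.

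Next I would compute the \fppe for $B_2'=4$: I claim it is $\alpha_1=2/3$, $\alpha_2=1$, with good~$1$ split evenly and good~$2$ going entirely to bidder~$2$, at prices $p_1=p_2=2$, so social welfare is $\tfrac12\cdot3+\tfrac12\cdot2+2=\tfrac92$. For uniqueness: bidder~$1$ is still paced and spends her budget on good~$1$; she cannot win all of good~$1$ (that forces $\alpha_1\le1/3$, hence $\alpha_2\le1/2$, making bidder~$2$ unnecessarily paced), and bidder~$2$ cannot win all of good~$1$ (bidder~$1$ would then be unpaced and outbid her), so both win a positive fraction and $3\alpha_1=2\alpha_2=p_1$; assuming $\alpha_2<1$ then forces $\alpha_2=5/4>1$, a contradiction, so $\alpha_2=1$ and the remaining quantities follow. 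Since $\tfrac92<5$, and since by Corollary~\ref{cor:uniqueness} the \fppe welfare does not depend on tie-breaking, the budget increase strictly decreased social welfare.

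The only real work here is the uniqueness bookkeeping for the two equilibria---ruling out alternative multiplier profiles---but because each bidder effectively cares about a single ``private'' good together with the contested good~$1$, each case analysis is just a couple of lines, so I do not anticipate any genuine obstacle.
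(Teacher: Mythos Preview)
Your proof is correct and takes the same approach as the paper---an explicit counterexample where a budget increase strictly lowers welfare. The paper's instance is leaner (two bidders and a single good, with the low-value bidder's extra budget simply buying a larger share of that one good), whereas your two-good construction routes bidder~$2$'s initial budget into a private good~$2$ so that bidder~$1$ at first holds all of good~$1$; once bidder~$2$ is flush she becomes unpaced and crowds bidder~$1$ out of half of it. Both mechanisms work.

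One remark on execution: the uniqueness bookkeeping you outline is more than you need. Corollary~\ref{cor:uniqueness} already gives uniqueness of the multipliers (and the discussion following it gives uniqueness of welfare across tie-breaks), so it suffices to \emph{verify} that each proposed $(\alpha,x)$ satisfies Definitions~\ref{def:bfpm} and~\ref{def:fppe} and then invoke the corollary. In particular, your step ``forcing $2\alpha_2=1$'' in the first equilibrium silently assumes bidder~$2$ wins none of good~$1$, which you have not yet argued at that point; this wrinkle disappears if you simply check the \fppe conditions for the stated profile and cite uniqueness, rather than re-deriving the multipliers from scratch.
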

\proof{Proof.}
	Consider the following instance (which is similar to the one in Proposition~\ref{prop:sw-bidder}): 2 bidders, 1 good. We have values $v_{11} = K, v_{21}=2$ and budgets $B_1 = B_2 = 1$. The \fppe is $a_1 = \frac2K, s_2 = 1, x_{11} = x_{21} = \frac12$ with a total social welfare of $\frac{K}2 + \frac12$.
	
	Now increase bidder $2$'s budget to $B'_2 = 2$. The new \fppe is  $a_1 = \frac3K, s_2 = 1, x_{11} = \frac13, x_{21} = \frac23$ with a total social welfare of $\frac{K}3 + \frac23$. As $K\rightarrow \infty$, we lose $\frac16$ of the social welfare.
\eop\endproof

\begin{proposition}\label{prop:sw-value}
	In an \fppe, increasing a bidder $i$'s value for some good $j$ from $v_{ij}$ to $v'_{ij} > v_{ij}$ can decrease social welfare.
\end{proposition}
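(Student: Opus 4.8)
The statement only asserts that welfare \emph{can} decrease, so the plan is to exhibit one small instance and compute its \fppe before and after the value increase. Take two bidders and a single good, with $v_{11}=1$, $v_{21}=K$ for a constant $K>2$, and unit budgets $B_1=B_2=1$; we will raise $v_{11}$ from $1$ to $2$.

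For the original instance, bidder $2$ cannot be unpaced, since bidding $K$ on the good would cost her more than her budget; hence $\alpha_2<1$ and, by no-unnecessary-pacing, she must spend her whole budget. The maximal such multiplier has her winning the entire good at a price equal to her paced bid, forcing $\alpha_2 K=1$, i.e.\ $\alpha_2=1/K$ with paced bid $1$, while bidder $1$ is unpaced with $\alpha_1=1$ and the matching bid $1$. That $(1,1/K)$ is the maximal \bfpm is because any larger $\alpha_2$ makes bidder $2$ strictly outbid bidder $1$ and so forces her to take the whole good at a price above $B_2$; thus by Corollary~\ref{cor:uniqueness} these are the \fppe multipliers, and the paced bidder $2$ must receive the whole good. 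Social welfare equals $K$.

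After setting $v'_{11}=2$, the same reasoning gives $\alpha_2<1$ with bidder $2$ exhausting her budget, but now she cannot win the entire good: that would force her paced bid to equal $1$, whereas the bidder who wins nothing must be unpaced and would then bid $2>1$. So the good is split and the two paced bids tie; since bidder $1$ is unpaced and bids $2$, this pins down $\alpha_2 K=2$ (so $\alpha_2=2/K<1$), a common price of $2$, and each bidder paying her unit budget, i.e.\ $x_{11}=x_{21}=\tfrac12$ (one checks as before that this is the maximal \bfpm). Social welfare is now $2\cdot\tfrac12+K\cdot\tfrac12=\tfrac K2+\tfrac12$, strictly below $K$ for every $K>2$ --- e.g.\ with $K=4$ it drops from $4$ to $\tfrac52$ --- which proves the claim.

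The only part that needs attention is the bookkeeping that the stated multiplier pairs really are the maximal \bfpm---hence the \fppe multipliers, by Corollary~\ref{cor:uniqueness}---and that the accompanying allocations are exactly those forced by no-unnecessary-pacing; correctly handling the tie in the second instance is the one mildly delicate point, and there is no substantial obstacle beyond it.
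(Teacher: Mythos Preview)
Your approach is correct and essentially the same as the paper's: both exhibit a two-bidder, one-good counterexample in which raising the low-value bidder's valuation shifts allocation away from the high-value (but budget-constrained) bidder. The paper uses asymmetric budgets ($v_{11}=K$, $v_{21}=1\to 2$, $B_1=\tfrac12$, $B_2=2$), while you use unit budgets and swap the roles of the bidders; your instance is in fact the one the paper deploys for Proposition~\ref{prop:sw-bidder} (non-monotonicity under adding a bidder), recast as a value increase from $1$ to $2$ rather than the arrival of a new bidder.

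One arithmetic slip to fix: $2\cdot\tfrac12 + K\cdot\tfrac12 = 1 + \tfrac{K}{2}$, not $\tfrac{K}{2}+\tfrac12$; with $K=4$ welfare drops from $4$ to $3$, not to $\tfrac52$. This does not affect the conclusion, since $1+\tfrac{K}{2}<K$ still holds for every $K>2$.
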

\proof{Proof.}
	Consider the following instance: 2 bidders, 1 good. We have values $v_{11} = K, v_{21} = 1$, and budgets $B_1 = \frac12, B_2 = 2$. The \fppe is $a_1 = \frac1K, a_2 = 1$ and $x_{11} = x_{21} = \frac12$, with a total social welfare of $\frac{k+1}2$. Now increase $v'_{21} = 2$. The new \fppe consists of $a_1 = \frac2K, a_2 = 1$, and $x_{11} = \frac14, x_{21} = \frac34$ with a social welfare of $\frac{k+3}{4}$. As $K\rightarrow \infty$ we lose $\frac12$ of the social welfare.
\eop\endproof

\subsection{Sensitivity Analysis}

We now investigate the sensitivity of \fppe to budget changes. An overview of our results is shown in Table~\ref{table:sens-results}. When adding $\Delta$ to the budget of a bidder, revenue can only increase, and at most by~$\Delta$. This shows that, in a sense, an \fppe is revenue (and thus paced-welfare) stable with respect to budget changes: the change in revenue is at most the same as the change in budget. In contrast to this, \citet{conitzer2021pacing} show that in an \sppe the revenue can change drastically, at least by a factor of $100$.

\begin{table}
\centering
\begin{tabular}{|l|c|c|}
	\hline 
	& Maximal Decrease & Maximal Increase \\ 
	\hline \hline
	Revenue (additive) & $0$ & $\Delta$ \\ 
	\hline 
	Social Welfare (relative)& $\frac{1-\Delta-\Delta^2}{1+\Delta}$ & $1+\Delta$\\ 
	\hline 
\end{tabular} 
\caption{Overview of sensitivity results. For revenue, the number is the upper bound on change in revenue as a result of increasing a bidder's budget by $\Delta$, i.e., $B'_i = B_i + \Delta$. For social welfare, the number is the upper bound on relative change in social welfare as a result of a relative increase in budget of $1+\Delta$, i.e., $B'_i = (1+\Delta)\cdot B_i$.}
\label{table:sens-results}
\end{table}

Due to the nature of multiplicative pacing, additive bounds for social welfare (such as the ones given for revenue) do not immediately make sense.\footnote{To see why, take any instance $(N, M, V, B)$ with budget-constrained bidders and compare it with an instance $(N, M, 2V, B)$ where the valuations are multiplied by 2. Changing a budget will yield the same allocation for both instances (and pacing multipliers are precisely a factor 2 off), but the change in social welfare will be twice as large in the second instance.} Therefore, we focus on sensitivity results for a \emph{relative} change in budget, leading to a \emph{relative} change in social welfare.

Our social welfare proofs rely on the fact that when a budget changes by factor $1+\Delta$, pacing multipliers can only change by at most a factor $1+\Delta$.

\begin{lemma}\label{lem:pm-delta}
	In an \fppe, changing one bidder's budget from $B_i$ to $B'_i = (1 + \Delta)B_i$ for $\Delta\ge 0$ yields a modified \fppe $\newfppet$ with $\alpha \le \alpha' \le (1+\Delta)\alpha$.
\end{lemma}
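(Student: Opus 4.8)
The plan is to establish the two inequalities of Lemma~\ref{lem:pm-delta} separately, the lower bound being immediate and the upper bound the substantive part. For $\alpha \le \alpha'$: after raising only bidder $i$'s budget to $(1+\Delta)B_i$, the old pair $(\alpha,x)$ still satisfies every \bfpm condition---prices, goods-go-to-highest-bidders, no overselling, and complete sale of demanded goods are untouched, while $i$'s budget constraint is only relaxed and the others are unchanged---so $(\alpha,x)$ is a \bfpm for the new instance. By the maximality of the \fppe multipliers among \bfpm (the Pareto-dominant \bfpm lemma together with Corollary~\ref{cor:uniqueness}), the new \fppe multipliers $\alpha'$ dominate $\alpha$ componentwise, giving $\alpha \le \alpha'$.

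For $\alpha' \le (1+\Delta)\alpha$ I would argue by contradiction. Let $S = \{k \in N : \alpha'_k > (1+\Delta)\alpha_k\}$ and suppose $S \ne \emptyset$. Each $k \in S$ has $\alpha_k < \alpha'_k \le 1$, so by ``no unnecessary pacing'' in the original \fppe every $k \in S$ spends its full original budget $B_k$ in $(\alpha,x)$; hence the collective old spending of $S$ equals $\sum_j p_j \sum_{k\in S} x_{kj} = \sum_{k\in S} B_k > 0$. The key structural claim is that for every good $j$ with $p_j>0$ that $S$ wins a positive fraction of under $(\alpha,x)$, the \emph{entire} good $j$ is won by bidders of $S$ under $(\alpha',x')$ at a price $p'_j > (1+\Delta)p_j$: the original top bidder $k\in S$ on $j$ has $\alpha'_k v_{kj} > (1+\Delta)\alpha_k v_{kj} = (1+\Delta)p_j$, whereas every $l\notin S$ has $\alpha'_l v_{lj} \le (1+\Delta)\alpha_l v_{lj} \le (1+\Delta)p_j < \alpha'_k v_{kj}$, so no bidder outside $S$ is a highest bidder on $j$, and since $p'_j>0$ the good is sold completely among $S$. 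Summing over such goods, the collective new spending of $S$ is at least $\sum_{j:\,p_j>0,\ \sum_{k\in S}x_{kj}>0} p'_j > (1+\Delta)\sum_j p_j\sum_{k\in S}x_{kj} = (1+\Delta)\sum_{k\in S}B_k$ (using $\sum_{k\in S}x_{kj}\le 1$ and nonemptiness of the index set). But the collective budget of $S$ in the new instance is $\sum_{k\in S}B'_k \le (1+\Delta)\sum_{k\in S}B_k$, since only $i$'s budget changed and grew by the factor $1+\Delta$. This violates budget feasibility of $(\alpha',x')$, so $S=\emptyset$.

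The step I expect to need the most care is this structural claim---that each good originally touched by $S$ transfers \emph{entirely} to $S$ with its price inflated by strictly more than $1+\Delta$. The subtlety, compared with the add-a-good monotonicity proof (Proposition~\ref{prop:rev-good}) where nobody outside the relevant set raises a bid, is that here bidders outside $S$ may indeed have raised their multipliers, just not past the $1+\Delta$ threshold; so the right comparison is against the uniformly rescaled multipliers $(1+\Delta)\alpha$, not against $\alpha$. I would also record the harmless edge cases: goods with $p_j=0$ contribute nothing to anyone's spending and may be ignored, and a bidder $k\in S$ with $\alpha'_k=1$ still has $\alpha_k<1$ and hence is paced (and spends fully) in the original \fppe. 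The case $\Delta=0$ is trivial since the \fppe is unique.
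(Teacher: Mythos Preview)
Your argument is correct, but the paper takes a much shorter route for the upper bound. Instead of a contradiction argument on the set $S$, the paper simply observes that $\bigl(\tfrac{\alpha'}{1+\Delta},\,x'\bigr)$ is a \bfpm for the \emph{original} instance $(N,M,V,B)$: scaling every multiplier by $\tfrac{1}{1+\Delta}$ scales every price by the same factor, so with allocation $x'$ each bidder's spend drops by exactly $\tfrac{1}{1+\Delta}$; for $k\ne i$ this is at most $\tfrac{B_k}{1+\Delta}\le B_k$, and for $i$ it is at most $\tfrac{B'_i}{1+\Delta}=B_i$. Maximality of the original \fppe (Corollary~\ref{cor:uniqueness}) then gives $\alpha \ge \tfrac{\alpha'}{1+\Delta}$ directly. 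This is the same maximality lever you use for the lower bound, just applied symmetrically after rescaling.

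What your approach buys is a more concrete combinatorial picture: you identify exactly which goods must shift and why their prices inflate beyond $1+\Delta$, which is in the spirit of the aggregate-spending arguments in Propositions~\ref{prop:rev-good} and~\ref{prop:sens-rev-ub}. The paper's scaling argument, on the other hand, is a one-line reduction that avoids any case analysis or tracking of individual goods, and it makes transparent that the factor $1+\Delta$ is tight precisely because uniformly scaling multipliers is a \bfpm-preserving operation. Your proof is fine to keep, but it is worth noting the scaling shortcut since it generalizes cleanly (e.g., to simultaneous multiplicative changes in several budgets).
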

\proof{Proof.}
Fix the instance $(N, M, V, B)$ and let $\fppet$ be its \fppe. 
Let $B'_i = (1 + \Delta)B_i$ and $\newfppet$ be the \fppe for $(N, M, V, B')$. Note that $(\alpha, x)$ is a \bfpm for $(N, M, V, B')$, so $\alpha \le \alpha'$ by Corollary~\ref{cor:uniqueness}. For the other inequality, note that $(\frac{\alpha'}{1+\Delta}, x')$ forms a \bfpm for the original instance $(N, M, V, B)$. Indeed, all prices drop by exactly a factor $\frac{1}{1+\Delta}$, which means that with the same allocation $x'$, the spend for all bidders goes down by a factor $\frac{1}{1+\Delta}$ so no bidder exceeds their budget. By Corollary~\ref{cor:uniqueness} the pacing multipliers $\alpha$ of the \fppe on $(N, M, V, B)$  can only be higher, yielding $\alpha \ge \frac{\alpha'}{1+\Delta}$. Rearranging yields the claim.
\eop\endproof

To complete the proofs for social welfare, note that in an \fppe, pacing multipliers correspond to the bang-per-buck of buyers (i.e., the ratio between value and spend), so the bound in revenue change implies a bound in social welfare change.

\begin{proposition}\label{prop:sens-rev-ub}
In an \fppe, increasing a bidder $i$'s budget by $\Delta$, i.e., $B'_i = B_i + \Delta$, yields a revenue increase of at most $\Delta$.
\end{proposition}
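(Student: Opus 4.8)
The plan is to fix the two equilibria: let $(\alpha,x)=\fppet$ be the \fppe for budget $B_i$ and $(\alpha',x')=\newfppet$ the \fppe after replacing $B_i$ by $B'_i=B_i+\Delta$. By Lemma~\ref{lem:pm-delta} (or directly: $(\alpha,x)$ is still a \bfpm for the enlarged instance, and \fppe multipliers are maximal by Corollary~\ref{cor:uniqueness}) we have $\alpha\le\alpha'$, hence $p_j\le p'_j$ for every $j$. I will write each revenue as total bidder spend, $\text{Rev}(\alpha,x)=\sum_k s_k$ with $s_k=\sum_j x_{kj}p_j$, and likewise $s'_k$ for the new equilibrium. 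Since Proposition~\ref{prop:rev-budget} already gives $\text{Rev}(\alpha',x')\ge\text{Rev}(\alpha,x)$, it is enough to prove $\sum_k s'_k\le\sum_k s_k+\Delta$.

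Next I would partition the bidders according to whether the multiplier strictly increased: $S=\{k:\alpha'_k>\alpha_k\}$ and $T=\{k:\alpha'_k=\alpha_k\}$ (bidder $i$ may lie in either set, since even a larger budget can leave its multiplier pinned by ties). The set $S$ is the easy half: any $k\in S$ has $\alpha_k<\alpha'_k\le1$, so $k$ is paced in $(\alpha,x)$ and spends its full budget there; for $k\in S\setminus\{i\}$ the budget is unchanged so $s'_k\le B_k=s_k$, while if $i\in S$ then $s'_i\le B'_i=B_i+\Delta=s_i+\Delta$. Adding these, $\sum_{k\in S}(s'_k-s_k)\le\Delta$.

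The substance of the proof, and the step I expect to be the main obstacle, is showing $\sum_{k\in T}(s'_k-s_k)\le0$; I would establish the stronger item-by-item inequality $p'_j\sum_{k\in T}x'_{kj}\le p_j\sum_{k\in T}x_{kj}$ for every good $j$. It holds trivially if $T$ wins none of $j$ in $(\alpha',x')$, and in the degenerate case $p_j=0$ (then all valuations for $j$ are $0$, because \fppe multipliers are positive by no unnecessary pacing, so $p'_j=0$ as well). Otherwise some $k\in T$ wins part of $j$ in $(\alpha',x')$, which forces $\alpha'_kv_{kj}=p'_j$; together with $\alpha'_k=\alpha_k$, $p_j\ge\alpha_kv_{kj}$, and $p_j\le p'_j$ this pins $p_j=p'_j$. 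Furthermore no bidder $k'\in S$ can be a top bidder for $j$ in $(\alpha,x)$, because $\alpha_{k'}v_{k'j}<\alpha'_{k'}v_{k'j}\le p'_j=p_j$ (the case $v_{k'j}=0$ being immediate since $p_j>0$); hence $j$ is won only by $T$-bidders in $(\alpha,x)$, and being positively priced it is fully sold, so $\sum_{k\in T}x_{kj}=1$. Therefore $p'_j\sum_{k\in T}x'_{kj}\le p'_j=p_j=p_j\sum_{k\in T}x_{kj}$. Summing over $j$ gives $\sum_{k\in T}s'_k\le\sum_{k\in T}s_k$, and combining with the bound on $S$ yields $\text{Rev}(\alpha',x')\le\text{Rev}(\alpha,x)+\Delta$. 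The delicate points are precisely the tie handling in this last argument --- ruling out $S$-bidders as old top bidders on a good won by $T$ --- and keeping the zero-price goods out of the way; placing bidder $i$ in the correct part of the partition is what makes both halves come out cleanly.
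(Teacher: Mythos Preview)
Your proof is correct and follows essentially the same approach as the paper: partition bidders into those whose multiplier strictly increased ($S$) versus stayed the same ($T$), bound the $S$-contribution by $\Delta$ via the old full-budget spend, and show the $T$-contribution cannot increase. Your treatment of the $T$ side is in fact more carefully worked out than the paper's (you give an explicit item-by-item inequality and handle the $p_j=0$ and $v_{k'j}=0$ edge cases), but the structure and key ideas are the same.
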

\proof{Proof.}
	Fix the instance $(N, M, V, B)$, and let $B'$ be the budget profile where $B'_i = B_i + \Delta$ for some bidder $i$. Let $\fppet$ be the \fppe on $(N, M, V, B)$, and let $\newfppet$ be the \fppe on $(N, M, V, B')$. Since $\fppet$ is a \bfpm for the new instance, we have $\alpha' \ge \alpha$, that is, the new pacing multipliers are weakly higher than the old pacing multipliers. Let $S_+$ be the bidders for whom $\alpha'_k > \alpha_k$, and let $\text{Rev}_{+}^\text{old}$ be the revenue from them in $\fppet$. Since the pacing multipliers for all bidders in $S_+$ strictly increased, they must have had $\alpha_k < 1$, so by the definition of an \fppe they must have spent their entire budget and $\text{Rev}_{+}^\text{old} = \sum_{k\in S_+} B_k$. In the new \fppe, they cannot spend more than their budget, so $\text{Rev}_{+}^\text{new} \le \sum_{k\in S_+} B'_k \le \left(\sum_{k\in S_+} B_k\right) + \Delta = \text{Rev}_{+}^\text{old} + \Delta$.
    
What remains to be shown is that the revenue from the bidders $S_-$ with $\alpha'_k = \alpha_k$ cannot have gone up. If there were any goods that $S_+$ and $S_-$ were tied for, then after increasing the pacing multipliers of $S_+$, the prices of those goods increased and $S_+$ won all of them. Moreover, the prices of goods that $S_-$ as a set still wins have not changed. Thus $S_-$ is winning a subset of the goods they won previously at the same per-unit cost, hence their spend cannot have gone up.
\eop\endproof

Along with Proposition~\ref{prop:rev-budget}, this shows that when a bidder's budget increases by $\Delta$, $\text{Rev}^\text{new} - \text{Rev}^\text{old} \in [0, \Delta]$. It is not difficult to see that these extremes can both be attained: for the lower bound, increasing the budget of a non-budget-constrained bidder will not change the \fppe, hence revenue is unchanged. On the upper bound, take 1 bidder, 1 good, $v_{11} = 2\Delta, B_1 = \Delta$. Setting $B'_1 = B_i + \Delta$ will increase revenue by $\Delta$.

From Proposition~\ref{prop:sw-budget} below, we know that social welfare can decrease when we increase a bidder's budget. The following lemma bounds that loss. In the following, let $SW^\text{old}$ be the social welfare prior to changing the budget, and $SW^\text{new}$ be the social welfare after changing the budget.

\begin{proposition}\label{prop:sens-sw-lb}
	In an \fppe, changing one bidder's budget from $B_i$ to $B'_i = (1 + \Delta)B_i$ for $\Delta\ge 0$ yields $SW^\text{new} \ge \left(\frac{1-\Delta-\Delta^2}{1+\Delta}\right)SW^\text{old}$.
\end{proposition}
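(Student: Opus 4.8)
The plan is to first reduce to the case where bidder $i$ is budget-constrained in the original \fppe: otherwise, increasing $B_i$ leaves the old \fppe a valid and (by Corollary~\ref{cor:uniqueness}) the unique \fppe of the new instance, so $SW$ is unchanged and the bound is trivial. Write $\fppet$ for the old \fppe and $\newfppet$ for the \fppe after the budget increase. Two facts do the work: Lemma~\ref{lem:pm-delta}, which gives $\alpha\le\alpha'\le(1+\Delta)\alpha$; and Proposition~\ref{prop:rev-budget}, which gives $\text{Rev}(\alpha',x')\ge\text{Rev}(\alpha,x)$. I will use the elementary identity, valid in any \fppe, that bidder $k$'s realized value equals $\text{spend}_k/\alpha_k$ (her winning paced bids equal $p_j$, so $v_{kj}=p_j/\alpha_k$ on goods she wins), hence $SW=\text{Rev}+\sum_k u_k$ where $u_k\ge 0$ is bidder $k$'s utility; note $u_k=0$ when $\alpha_k=1$ and $u_k=B_k(1/\alpha_k-1)$ when $\alpha_k<1$, since a paced bidder spends exactly her budget.

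With this decomposition, $SW^\text{new}-SW^\text{old}\ge\sum_k(u'_k-u_k)$ by revenue monotonicity, so it suffices to lower bound the change in total utility. By the lower bound in Lemma~\ref{lem:pm-delta}, a bidder unpaced in the old \fppe is unpaced in the new one, so only bidders in $\{i\}\cup\{k\ne i:\alpha_k<1\}$ can have positive new utility, and any $k\ne i$ with $\alpha'_k<1$ still spends exactly $B_k$. For such a $k$, the upper bound $\alpha'_k\le(1+\Delta)\alpha_k$ gives $u'_k=B_k(1/\alpha'_k-1)\ge\tfrac1{1+\Delta}u_k-\tfrac{\Delta}{1+\Delta}B_k$. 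For bidder $i$ the factor-$(1+\Delta)$ growth of the budget cancels the factor-$(1+\Delta)$ growth of the multiplier: if $\alpha'_i<1$ then $u'_i=(1+\Delta)B_i(1/\alpha'_i-1)\ge B_i/\alpha_i-(1+\Delta)B_i=u_i-\Delta B_i$.

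The one genuinely delicate case — the crux of the argument — is a bidder $k\ne i$ that is budget-constrained in the old \fppe but becomes unpaced in the new one: her utility drops all the way from $u_k>0$ to $0$, and a priori $u_k$ could be arbitrarily large. Here the \emph{upper} bound in Lemma~\ref{lem:pm-delta} saves the day: $\alpha'_k=1\le(1+\Delta)\alpha_k$ forces $\alpha_k\ge\tfrac1{1+\Delta}$, hence $u_k=B_k(1/\alpha_k-1)\le\Delta B_k$, so the lost utility is small; the same computation also covers the subcase $\alpha'_i=1$, so in all cases $u'_i\ge u_i-\Delta B_i$. Assembling the three estimates, every negative term of $SW^\text{new}-SW^\text{old}$ is at most a $\Delta$-multiple of the old realized value $B_k/\alpha_k$ of a distinct budget-constrained bidder (using $B_k\le B_k/\alpha_k$ and $B_i\le B_i/\alpha_i$), and $\sum_{k:\alpha_k<1}B_k/\alpha_k\le SW^\text{old}$; the routine arithmetic then gives $SW^\text{new}\ge(1-\Delta)SW^\text{old}$, which in particular implies the claim since $\tfrac{1-\Delta-\Delta^2}{1+\Delta}\le 1-\Delta$ for $\Delta\ge 0$. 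An equivalent bookkeeping of the same ingredients, in the spirit of ``the revenue sensitivity bound implies the welfare bound,'' is to split $SW^\text{new}$ according to whether each bidder is paced in the new \fppe, bound the paced part below by $\tfrac1{1+\Delta}$ of its old value (with $i$'s term left unscaled), and apply Propositions~\ref{prop:rev-budget} and~\ref{prop:sens-rev-ub} to the unpaced part.
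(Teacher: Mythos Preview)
Your proof is correct and in fact yields a strictly sharper bound than the one stated: you obtain $SW^\text{new}\ge (1-\Delta)\,SW^\text{old}$, and then observe that $\tfrac{1-\Delta-\Delta^2}{1+\Delta}\le 1-\Delta$ for $\Delta\ge 0$.

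Your route, however, is genuinely different from the paper's. The paper partitions bidders according to their pacing status in the \emph{new} \fppe: for the paced set $S_p$ it argues that each bidder spends the same budget as before but with bang-per-buck at least $\tfrac{1}{1+\Delta}$ of its old value, while for the unpaced set $S_1$ it argues (via Proposition~\ref{prop:rev-budget} and the fact that $i$'s spend rose by at most $\Delta B_i$) that the total spend of $S_1$ fell by at most $\Delta B_i\le \Delta\,SW^\text{old}$, and then equates spend with welfare for unpaced bidders. Combining the two pieces gives the paper's bound $\tfrac{1}{1+\Delta}SW^\text{old}-\Delta\,SW^\text{old}$. By contrast, you decompose $SW=\text{Rev}+\sum_k u_k$, use revenue monotonicity once to discard the revenue term, and then bound every bidder's utility drop individually by $\Delta$ times her old realized value $B_k/\alpha_k$; summing over the budget-constrained bidders and using $\sum_{k:\alpha_k<1} B_k/\alpha_k\le SW^\text{old}$ finishes. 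The case you single out---a bidder paced in the old \fppe who becomes unpaced in the new one---is exactly the place where the paper's argument is most delicate (there, the identification of old spend with old welfare contribution is not immediate), and your use of the upper bound $\alpha'_k\le(1+\Delta)\alpha_k$ to force $\alpha_k\ge\tfrac{1}{1+\Delta}$ handles it cleanly. The payoff of your approach is both a simpler accounting (no cross-group spend comparison) and the tighter factor $(1-\Delta)$; the paper's decomposition, on the other hand, mirrors the structure used in the companion upper bound (Proposition~\ref{prop:sens-sw-ub}) and so gives a more uniform presentation across the two directions. Your closing remark correctly identifies the paper's decomposition as an ``equivalent bookkeeping,'' though note that it does not recover your sharper constant.
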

\proof{Proof.}
Let $i$ be the bidder with $B'_i = (1+\Delta)B_i$. Let $\fppet$ be the \fppe before the change, and let $\newfppet$ be the \fppe after the budget change. Let $S_p$ be the set of bidders who are paced in $\alpha'$, and let $S_1 = N \backslash S_p$ the unpaced bidders. We will lower-bound the new revenue from $S_p$ and $S_1$ separately.

For the bidders in $S_p$, they spend their entire budget in both $\fppet$ and $\newfppet$: we have that $\alpha \le \alpha'$ by Lemma~\ref{lem:pm-delta} and thus bidders that are paced in $\alpha'$ are also paced in $\alpha$. By the definition of \fppe that means they spend their entire budget. Moreover, by Lemma~\ref{lem:pm-delta}, their pacing multipliers cannot have gone up by more than $1+\Delta$, hence their bang-per-buck is at least $\frac{1}{1+\Delta}$ that in $\fppet$. Combining these statements, in $\newfppet$ bidders in $S_p$ spend at least as much as in $\fppet$, and their bang-per-buck is at least $\frac{1}{1+\Delta}$ times that in $\fppet$, hence their contribution to social welfare $SW^\text{new}_k \ge \frac{SW^\text{old}_k}{1+\Delta}$ for each $k\in S_p$, and therefore $SW^\text{new}_{p} \ge \frac{SW^\text{old}_{p}}{1+\Delta}$ .

For the set $S_1$ of bidders who are unpaced in $\alpha'$, their combined decrease in spend can be at most $\Delta\cdot B_i$: the total spend cannot have decreased by Proposition~\ref{prop:rev-budget}, bidder $i$'s spend increased by at most $\Delta\cdot B_i$, the paced bidders (excluding bidder $i$) in $\alpha'$ were also all paced in $\alpha$ so their spend stayed constant, hence the largest possible reduction in spend by unpaced bidders in $\alpha'$ is $\Delta\cdot B_i \le \Delta \cdot SW^\text{old}$. For unpaced bidders, spend equals contribution to social welfare, so we have $SW^\text{new}_{1} \ge SW^\text{old}_{1} - \Delta \cdot SW^\text{old}$.

Combining everything, we have $SW^\text{new} = SW^\text{new}_{p} + SW^\text{new}_{1} \ge \frac{SW^\text{old}_{p}}{1+\Delta} + SW^\text{old}_{1} - \Delta \cdot SW^\text{old} \ge \frac{SW^\text{old}}{1+\Delta} - \Delta\cdot SW^\text{old} = \left(\frac{1-\Delta-\Delta^2}{1+\Delta}\right)SW^\text{old}$.
\eop\endproof

\begin{proposition}\label{prop:sens-sw-ub}
	In an \fppe, changing one bidder's budget from $B_i$ to $B'_i = (1 + \Delta)B_i$ for $\Delta\ge 0$ yields $SW^\text{new} \le (1+\Delta)SW^\text{old}$.
\end{proposition}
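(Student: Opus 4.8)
The plan is to compare the \fppe $\fppet$ for the original budget profile with the \fppe $\newfppet$ obtained after replacing $B_i$ by $(1+\Delta)B_i$, exploiting the observation recorded just before Proposition~\ref{prop:sens-sw-lb} that in any \bfpm a bidder's contribution to social welfare equals her total spend divided by her pacing multiplier. Write $s_k=\sum_j p_j x_{kj}$ and $s'_k=\sum_j p'_j x'_{kj}$ for the two spends, so that $SW^\text{old}_k = s_k/\alpha_k$ and $SW^\text{new}_k = s'_k/\alpha'_k$; it then suffices to group the bidders and bound $\sum_k SW^\text{new}_k$ by $(1+\Delta)\sum_k SW^\text{old}_k$ group by group. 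From Lemma~\ref{lem:pm-delta} we have $\alpha\le\alpha'\le(1+\Delta)\alpha$; in particular, since $\alpha\le\alpha'$, any bidder paced in $\alpha'$ is also paced in $\alpha$, and by the no-unnecessary-pacing condition a paced bidder spends her entire budget in the corresponding equilibrium. (No pacing multiplier is ever $0$: a paced bidder with multiplier $0$ would win nothing at a positive price yet be required to exhaust a positive budget, so all divisions below are legitimate.)

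First I would partition the bidders into $S^+=\{k:\alpha'_k>\alpha_k\}$ and $S^-=N\setminus S^+$, and split $S^-$ further into $S^-_p=\{k\in S^-:\alpha_k<1\}$ and $S^-_u=\{k\in S^-:\alpha_k=\alpha'_k=1\}$. For $k\in S^+$ we have $\alpha_k<1$, so $k$ spent her whole budget in $\fppet$ and $SW^\text{old}_k=B_k/\alpha_k$, while $SW^\text{new}_k=s'_k/\alpha'_k\le B'_k/\alpha'_k$; using $\alpha'_k\ge\alpha_k$ and $B'_k\le(1+\Delta)B_k$ (with $B'_k=B_k$ unless $k=i$) this gives $SW^\text{new}_k\le(1+\Delta)SW^\text{old}_k$. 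For $k\in S^-_p$ the multiplier is unchanged and below $1$, so $k$ spends her full budget in both equilibria, hence $SW^\text{new}_k=B'_k/\alpha_k\le(1+\Delta)B_k/\alpha_k=(1+\Delta)SW^\text{old}_k$. Summing, $\sum_{k\in S^+\cup S^-_p}SW^\text{new}_k\le(1+\Delta)\sum_{k\in S^+\cup S^-_p}SW^\text{old}_k$.

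The set $S^-_u$ of bidders unpaced in both equilibria is where the real work lies, and I expect it to be the main obstacle: for such bidders welfare equals spend and is not pinned down by a multiplier, and a priori bidder $i$'s enlarged budget could let her spend more here. The key input is the argument already carried out in the proof of Proposition~\ref{prop:sens-rev-ub} (which uses only $\alpha'\ge\alpha$): the unchanged-multiplier bidders $S^-$ collectively win, in $\newfppet$, a subset of the goods they won in $\fppet$ at unchanged per-unit prices, so $\sum_{k\in S^-}s'_k\le\sum_{k\in S^-}s_k$. Since every $k\in S^-_p$ spends exactly her budget in both instances, $\sum_{k\in S^-_p}s'_k=\sum_{k\in S^-_p}B'_k\ge\sum_{k\in S^-_p}B_k=\sum_{k\in S^-_p}s_k$, and subtracting yields $\sum_{k\in S^-_u}s'_k\le\sum_{k\in S^-_u}s_k$; as $\alpha_k=\alpha'_k=1$ on $S^-_u$ this is exactly $\sum_{k\in S^-_u}SW^\text{new}_k\le\sum_{k\in S^-_u}SW^\text{old}_k\le(1+\Delta)\sum_{k\in S^-_u}SW^\text{old}_k$. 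Adding this to the bound of the previous paragraph and using that $S^+,S^-_p,S^-_u$ partition $N$ gives $SW^\text{new}\le(1+\Delta)SW^\text{old}$. The remaining loose ends are routine: checking the three possible placements of bidder $i$ (each of the three bounds above only weakens when $k=i$), and the elementary price/allocation monotonicity facts for $S^-$, which are precisely those already established for Proposition~\ref{prop:sens-rev-ub}.
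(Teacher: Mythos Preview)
Your proof is correct and follows essentially the same route as the paper: partition bidders by whether their multiplier increased or stayed the same (splitting the latter into paced and unpaced), use the identity $SW_k=s_k/\alpha_k$ together with $\alpha'\ge\alpha$ to bound the first two groups, and invoke the subset-of-goods-at-unchanged-prices argument from Proposition~\ref{prop:sens-rev-ub} (then subtract off the paced-unchanged bidders) to control the unpaced group. The only cosmetic difference is that the paper treats bidder $i$ as its own singleton case, whereas you fold $i$ into whichever of the three groups she lands in and note that the per-group bounds only weaken at $k=i$; your handling is arguably cleaner since it avoids any ambiguity about whether $i$ is double-counted.
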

\proof{Proof.}
Let $i$ be the bidder whose budget increases from $B_i$ to $(1+\Delta)B_i$. Let $\fppet$ be the \fppe before the change, and $\newfppet$ be the \fppe after the budget change. By Lemma~\ref{lem:pm-delta}, increasing a budget can only increase pacing multipliers. Let $S_+$ be the set of bidders whose pacing multiplier increased (for convenience excluding bidder $i$), let $S_-$ be the set who had pacing multipliers strictly lower than $1$ and that did not change, and let $S_1$ be the set of bidders who were unpaced in $\fppet$. Let $SW^\text{old}$ be the old social welfare, and $SW^\text{new}$ be the new one. Define $SW_i$, $SW_+$, $SW_-$, and $SW_1$ as the contribution to social welfare of bidder $i$, bidders $S_+$, $S_-$, and $S_1$ respectively.

We use extensively that at pacing multiplier $\alpha_k$, the spend $s_k = \alpha_k \cdot SW_k$.

For bidder $i$, we have $SW^\text{new}_i \le (1+\Delta)SW^\text{old}_i$: The pacing multiplier of $i$ can only have increased, so the bang-per-buck can only have decreased. Spend increased at most by $(1+\Delta)$, bang-per-buck was at most the same, hence $SW$ cannot exceed $1+\Delta$ more.

For bidders $S_+$, we have $SW^\text{new}_+ < SW^\text{old}_+$: Their spend cannot have increased as they spent their budget completely in $\fppet$. Meanwhile, their bang-per-buck strictly decreased due to increasing pacing multipliers.

For bidders $S_-$, we have $SW^\text{new}_- = SW^\text{old}_-$: Since they were and are paced, they must spend their entire budget. Since their pacing multiplier has not changed, their bang-per-buck stayed the same. Thus their contribution to $SW$ stayed the same.

Finally, for bidders $S_1$, we have $SW^\text{new}_1 \le SW^\text{old}_1$:
First note that the total spend of bidders in $S_- \cup S_1$ cannot have increased; their pacing multipliers stayed the same, so this would require them to win strictly more of some good they were previously sharing with a bidder not in $S_- \cup S_1$, but they are no longer winning \emph{any} such goods. Since we have seen that the spend of bidders in $S_-$ stayed the same, the spend of bidders in $S_1$ cannot have increased. Since their bang-per-buck is $1$, their $SW$ cannot have increased.

Summing over all groups: $SW^\text{new} = SW^\text{new}_i + SW^\text{new}_+ + SW^\text{new}_- + SW^\text{new}_1 \le (1+\Delta)SW^\text{old}_i + SW^\text{old}_+ + SW^\text{old}_- + SW^\text{old}_1 \le (1+\Delta)SW^\text{old}$.
\eop\endproof

\section{Algorithms via Convex Programming}
We now turn to computing the \fppe corresponding to an instance and adapt a well-known method for competitive
equilibria. Solutions to the Eisenberg-Gale convex program for Fisher markets
with quasi-linear utilities correspond exactly to \fppe in our setting.
\citet{cole2017convex} give the following primal and dual convex programs  for computing a solution to a Fisher market with
quasi-linear utilities. 

\noindent\begin{minipage}{.50\linewidth}
  \begin{gather}
    \max_{x\geq 0,\delta \geq 0, u} \quad \sum_{i} B_i\log(u_i) - \delta_i \nonumber\\
    u_i \leq \sum_j x_{ij} v_{ij} + \delta_i, \forall i  \label{eq:utility}\\
    \sum_i x_{ij} \leq 1, \forall j  \label{eq:supply}
  \end{gather} 
\end{minipage}
\noindent\begin{minipage}{.50\linewidth}
\begin{equation*}
  \begin{gathered}
    \min_{p\geq 0,\beta \geq 0} \quad \sum_{j} p_j - \sum_i B_i\log(\beta_i) \\
    \forall i, p_j \geq v_{ij}\beta_i \\
    \beta_i  \leq 1
  \end{gathered} 
\end{equation*}
\end{minipage}

We show the primal convex program on the left, which we denote by {\mcp}, and the corresponding
dual convex program on the right. The variables $x_{ij}$ denote the amount of item $j$ that bidder $i$ wins.
The leftover budget is denoted by $\delta_i$, it arises from the dual program:
it is the dual variable for the dual constraint $\beta_i \leq 1$, which
constrains bidder $i$ to paying at most a cost-per-utility rate of $1$.

The dual variables $\beta_i$, $p_j$ correspond to constraints \eqref{eq:utility}
and \eqref{eq:supply}, respectively. They can be interpreted as follows:
$\beta_i$ is the inverse bang-per-buck: $\min_{j :
  x_{ij}>0}\frac{p_j}{v_{ij}}$ for buyer $i$, and $p_j$ is the price of good~$j$.

We now show via Fenchel duality that {\mcp} computes an {\fppe}. 
Informally, the result follows because $\beta_i$ specifies a single utility rate per bidder,
duality guarantees that any item allocated to $i$ has exactly rate
$\beta_i$, and thus since {\mcp} is known to compute a competitive equilibrium
we have shown that it computes an \erce. Theorem~\ref{thm:erce iff fppe} then
gives the result.
\begin{theorem}
  An optimal solution to {\mcp} corresponds to an \fppe
  with pacing multiplier $\alpha_i = \beta_i$ and allocation $x_{ij}$,
  and vice versa.
  \label{thm:convex_program_fppe}
\end{theorem}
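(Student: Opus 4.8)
The plan is to write down the Karush--Kuhn--Tucker (KKT) conditions of $(\mcp)$ and match them, through Theorem~\ref{thm:erce iff fppe}, to the defining conditions of an \fppe. Since $(\mcp)$ is a convex program (concave objective, affine constraints) that attains its optimum, strong duality holds and the KKT conditions are necessary and sufficient for optimality. Let $\beta_i,p_j\ge0$ be the multipliers of \eqref{eq:utility} and \eqref{eq:supply}, and $\mu_i,\lambda_{ij}\ge0$ those of $\delta_i\ge0$ and $x_{ij}\ge0$. Stationarity in $u_i$ gives $\beta_i=B_i/u_i$, so $\beta_i>0$ (as $B_i>0$; note $u_i=0$ is excluded since it makes the objective $-\infty$) and $u_i=B_i/\beta_i$; stationarity in $\delta_i$ gives $\beta_i=1-\mu_i\le1$ together with $\mu_i\delta_i=0$; stationarity in $x_{ij}$ gives $p_j=\beta_i v_{ij}+\lambda_{ij}\ge\beta_i v_{ij}$ together with $\lambda_{ij}x_{ij}=0$, so $p_j=\beta_i v_{ij}$ whenever $x_{ij}>0$. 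Complementary slackness for \eqref{eq:utility} (using $\beta_i>0$) gives $u_i=\sum_j x_{ij}v_{ij}+\delta_i$, and for \eqref{eq:supply} gives $\sum_i x_{ij}=1$ whenever $p_j>0$.

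Next I would read these conditions as an \erce with bidder rate $1/\beta_i$. By \citet{cole2017convex}, an optimal solution of $(\mcp)$ is already a competitive equilibrium with prices $p_j$ (this also drops straight out of the KKT conditions above), so only the two extra \erce conditions remain. First, if $x_{ij}>0$ then $p_j=\beta_i v_{ij}$, i.e.\ $v_{ij}/p_j=1/\beta_i$, a single rate per bidder. Second, a bidder's total spend equals $\sum_j x_{ij}p_j=\beta_i\sum_j x_{ij}v_{ij}=\beta_i(u_i-\delta_i)=B_i-\beta_i\delta_i$, so if she does not spend her whole budget then $\delta_i>0$, hence $\mu_i=0$, hence $\beta_i=1$ and her rate is $1$. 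Thus the optimal solution is an \erce, and Theorem~\ref{thm:erce iff fppe} turns it into an \fppe; the pacing multiplier is the reciprocal of the rate, namely $\beta_i$, matching the statement (and indeed $\beta_i\in(0,1]$, as a multiplier should be).

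For the converse I would start from an \fppe $(\alpha,x)$, which by Theorem~\ref{thm:erce iff fppe} is an \erce with prices $p_j=\max_i\alpha_i v_{ij}$ and rate $1/\alpha_i$, and build a primal--dual pair satisfying KKT: set $\beta_i:=\alpha_i$, keep $x$ and $p$, and let $u_i:=B_i/\alpha_i$, $\delta_i:=u_i-\sum_j x_{ij}v_{ij}$, $\mu_i:=1-\alpha_i$, $\lambda_{ij}:=p_j-\alpha_i v_{ij}$. Here $\delta_i\ge0$ follows from budget-feasibility of the \fppe (since $\alpha_i\sum_j x_{ij}v_{ij}=\sum_j x_{ij}p_j\le B_i$), $\sum_i x_{ij}\le1$ from no-overselling, $\lambda_{ij}\ge0$ from the price definition, and $\mu_i\ge0$ from $\alpha_i\le1$; the stationarity equalities hold by construction; and the complementarities $\lambda_{ij}x_{ij}=0$, $\mu_i\delta_i=0$, $p_j(\sum_i x_{ij}-1)=0$ hold because goods go to the highest bidders, because an underspending bidder has $\alpha_i=1$ (no unnecessary pacing), and because demanded goods are sold completely, respectively. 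Since KKT is sufficient for this convex program, the constructed solution is optimal for $(\mcp)$.

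I expect the main difficulty to be bookkeeping rather than depth: setting up the KKT system cleanly---especially the $\delta_i$-stationarity that yields the ``$\beta_i\le1$'' complementary slackness, which is precisely what encodes ``no unnecessary pacing''---and keeping straight that the $(\mcp)$-multiplier $\beta_i$ is the \emph{inverse} bang-per-buck (hence equals the pacing multiplier), while the \erce rate is $1/\beta_i$. The remaining care is in the degenerate cases: $p_j=0$ forces $v_{ij}=0$ for every $i$, so prices still equal $\max_i\alpha_i v_{ij}$; and $u_i=0$ is ruled out by $B_i>0$. That $(\mcp)$ attains an optimum and satisfies strong duality is standard and already implicit in \citet{cole2017convex}, so it is not the crux.
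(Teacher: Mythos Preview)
Your proof is correct and follows precisely the route the paper outlines in the paragraph preceding the theorem: derive the KKT system, observe that it yields a competitive equilibrium with a single rate $1/\beta_i$ per bidder and the ``$\beta_i=1$ if underspending'' condition, conclude \erce, and invoke Theorem~\ref{thm:erce iff fppe}. The paper's \emph{written} proof, however, does not actually go through \erce: it lists the same KKT conditions and then verifies the \fppe axioms of Definitions~\ref{def:bfpm} and~\ref{def:fppe} directly (valid allocation, budget feasibility via the identity $\sum_j p_jx_{ij}=B_i-\delta_i$ when $\beta_i=1$ and $=B_i$ when $\delta_i=0$, winners have the highest paced bid, and $\delta_i>0\Rightarrow\alpha_i=1$). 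For the converse it likewise plugs an \fppe straight into the KKT conditions rather than passing through \erce. So the two arguments differ only organizationally: you factor the verification through Theorem~\ref{thm:erce iff fppe}, which is slightly more economical, while the paper re-derives the \fppe conditions from scratch, making the proof self-contained. Your handling of the bookkeeping (the $\delta$-stationarity giving $\beta_i\le1$, the spend identity $\sum_j x_{ij}p_j=B_i-\beta_i\delta_i$, and the degenerate $p_j=0$ case) is clean and matches what the paper needs.
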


% \section{Direct Proof of {\mcp} Correctness}
\proof{Proof.}
  We start by listing the primal \kkt conditions:

  \begin{multicols}{2}
  \begin{enumerate}
  \item $\frac{B_i}{u_i}  = \beta_i \Leftrightarrow
    u_i  = \frac{B_i}{\beta_i}$ \label{kkt:utility}
  \item $\beta_i \leq 1$ \label{kkt:quasi_linear}
  \item $\beta_i \leq \frac{p_j}{v_{ij}}$ \label{kkt:utility_rate}
  \item $x_{ij}, \delta_i, \beta_i, p_j \geq 0$ \label{kkt:nonnegative}
  \item $p_j>0 \Rightarrow \sum_i x_{ij} = 1$ \label{kkt:full_allocation}
  \item $\delta_i>0 \Rightarrow \beta_i = 1$ \label{kkt:v_pos}
  \item $x_{ij}>0 \Rightarrow \beta_i=\frac{p_j}{v_{ij}}$ \label{kkt:x_pos}
  \end{enumerate}
  \end{multicols}
  \newcommand{\kktref}[1]{\kkt condition \eqref{#1}}
  \newcommand{\kktrefs}[1]{\kkt conditions \eqref{#1}}

  It is easy to see that $x_{ij}$ is a valid allocation: {\mcp} has the exact
  packing constraints. Budgets are also satisfied (here we may assume $u_i>0$
  since otherwise budgets are satisfied because the buyer wins no goods): by
  \kktrefs{kkt:utility} and \eqref{kkt:x_pos} we have that for any good $j$ that
  bidder $i$ is allocated part of:
  \begin{align*}
    \frac{B_i}{u_i} = \frac{p_j}{v_{ij}} \Rightarrow 
    \frac{B_iv_{ij}x_{ij}}{u_i} = p_j x_{ij}\,.
  \end{align*}
  If $\delta_i = 0$ then summing over all $j$ gives
  \begin{align*}
    \sum_j p_j x_{ij} = B_i\frac{\sum_jv_{ij}x_{ij}}{u_i} = B_i\,.
  \end{align*}
  This part of the budget argument is exactly the same as for the standard
  Eisenberg-Gale proof~\citep{nisan2007algorithmic}. Note that constraint
  \eqref{eq:utility} of {\mcp} always holds exactly since the objective is strictly
  increasing in $u_i$. Thus $\delta_i=0$ denotes full budget expenditure. If $\delta_i>0$
  then \kktref{kkt:v_pos} implies that $u_i = B_i$. 
  This gives:
  \begin{align*}
    \sum_j p_j x_{ij} + \delta_i = B_i\frac{\sum_jv_{ij}x_{ij}}{u_i} + \frac{B_i}{u_i}\delta_i = B_i\,.
  \end{align*}
  This shows that $\delta_i>0$ denotes the leftover budget.

  If bidder $i$ is winning some of good $j$ ($x_{ij}>0$) then \kktref{kkt:x_pos}
  implies that the price on good $j$ is $\alpha_iv_{ij}$, so bidder $i$ is
  paying their bid as is necessary in a first price auction. Bidder $i$ is also
  guaranteed to be among the highest bids for good $j$: \kktrefs{kkt:utility_rate} and \eqref{kkt:x_pos} guarantee that  $\alpha_iv_{ij}=p_j\geq
  \alpha_{i'}v_{i'j}$ for all $i'$.

  Finally, each bidder either spends their entire budget or is unpaced:
  \kktref{kkt:v_pos} says that if $\delta_i>0$ (that is, some budget is leftover)
  then $\beta_i=\alpha_i=1$, so the bidder is unpaced. 
  
  Now we show that any \fppe satisfies the \kkt conditions for {\mcp}. We set
  $\beta_i = \alpha_i$ and use the allocation $x$ from the \fppe. We set $\delta_i=0$
  if $\alpha < 1$, otherwise we set it to $B_i - \sum_{j}x_{ij}v_{ij}$. We set
  $u_i$ equal to the utility of each bidder.
  \kktref{kkt:utility} is satisfied since each bidder either gets a utility rate
  of $1$ if they are unpaced and so $u_i = B_i$ or their utility rate is
  $\alpha_i$ so they spend their entire budget for utility $B_i/\alpha_i$.
  \kktref{kkt:quasi_linear} is satisfied since $\alpha_i\in [0,1]$.
  \kktref{kkt:utility_rate} is satisfied since each good bidder $i$ wins has
  price-per-utility $\alpha_i=\frac{p_j}{v_{ij}}=\beta_i$, and every other good
  has a higher price-per-utility.
  \kktrefs{kkt:nonnegative} and \eqref{kkt:full_allocation} are
  trivially satisfied by the definition of \fppe. \kktref{kkt:v_pos} is satisfied
  by our solution construction. \kktref{kkt:x_pos} is satisfied because a bidder
  $i$ being allocated any amount of good $j$ means that they have a winning bid,
  and their bid is equal to $v_{ij}\alpha_i$.
\eop\endproof

This shows that we can use {\mcp} to compute an \fppe. \citet{cole2017convex} show that {\mcp} admits rational equilibria, and thus an \fppe can be computed in polynomial time with the ellipsoid method as long as all inputs are rational. Furthermore, the relatively simple structure of {\mcp} means that it can easily be solved via standard convex-programming packages such as CVX~\citep{gb08,grant2008cvx} or scalable first-order methods.

\citet{borgs2007dynamics} also gave a convex program. However, that convex program is a set of feasibility constraints (in fact, those constraints correspond to the \kkt conditions of \mcp{}). Because of this, their convex program cannot easily be solved via first-order methods, which are suitable for very large-scale problems. Similarly, online convex optimization can potentially be applied to \mcp{} (or its dual) for solving online variants of \fppe, whereas one cannot do this for the convex program of \citet{borgs2007dynamics}. See, e.g., \citet{gao2020first,gao2020infinite} for recent work that builds on our results in this direction.

The equivalence between solutions to {\mcp} and \fppe provides an alternative
view on many of our results. Since Theorem~\ref{thm:convex_program_fppe} can be
shown directly via Fenchel duality, it allows us to
prove via duality theory that \fppe corresponds to \erce, and that \fppe always
exists.\footnote{{\mcp} is easily seen to always be feasible and the feasible set is
compact. Thus {\mcp} always attains an optimal solution. By
Theorem~\ref{thm:convex_program_fppe} that solution will be an \fppe.}

\section{Experiments}
In previous sections, we have shown that \fppe have many satisfying theoretical properties, such as guaranteed existence, uniqueness, and polynomial time computability. However, the empirical question remains: if participating in a single first price auction has such bad properties that industry shifted away from them in the past, does that assessment change when the auctions are within a budget management system? Additionally, while \fppe have nice properties compared to \sppe, how do the equilibria for first price auctions compare to those for second price auctions for typical objectives like social welfare and for typical instances? We investigated the properties of \fppe via numerical simulations\footnote{There is a public notebook on synthetic data for these experiments available on Github at \url{https://github.com/facebookresearch/fppe}\,.}
on realistic instances generated from data collected at Facebook and Instagram. We aimed to answer two concrete questions: (1) Under \fppe, how high is bidder regret for reporting truthfully? (2) How does \fppe compare to \sppe in terms of revenue and social welfare?

\subsection{Data}
Similarly to instances generated by \citet{conitzer2021pacing}, we construct realistic instances from real-world auction markets in two steps. We first take bidding data for a region during a period and use it to create $n$~bidders and $m$~goods. To get the $n$~bidders, we identify the top $n$ advertisers that participate in the most auctions in that period in that region. 
Each of those advertisers will map to a buyer in the final instance. As an intermediate step, we define the goods in the instance as the real-life auctions that include at least one of the $n$ bidders. In this intermediate step, we set the bid in each bidders-auction pair to be the value of the bidder for the corresponding good.\footnote{The campaigns used in this study were themselves frequently budget paced. In the logs, we have both the original bid and the pacing multiplier that were used. We use the original bid in this step.}
In a second step and to complete the construction, we reduce the size of the market so we can compute results more efficiently and thoroughly answer the questions of interest. To achieve that, we cluster the real-world auctions (the goods of the intermediate step) into a smaller number of goods. We apply the $k$-means algorithm using the $n$-dimensional vector of values for that good as features. The goods in the final instances are the resulting auction clusters. Each bidder valuation for a (cluster-level) good is set to the average of their valuations for goods in the cluster.
To generate the budgets, we set it equal to the expected value that the bidder would receive in a uniform random allocation of goods to bidders, i.e., $B_i = \frac{1}{n}\sum_j v_{ij}$. The motivation for this is that it leads to a good mixture of budget-constrained and unconstrained bidders, since in aggregate this constrains the sum of prices to be the sum of average valuations, whereas it would be the sum of maximum valuations if every buyer was unconstrained. The distribution of budget-constrained and unconstrained bidders is also similar to that of \citet{conitzer2021pacing}, despite a slightly different process for setting budgets.\footnote{Scaling budgets might yield different results. When, for example, we scale budgets up by a factor 2, the fraction of budget-constrained buyers changes and the quantitative results change as a result. We refer the reader to the publicly available notebook mentioned earlier to try out different budget levels in the synthetic experiments and explore the effects.}
 
We collected bidding data from Facebook and Instagram for 7 days
for auctions in a chosen country.
For each day and each of those two platforms, we collect the bids of the top $n$ advertisers (where $n \in \{6, 8, 10, 12, 14\}$) with the most impressions that day. We consider $m \in \{10, 20, 30\}$ goods  where each good is a representative cluster of auctions. In the end, each combination of platform, day, number of bidders and number of goods becomes an instance of our study, resulting in a total of $2\times 7 \times 5 \times 3 = 210$ instances. To compare to \sppe, we take the same data and consider a set of instances with $\{3, \ldots , 8\}$ bidders and $\{4, \ldots , 8\}$ goods, for a total of $2\times 7 \times 6 \times 5 = 420$ instances. The \sppe instances are smaller because equilibria are harder to compute.

While we present our results on moderate-size instances, we also performed some tests on larger synthetic data to verify that \mcp can indeed be solved efficiently. 
For these tests, we used the Mosek interior-point method~\citep{dahl2021primal}. We found that for uniform random valuations, instances with 4000 buyers and 4000 items can be solved in about 12 minutes, and runtime scaled approximately linearly with the size of the valuation matrix.
In addition to these encouraging experiments via conic solvers, it was recently shown that {\mcp} can be solved using first-order methods~\citep{gao2020first}, which should enable further scalability, at the cost of slightly more inaccurate solutions.

\subsection{Incentive Properties}

We start by numerically analyzing the incentive properties of \fppe. Our analysis will focus on two types of incentives: first, ex-post incentives to shade bids due to the first price auction rule, when other buyer's bids are held constant, and second, ex-ante incentives to misreport in order to shift the \fppe outcome itself when viewed as a mechanism. For both types of incentives, we will focus on the proxy-bidder setting, where advertisers typically report a value per conversion and budget, while the valuation for individual auctions is determined by the platform as the value per conversion times the conversion rate. Thus, the space of possible misreports is substantially reduced: buyers can misreport their overall value per conversion and/or their budget (as opposed to individually shading or misreporting in each auction).

\paragraph{Analysis of Regret.} First we look at the ex-post regret that each bidder has in \fppe as compared to being able to unilaterally deviate to a different pacing multiplier, while keeping the \fppe multipliers fixed for all other bidders.
The regrets resulting from our computational study are shown in Figure~\ref{fig:exp regrets}. 
The figure shows summary statistics over relative ex-post regret, which is the fraction of utility for the best-response pacing multiplier that is lost if the bidder reports truthfully. Each data point represents the relative regret that one of the bidders experienced in the \fppe of each instance; we generate a data point for each bidder in the instance. 
The middle line in each box (which is at $0.00$ for all breakdowns) is the median; the lower and upper edges of the boxes represent the first and third quartiles; the lines extending from the box show the rest of the distribution except for outliers that are determined by a function of the inter-quartile range; and the dots represent individual outliers outside that range. 

\begin{figure}[t]
  \centering
	\includegraphics[width=\columnwidth]{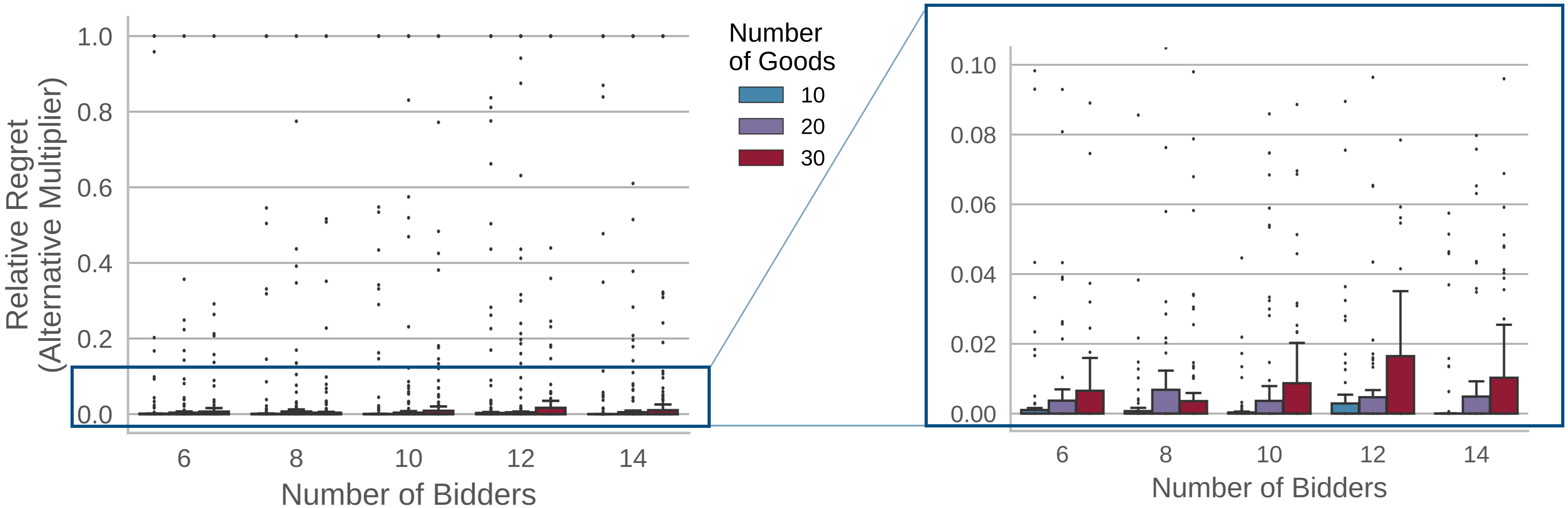}
    \caption{Summary statistics of relative ex-post regret in \fppe (ratio of best-response utility keeping competitors' bids fixed to utility under the \fppe). The plot on the right is a zoomed-in version of the plot on the left.}
  \label{fig:exp regrets}
\end{figure}

The results show that across all breakdowns of number of bidders and number of goods, the relative regret is small. In all cases, the median is at $0\%$ relative regret, and the highest upper quartile represents less than $2\%$ relative regret, so bidders across the board do almost as well in \fppe as they would if they could optimally pick their pacing multiplier (given fixed competition). We do see that the benefit to deviating increases somewhat with the number of goods (instances with 10 goods show consistently lower regret than instances with 20 or 30 goods), while there does not seem to be a strong dependence on the number of bidders.

To put the previous analysis in perspective, Appendix~\ref{sec:additional_realistic_experiments} presents results for an unrealistic model where bidders can optimally set their paced bids for each individual auctions independently.%
\footnote{This deviation model is unrealistic in ad markets because a bidder would have to change the platform's estimated conversion rate for each user in such a way that it leads precisely to the right paced bid for each auction. The reason we consider deviations of this kind is to show that even under unreasonably powerful assumptions of what a bidder can do, the relative regret is still low.} 
Even under these conditions, across all breakdowns, $75\%$ of bidders experience less than $30\%$ relative regret, and as the number of bidders grows (compared to the number of goods), the majority of bidders in instances have $0\%$ relative regret.

\paragraph{Misreporting to Influence Equilibria.}

After providing evidence that bidders have low ex-post regret if they had chosen the best pacing multiplier while keeping the competition fixed, we look at whether bidders can
influence the \fppe outcome itself by submitting conveniently chosen (but potentially wrong) parameters to the \fppe mechanism. 
We assume that a focal bidder can misreport by scaling the valuations and budget by scalars chosen from the set $\{0.1,0.2,\ldots,1.1\}^2$. In other words, the bidder would select a tuple $(\lambda_v, \lambda_b)$, and submit the true valuations multiplied by $\lambda_v$ and the true budget multiplied by $\lambda_b$. Note that the bidder does not have the ability to scale values differently for different auctions (but see Appendix~\ref{sec:additional_realistic_experiments} for an analysis of that case). All other bidders' bids and budgets remain unchanged.
Figure~\ref{fig:exp misreport} shows the relative regret with respect to the highest utility at an \fppe amongst the possible ways to misreport. In almost all cases there is no or negligible gain. We note also that the  incentive to misreport is even lower than the incentive for ex-post shading discussed in the previous section (of the order of 0.001 vs.\ of the order of 0.01 relative regret). This is likely because, in most cases, the ideal ex-post shading for a single buyer leads to budget infeasibility for other buyers, which in turn makes that outcome infeasible.

\begin{figure}[t]
\centering
\includegraphics[width=\columnwidth]{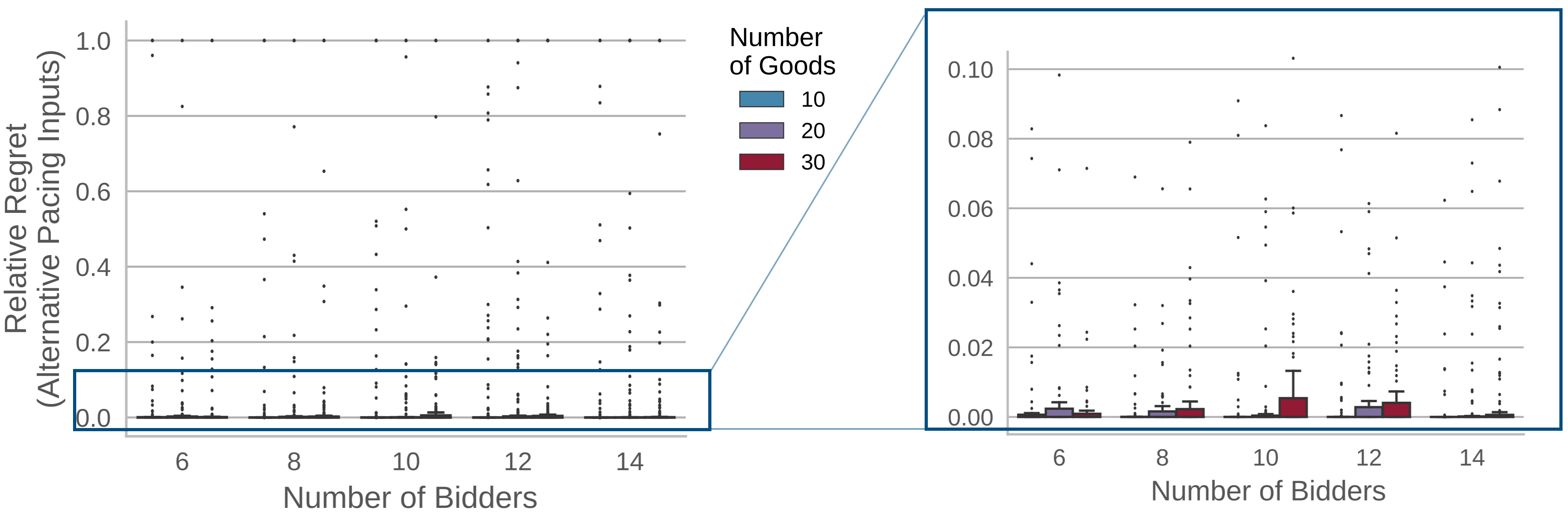}
\caption{Summary statistics when bidders can misreport their value per conversion or budget to manipulate the \fppe.}
 \label{fig:exp misreport}
\end{figure}

\paragraph{Regret as a Function of the Tightness of Budget Constraints.} 
While the majority of bidders in these experiments have surprisingly low regret under \fppe, there are some outliers that have high relative regret. To investigate where this variability comes from, we look at the regret that a bidder experiences as a function of their pacing multiplier at equilibrium. Intuitively, a bidder who has a low pacing multiplier gets a high bang-per-buck for each impression they buy. This implies that underbidding loses value faster than it saves cost, whereas a bidder with a pacing multiplier equal to 1 does not get any utility in each auction won. Our computations indicate that bidder budgets, pacing multipliers, and regrets all have a positive association.

\begin{figure}[t]
  \centering 
	\includegraphics[width=0.49\columnwidth]{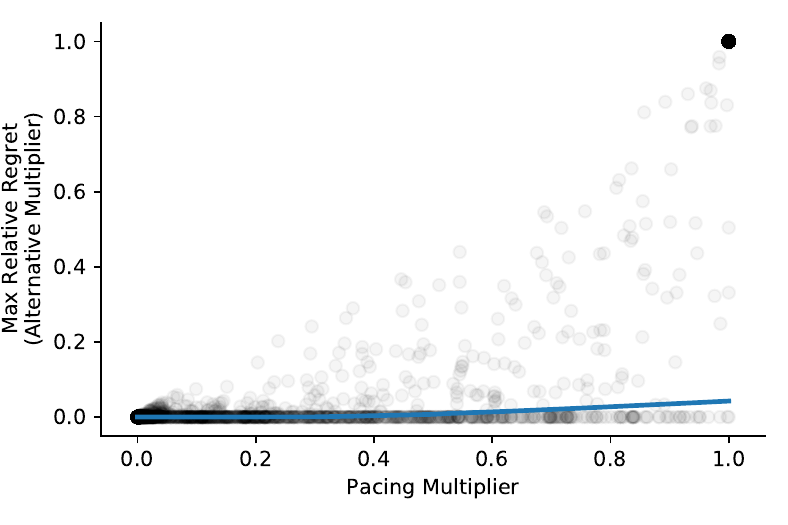}
	\includegraphics[width=0.49\columnwidth]{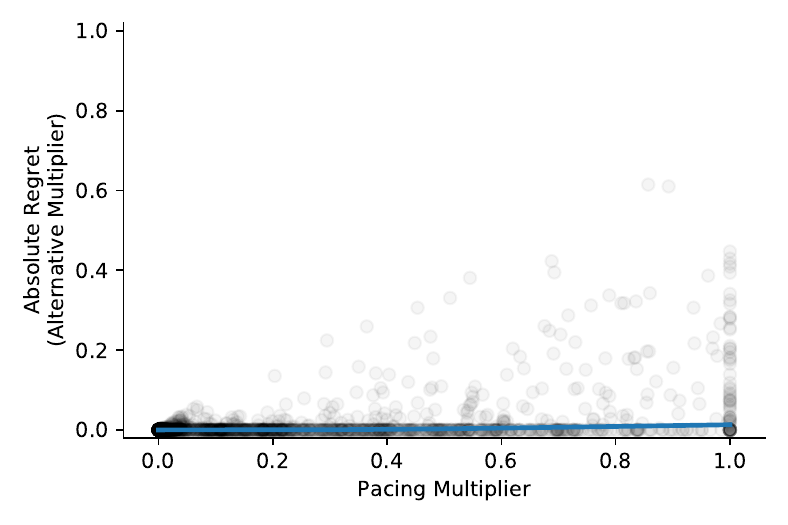}
	\caption{Regret as a function of \fppe pacing multiplier. Each point represents the regret of an individual bidder; the curve summarizes the points using a LOWESS regression. The left and right panels display relative and absolute regret, respectively.}
  \label{fig:regret_vs_multiplier}
\end{figure}

Figure~\ref{fig:regret_vs_multiplier} plots regrets for these instances as a function of the pacing multiplier, where each point represents a bidder of a pacing instance. The left panel is a scatter plot of the relative regret, defined as the fraction of utility that a bidder loses in the \fppe compared to the best-response pacing multiplier. Since the regret for the majority of data points is approximately zero, we also plot a curve corresponding to a LOWESS regression to capture the `typical' behavior.\footnote{The LOWESS (locally weighted scatter plot smoothing) regression is a non-parametric method that is frequently used for data analysis \citep{cleveland1979robust}. For instance, it is the default way to run a regression with smoothed data in R using the function {\sc geom\_smooth}. At each point, it fits a local low-degree polynomial where the input data is weighted by distance to that point.} 
The typical relative regret increases when the pacing multiplier increases because of the smaller relative frequency of points with no regret and a noticeable mass of points located at $(1, 1)$. At first glance, this implies that when the pacing multiplier exceeds $0.4$, the typical regret grows linearly from $0$ to about $0.1$. The points at $(1,1)$ represent unpaced bidders that receive no utility in the \fppe; while there is no numerical error in the calculation, relative regrets do not fully depict the situation. For example, if a bidder gets impressions they value at $\$100$ and pay $\$100$ in the \fppe, while a best-response pacing multiplier may make them pay $\$98$ to get $\$99$ worth of impressions, their relative regret is $1$, but in absolute terms the outcomes are very similar. Consequently, we complement the analysis with an absolute notion of regret that captures the utility gained under a best response. To more easily compare bidders against each other, we normalize the utility gained by the total value of goods received for the best-response pacing multiplier. The results are displayed on the right panel of Figure~\ref{fig:regret_vs_multiplier}. The point at $(1,1)$ has disappeared, and the best-fit line for bidders with pacing multiplier equal to $1$ show a regret down to roughly $1.5\%$. From the distribution of instances (the black dots), it can also been seen that small pacing multipliers are more common. So while positive deviations do exist, on average the benefit is small.

\subsection{Revenue and Social Welfare}

We now compare revenue and social welfare under \fppe and \sppe, as shown in Figure~\ref{fig:exp sppe comparison}. The left panel shows the cdf
of the ratio of \fppe revenue to that of \sppe. The right panel is similar but with social welfare. We see that \fppe
revenue is always higher than \sppe revenue, though both coincide for about
$30\%$ of instances, and almost never more than $4.5$ times as high. For social welfare
we find that, perhaps surprisingly, neither solution concept is dominating, with
most instances having relatively similar social welfare under either solution concept,
though \fppe does slightly better. There are two caveats to keep in mind for
these results: (a) we did not compute the social-welfare-maximizing \sppe so
it is possible that there is a better one (although this is highly unlikely
given that \citet{conitzer2021pacing} find that most instances admit a single equilibrium); (b) many bidders are budget
constrained in the \fppe of our setting, and so these insight might not translate to
cases where many bidders are not budget constrained (see Appendix~\ref{app:revenue_welfare} for
statistics on multipliers in the two solution concepts). These experiments show
that an \fppe is not necessarily worse than an \sppe with respect to social welfare (at least with
nonstrategic bidders), while potentially having a significantly higher revenue.

\begin{figure}[t]
  \centering
	\includegraphics[width=0.49\columnwidth]{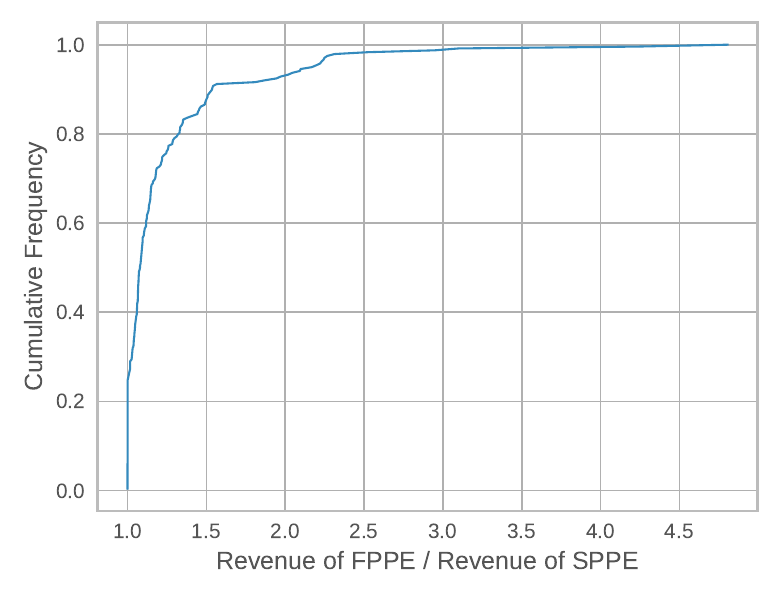}
	\includegraphics[width=0.49\columnwidth]{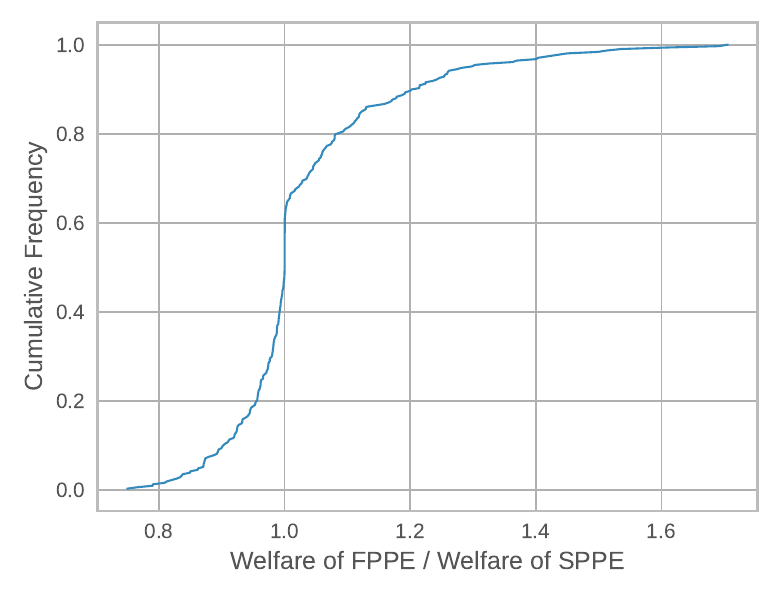}
	\caption{Summary statistics over the \fppe\ / \sppe ratio of revenue (left) and
    social welfare (right). }
  \label{fig:exp sppe comparison}
\end{figure}

\subsection{Robustness}

In order to test the robustness of our results, we ran the same set of experiments on synthetic data (see Appendix~\ref{sec:synthetic_experiments} for details). Qualitatively the results on instances with values drawn from a unif(0,1) distribution are similar, though regret tends to be higher in the synthetic data sets. One hypothesis is that values in the realistic data are correlated, for example because certain users are overall more likely to interact with ads, and this results in stronger competition for high-value auctions.

\section{Discussion of Modeling Assumptions}
\label{sec:assumptions discussion}

\noindent

Our model captures several salient features present in contemporary real-world Internet advertising systems: budget management is done via a multiplicative scalar after buyers report a value per conversion and a budget. Due to the deterministic valuations, we can allow general valuation structures as opposed to independent and/or identical stochastic valuations.
That said, there are also a number of practical issues that we do not capture. We discuss these limitations below, as well as potential ideas on how to relate them to our model.

\noindent{\em Stochastic valuations}: In line with the information available by the setting explored here, our model assumes that all valuations are known ahead of time, and consequently we have a completely determined market. The literature generally considers models in which user types are realized according to some sort of stochastic process, and each buyer's valuation is sampled according to a distribution that depends on the user type. Other authors have modeled the buyer's valuations as being drawn independently each time a user arrives~\citep{balseiro2015repeated,balseiro2017dynamic,balseiro2017budget}. While such models address the stochastic aspects of the process, they fail to capture correlations. Conversely, our model does capture the correlation, but fails to capture the stochastic aspect. In practice, we would ideally want a model that captures both, but such models are more cumbersome to analyze. Nonetheless, our model can be viewed as a discrete approximation to this kind of process.

\noindent{\em Dynamics}: In practice, platforms would typically have a model of what will occur: they know which advertisers are active in the system, their stated budgets and values per conversion, as well as the distribution of users. However, the platform does not know a priori which subset of users will actually arrive on a given day, and thus will not know which exact auctions will materialize. Instead, these auctions materialize sequentially as the day unfolds and users visit the platform. In practice, this  uncertainty is handled by a control algorithm that tunes the pacing multiplier to the right spending rate with the goal of exhausting the budget at the end of the planning horizon.
While we do not model these dynamics, our model can be related to dynamic markets in several ways.
First, our model captures the \emph{ideal} outcome: when each proxy bidder operates an independent control process on a pacing multiplier, the hope is that these pacing multipliers stabilize in a steady state. Our model describes what that steady state should be.
Secondly, in practice the platform would be able to create a good model of the overall composition of the market. Our computational results can be used to compute static pacing multipliers on such a market model. These static pacing multipliers can be used directly in the dynamic market. They can also be used to seed the control algorithm which further refines the pacing multiplier as the market progresses.
Thirdly, showing that pacing equilibria is directly connected to the \eg convex program opens the door to applying \emph{online convex optimization} techniques for converging to \fppe in an online fashion. While the exact details of how to achieve this are beyond the scope of this paper, we believe that it could be achieved using the online variant of regularized stochastic dual averaging~\citep{xiao2010dual}. In recent follow-up work to our paper, \citet[Theorem 7]{gao2020infinite} showed how to do this in a related Fisher market model.

\noindent{\em Multiple slots}: In practice each auction may allocate more than a single item. Typically a visit of a user triggers a multi-item auction (for example, several ad slots are auctioned off simultaneously both for Facebook and Instagram feed ads and for Google search auctions). Similarly to other authors, we abstract away this aspect of the problem, and assume that each item can be sold in isolation~\citep{balseiro2015repeated,balseiro2017budget,conitzer2021pacing,balseiro2017dynamic}.
Modeling multiple slots would require substantial changes in the analysis of the auction mechanism, and it would be harder to leverage the relationship to market equilibria.

\section{Conclusion}
\label{sec:conclusion}

In an advertiser platform, we must continually remember that the auction is only a small piece of a larger, complex system; when setting auction rules it is not just the properties of the auction that matter, but the platform's ensuing  aggregate behavior. 
In the case of feed-based advertising platforms, advertisers commonly control their campaigns through targeting criteria and budgets, with no ability to directly control individual auctions. This substantially changes the advertising platform’s system design problem compared to other settings.
In this paper, we have seen that the steady-state properties of a pacing-based budget management system for such scenarios are in fact quite good when a first price auction is used to sell each impression. 
We have showed that first price auctions enjoy several properties that second price auctions do not: uniqueness, monotonicity, and computability, among others. At the same time, \sppe enjoys other advantages, such as no static shading incentives (although our experiments on real ad-auction data suggest that this may not be a major concern in \fppe either).

In retrospect, the benefits of using a first price auction are not surprising. In simple settings, second price auctions win most theoretical head-to-head contests over first price auctions; however, it is well-known that the luster of the second price auction fades as it is generalized to a Vickrey-Clarke-Groves (VCG) auction, so much that VCG earned the title ``lovely but lonely'' for its lack of use in practice~\citep{ausubel2006lonely}. Indeed, some of the strengths of an advertising platform based on first price auctions are analogous to those seen in other complex auction settings~\citep{milgrom2004theorytowork}---uniqueness of equilibria, relation to competitive equilibria, core membership, shill-proofness, false-name-proofness, etc.---suggesting that first price auctions may, in fact, have a serious role to play in today's advertising platforms.

Taking a broader perspective, we recognize that there is generally no one-size-fits-all answer to the design of a budget management system, since no class of systems strictly dominates the others, and trade-offs remain. This is readily evidenced by the fact that different large platforms have adopted different solutions, both in terms of which budget smoothing method to adopt, and which auction format to run. Many factors will be weighed by any real-world system. For example, comparing multiplicative pacing systems to throttling systems, the former might be appealing because they encapsulate bidders’ natural incentive to shade their bids, whereas the latter make it easier for bidders to get a somewhat more representative sample of available impressions. Similarly, we see different solutions in use in the real world between first price and second price mechanisms. What we have done in this paper is to put forward a framework where the two auction formats can be compared from an end-to-end perspective that captures several design choices of real-world advertising platforms.

% Appendix here
% Options are (1) APPENDIX (with or without general title) or
%             (2) APPENDICES (if it has more than one unrelated sections)
% Outcomment the appropriate case if necessary
%
% \begin{APPENDIX}{<Title of the Appendix>}
% \end{APPENDIX}
%
%   or
%
%\appendix
%\clearpage
%\section*{Appendix}

\begin{APPENDICES}
% \section{<Title of Section A>}
% \section{<Title of Section B>}
% etc
%\input{appendix_monotonicity_sensitivity}
% \input{appendix_convex_program}
\section{Supplemental Experiments on Realistic Instances}
\label{sec:additional_realistic_experiments}

In addition to the regret analysis in the body of the paper (which is with respect to the ex-post best pacing multiplier given a fixed competition), in this section we also consider a much stronger model of deviations where bidders can optimally place individual bids in auctions corresponding to each good. This allows us to put the regret analysis discussed in the body of the paper in perspective compared to what can be considered as an upper bound on regret.
Recall that the platform computes the effective bid as 
conversion rate $\times$ value per conversion $\times$ pacing multiplier
for the bidder-good pair. Although in practice the advertiser only sets the value per conversion, in this section we assume that bidders can choose the optimal bid by influencing the platform's conversion rate estimates for the bidder-good pair. Consequently, the bidder can effectively control the individual bid placed in each of the auctions. 

The resulting relative regret is shown in Figure~\ref{fig:bid-regret}. As in the body, the middle line in each box represents the median, the lower and upper ends of the box show the first and third quartiles, the lines extending from the box show outliers within 1.5 times the inter-quartile range, and the dots represent individual outliers outside that range.
For all parameter settings (number of bidders, and number of goods) $75\%$ of bidders in all instances have less than $30\%$ relative regret. Considering how much control bidders have on the deviations in this analysis, it is rather surprising that in virtually all instances, the \fppe outcome yields more than $70\%$ of the optimal utility achievable. Moreover, we can see that the distribution of regret (in particular, the three quartiles) gets smaller as the competition in the market increases; e.g., when the number of goods decreases or the number of bidders increases. 
For 12 or more advertisers and 10 users, the median regret is $0$ indicating that more than half of all the buyers get their optimal utility in the equilibrium.

\begin{figure}[b]
    \centering
    \includegraphics[width=0.5\textwidth]{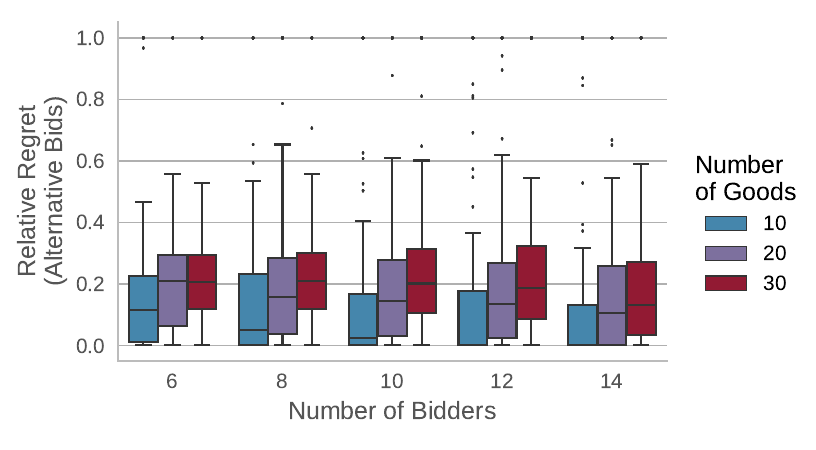}
    \caption{The relative regret of advertisers in \fppe compared to a hypothetical ability to directly control bids in each individual auction.}
    \label{fig:bid-regret}
\end{figure}

\section{Supplemental Experiments on Synthetic Instances}
\label{sec:synthetic_experiments}

In the body of the paper we have shown that based on realistic data, advertisers have low regret in \fppe compared to alternative inputs or pacing multipliers, and that \fppe generate more revenue than \sppe, with roughly similar welfare. To understand whether these results are specific to the data that we used, or if these results continue to hold for very different data, we repeat the analysis on synthetic data where bidders have valuations drawn from a uniform distribution. Since realistic data may have correlated valuations and valuation distributions with a long tail, this synthetic data represents a significant departure from realistic data. While the data is quite different, this is a robustness check where we find qualitatively similar insights, as described below.

We generated instances according to the
complete-graph model of \citet{conitzer2021pacing}. In this model, every bidder
is interested in every good, and each valuation is drawn i.i.d.\ from
unif$(0,1)$. We generated $5$ instances for each point in the Cartesian product
of $\{2,4, 6, 8\}$ bidders and $\{4,6,\ldots,14\}$ goods. For each instance, we
computed (a) an \fppe as the solution to \mcp using the solver CVX~\citep{grant2008cvx}, and (b) an \sppe for every objective function using the MIP given by
\citet{conitzer2021pacing}. We considered at most $8$ bidders and
$14$ goods because of the limited scalability of the \sppe MIP; we were able to solve all \mcp instances in less than 2ms.

\subsection{Incentive Properties}
\label{sec:incentive_properties}

\paragraph{Analysis of Regret.} 
Similarly to what we have done for the realistic instances, first we look at the ex-post regret of each bidder under an \fppe compared to
being able to unilaterally deviate, while keeping the \fppe multipliers fixed
for all other bidders. We consider two sets of deviations: jointly setting bids through a best-response pacing multiplier, and the more powerful model of individually setting bids
in each auction. 

The resulting relative regrets are shown in
Figure~\ref{fig:app exp regrets}. The figure shows summary statistics over the maximum
relative ex-post regret, which is the fraction of the truthful-response value
that the bidder improves by if they deviate. For each instance, for each bidder,
we compute the optimal best response, subject to budget constraints.
The max is over bidders in the auction, and the statistic is
across instances. The middle line on each box is the median; the lower and upper
hinges show the first and third quartiles. The line extending from the box shows
outliers within 1.5 times the inter-quartile range. 

Compared to the results on realistic data, the regret experienced by bidders is higher, but qualitatively we see similar effects: the regret decreases when market density increases. In particular for 6, 8, and 10 bidders, the median regret for alternative multipliers is 0. Strikingly, the max relative regret under the model where bidders can set individuals bids does very poorly for 2 and 4 bidder instances. This is likely due to the expected difference between the first and second order statistic of $n$ draws from a uniform distribution (which shrinks as $O(\tfrac1n)$) which is the difference between the highest and second highest bid.

\begin{figure}[t]
  \centering
    \includegraphics[width=0.49\columnwidth]{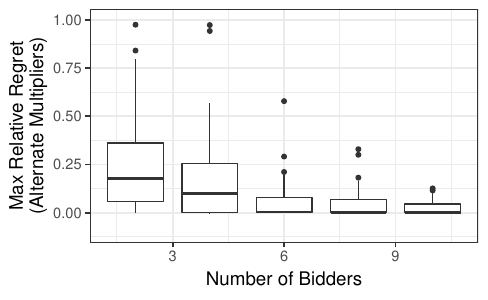}
	\includegraphics[width=0.49\columnwidth]{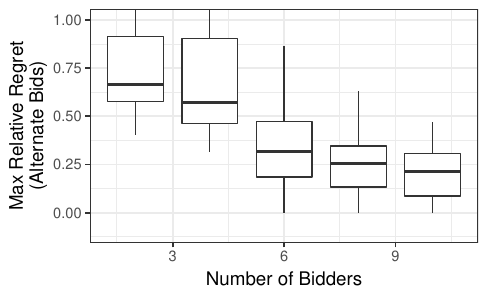}
	\caption{Summary statistics over the maximum relative ex-post regret (maximum taken over
    bidders in a given auction, the statistic taken over the maximum regret across
    instances). {\em Left\/}: The bidder can choose a single alternative pacing multiplier. {\em Right\/}: The bidder can choose an individual bid for every auction.}
  \label{fig:app exp regrets}
\end{figure}

\paragraph{Misreporting to Influence Equilibria.}
 
Figure~\ref{fig:app exp misreport} shows the resulting regrets when a single bidder can misreport by scaling their valuations and budgets with factors $(\lambda_v, \lambda_b) \in [0.1,0.2,\ldots,1.1]^2$ so the inputs provided to the proxy bidder are the true figures times those scalars. Similar to the realistic instances, the regret is generally lower than for alternative pacing multipliers.

\begin{figure}[t]
\centering
\includegraphics[width=0.49\columnwidth]{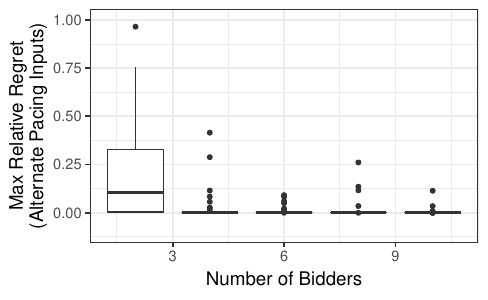}
 \vspace{-4mm}
\caption{Summary statistics when a bidder can misreport their values and budget to influence the \fppe.}
 \vspace{-2mm}
 \label{fig:app exp misreport}
\end{figure}

\paragraph{Regret as a Function of the Tightness of Budget Constraints.} 

We consider the effect of the pacing multiplier on regret. To generate instances with larger pacing multipliers, we compute \fppe on a modified set of problem instances, which are identical to instances in the preceding experiments, except that all budgets were scaled up by a factor of two. We show the relative regret for these instances in Figure~\ref{fig:app multiplier vs bid 2x}. Each point represents a bidder in one of the instances. The uptick in regret is slightly higher than for realistic data, though qualitatively similar.

\begin{figure}[t]
  \centering 
	\includegraphics[width=0.49\columnwidth]{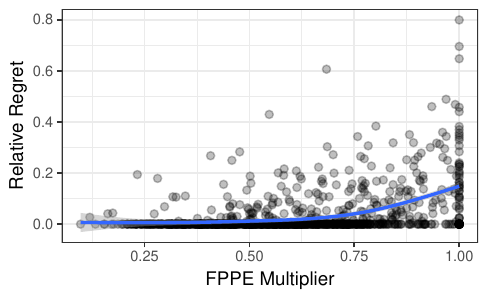}
  \caption{Relative regret as a function of \fppe pacing multiplier, run on instances with budgets scaled by a factor of 2.}
  \label{fig:app multiplier vs bid 2x}
\end{figure}

\subsection{Revenue and Welfare}\label{app:revenue_welfare}

In this section we compare revenue and social welfare under \fppe and \sppe on an instance-by-instance basis. The comparison is shown in Figure~\ref{fig:app exp sppe comparison}. The left panel shows the cdf
of the distribution of ratios of \fppe and \sppe revenues, while the right panel does the same for the ratios of welfare. Putting in perspective the curves corresponding to realistic instances shown in Figure~\ref{fig:exp sppe comparison}, we see that \fppe
revenue is now the same for about
$75\%$ of instances (compared to $30\%$ before), and almost never more than $1.5$ times as high (compared to $4.5$ times before). For welfare we observe that the CDS of the ratio between \fppe and \sppe look exactly the same compared to the realistic instances, albeit the x-axis representing the ratios are at a different scale.

\begin{figure}[t]
  \centering
	\includegraphics[width=0.49\columnwidth]{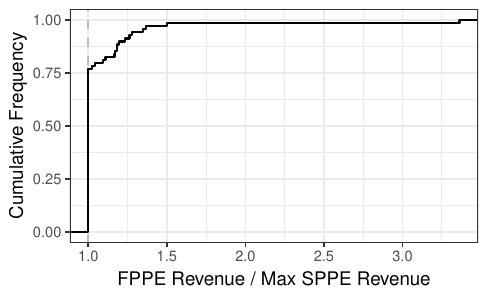}
	\includegraphics[width=0.49\columnwidth]{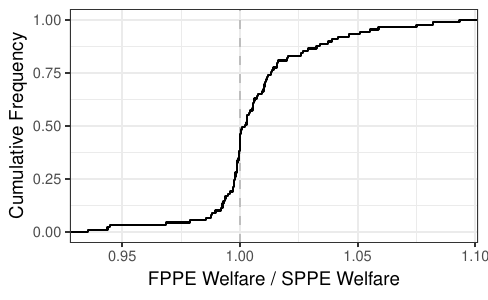}
  \caption{Distribution of the ratio of \fppe to \sppe revenues (left) and social welfare (right). }
  \label{fig:app exp sppe comparison}
\end{figure}

\end{APPENDICES}

% Acknowledgments here
\ACKNOWLEDGMENT{
%\section*{Acknowledgments}

We would like to thank 
Santiago Balseiro, 
Ozan Candogan, 
Roberto Cominetti,
Nikhil Devanur,
Yoni Gur, 
Renato Paes Leme,
Marco Scarsini,
and 
Marc Schroder
for very useful discussions on the technical content of this paper.
In addition, we extend our sincere thanks to 
the review team of Management Science,
conference PC members and reviewers at EC, 
seminar participants at Duke CS-Econ, Michigan Ross, MIT ORC and Stanford GSB, 
and attendees of 
the AAMAS-IJCAI workshop on Agents and Incentives in Artificial Intelligence (AI$\hat\ $3),
the AI and Marketing Science workshop at AAAI,
the Algorithmic Game Theory and Data Science workshop at EC,
the Barbados AGT workshop,
the International Conference on Game Theory,
and 
the Schloss Dagstuhl seminar on Traffic Models.
Their remarks and questions allowed us to significantly improve the manuscript and its presentation.
Debmalya Panigrahi was supported in part by NSF Awards CCF 1535972, CCF 1750140, and CCF 1955703.

}

% References here (outcomment the appropriate case)

% CASE 1: BiBTeX used to constantly update the references
%   (while the paper is being written).
\bibliographystyle{informs2014} % outcomment this and next line in Case 1
%\Urlmuskip=0mu plus 1mu
%\bibliographystyle{aaai} 
%\bibliography{<your bib file(s)>} % if more than one, comma separated
\bibliography{refs}

\begin{thebibliography}{40}
\providecommand{\natexlab}[1]{#1}
\providecommand{\url}[1]{\texttt{#1}}
\providecommand{\urlprefix}{URL }

\bibitem[{Abrams et~al.(2008)Abrams, Sathiya~Keerthi, Mendelevitch,
  \protect\BIBand{} Tomlin}]{abrams2008comprehensive}
Abrams Z, Sathiya~Keerthi S, Mendelevitch O, Tomlin J (2008) Ad delivery with
  budgeted advertisers: A comprehensive lp approach. \emph{Journal of
  Electronic Commerce Research} 9.

\bibitem[{Akbarpour \protect\BIBand{} Li(2020)}]{akbarpour2018credible}
Akbarpour M, Li S (2020) Credible auctions: A trilemma. \emph{Econometrica}
  88(2):425--467.

\bibitem[{Ausubel \protect\BIBand{} Milgrom(2006)}]{ausubel2006lonely}
Ausubel LM, Milgrom P (2006) The lovely but lonely {Vickrey} auction.
  \emph{Combinatorial Auctions, chapter 1} (MIT Press).

\bibitem[{Azar et~al.(2009)Azar, Birnbaum, Karlin, \protect\BIBand{}
  Nguyen}]{azar2009gsp}
Azar Y, Birnbaum B, Karlin AR, Nguyen CT (2009) On revenue maximization in
  second-price ad auctions. Fiat A, Sanders P, eds., \emph{Algorithms - ESA
  2009}, volume 5757 of \emph{Lecture Notes in Computer Science}, 155--166
  (Berlin, Heidelberg: Springer).

\bibitem[{Balseiro et~al.(2021)Balseiro, Kim, Mahdian, \protect\BIBand{}
  Mirrokni}]{balseiro2017budget}
Balseiro S, Kim A, Mahdian M, Mirrokni V (2021) Budget-management strategies in
  repeated auctions. \emph{Operations Research} 69(3):859--876.

\bibitem[{Balseiro et~al.(2015)Balseiro, Besbes, \protect\BIBand{}
  Weintraub}]{balseiro2015repeated}
Balseiro SR, Besbes O, Weintraub GY (2015) Repeated auctions with budgets in ad
  exchanges: Approximations and design. \emph{Management Science}
  61(4):864--884.

\bibitem[{Balseiro \protect\BIBand{} Gur(2019)}]{balseiro2017dynamic}
Balseiro SR, Gur Y (2019) Learning in repeated auctions with budgets: Regret
  minimization and equilibrium. \emph{Management Science} 65(9):3952--3968.

\bibitem[{Bigler(2019)}]{google2019firstprice}
Bigler J (2019) Rolling out first price auctions to google ad manager partners.
  \url{https://www.blog.google/products/admanager/rolling-out-first-price-auctions-google-ad-manager-partners/},
  last accessed: 2021-07-13.

\bibitem[{Borgs et~al.(2007)Borgs, Chayes, Immorlica, Jain, Etesami,
  \protect\BIBand{} Mahdian}]{borgs2007dynamics}
Borgs C, Chayes J, Immorlica N, Jain K, Etesami O, Mahdian M (2007) Dynamics of
  bid optimization in online advertisement auctions. \emph{Proceedings of the
  16th International Conference on World Wide Web (WWW)}, 531--540 (New York,
  NY: Association for Computing Machinery).

\bibitem[{Cary et~al.(2007)Cary, Das, Edelman, Giotis, Heimerl, Karlin,
  Mathieu, \protect\BIBand{} Schwarz}]{cray2007}
Cary M, Das A, Edelman B, Giotis I, Heimerl K, Karlin AR, Mathieu C, Schwarz M
  (2007) Greedy bidding strategies for keyword auctions. MacKie-Mason JK,
  Parkes DC, Resnick P, eds., \emph{Proceedings of the 8th ACM Conference on
  Electronic Commerce (EC)}, 262--271 (New York, NY: Association for Computing
  Machinery).

\bibitem[{Chen et~al.(2021)Chen, Kroer, \protect\BIBand{}
  Kumar}]{chen2021complexity}
Chen X, Kroer C, Kumar R (2021) The complexity of pacing for second-price
  auctions. \emph{arXiv preprint arXiv:2103.13969} .

\bibitem[{Chen(2017)}]{digiday2018firstprice}
Chen Y (2017) Programmatic advertising is preparing for the first-price auction
  era.
  \url{https://digiday.com/marketing/programmatic-advertising-readying-first-price-auction-era/},
  last accessed: 2021-07-13.

\bibitem[{Cleveland(1979)}]{cleveland1979robust}
Cleveland WS (1979) Robust locally weighted regression and smoothing
  scatterplots. \emph{Journal of the American Statistical Association}
  74(368):829--836.

\bibitem[{Cole et~al.(2017)Cole, Devanur, Gkatzelis, Jain, Mai, Vazirani,
  \protect\BIBand{} Yazdanbod}]{cole2017convex}
Cole R, Devanur N, Gkatzelis V, Jain K, Mai T, Vazirani VV, Yazdanbod S (2017)
  Convex program duality, {Fisher} markets, and {Nash} social welfare.
  \emph{Proceedings of the 2017 ACM Conference on Economics and Computation},
  459--460 (ACM).

\bibitem[{Conitzer et~al.(2021)Conitzer, Kroer, Sodomka, \protect\BIBand{}
  Stier-Moses}]{conitzer2021pacing}
Conitzer V, Kroer C, Sodomka E, Stier-Moses NE (2021) Multiplicative pacing
  equilibria in auction markets. \emph{Operations Research} Forthcoming.

\bibitem[{Dahl \protect\BIBand{} Andersen(2021)}]{dahl2021primal}
Dahl J, Andersen ED (2021) A primal-dual interior-point algorithm for
  nonsymmetric exponential-cone optimization. \emph{Mathematical Programming}
  1--30.

\bibitem[{D{\"u}tting et~al.(2019)D{\"u}tting, Fischer, \protect\BIBand{}
  Parkes}]{dutting2018expressiveness}
D{\"u}tting P, Fischer F, Parkes DC (2019) Expressiveness and robustness of
  first-price position auctions. \emph{Mathematics of Operations Research}
  44(1):196--211.

\bibitem[{Edelman \protect\BIBand{} Ostrovsky(2007)}]{edelman2007cycling}
Edelman B, Ostrovsky M (2007) Strategic bidder behavior in sponsored search
  auctions. \emph{Decision Support Systems} 43(1):192--198.

\bibitem[{Feldman et~al.(2007)Feldman, Muthukrishnan, Pal, \protect\BIBand{}
  Stein}]{feldman2007budget}
Feldman J, Muthukrishnan S, Pal M, Stein C (2007) Budget optimization in
  search-based advertising auctions. \emph{Proceedings of the 8th ACM
  Conference on Electronic Commerce (EC)}, 40--49 (New York, NY: Association
  for Computing Machinery).

\bibitem[{Gao \protect\BIBand{} Kroer(2020{\natexlab{a}})}]{gao2020first}
Gao Y, Kroer C (2020{\natexlab{a}}) First-order methods for large-scale market
  equilibrium computation. \emph{Advances in Neural Information Processing
  Systems (NeurIPS)} 33.

\bibitem[{Gao \protect\BIBand{} Kroer(2020{\natexlab{b}})}]{gao2020infinite}
Gao Y, Kroer C (2020{\natexlab{b}}) Infinite-dimensional {Fisher} markets:
  Equilibrium, duality and optimization. \emph{arXiv preprint arXiv:2010.03025}
  .

\bibitem[{Goel et~al.(2010)Goel, Mahdian, Nazerzadeh, \protect\BIBand{}
  Saberi}]{goel2010gsp}
Goel A, Mahdian M, Nazerzadeh H, Saberi A (2010) Advertisement allocation for
  generalized second-pricing schemes. \emph{Operations Research Letters}
  38(6):571--576.

\bibitem[{Grant \protect\BIBand{} Boyd(2008)}]{gb08}
Grant M, Boyd S (2008) Graph implementations for nonsmooth convex programs.
  Blondel V, Boyd S, Kimura H, eds., \emph{Recent Advances in Learning and
  Control}, 95--110, Lecture Notes in Control and Information Sciences
  (Springer).

\bibitem[{Grant \protect\BIBand{} Boyd(2014)}]{grant2008cvx}
Grant M, Boyd S (2014) {CVX}: Matlab software for disciplined convex
  programming, version 2.1. \url{http://cvxr.com/cvx}.

\bibitem[{Hosanagar \protect\BIBand{} Cherepanov(2008)}]{hosanagar2008}
Hosanagar K, Cherepanov V (2008) Optimal bidding in stochastic budget
  constrained slot auctions. \emph{Proceedings of the 9th ACM Conference on
  Electronic Commerce (EC)}, 20 (New York, NY: Association for Computing
  Machinery).

\bibitem[{Jordan(2020)}]{mopub2020firstprice}
Jordan CA (2020) Simplifying mobile app advertising: Moving to a first price
  auction. \url{https://www.mopub.com/en/blog/first-price-auction}, last
  accessed: 2021-07-13.

\bibitem[{Karande et~al.(2013)Karande, Mehta, \protect\BIBand{}
  Srikant}]{karande2013pacing}
Karande C, Mehta A, Srikant R (2013) Optimizing budget constrained spend in
  search advertising. \emph{Proceedings of the Sixth ACM International
  Conference on Web Search and Data Mining (WSDM)}, 697--706 (New York, NY:
  ACM).

\bibitem[{Kroer et~al.(2021)Kroer, Peysakhovich, Sodomka, \protect\BIBand{}
  Stier-Moses}]{kroer2021computing}
Kroer C, Peysakhovich A, Sodomka E, Stier-Moses NE (2021) Computing large
  market equilibria using abstractions. \emph{Operations Research} Forthcoming.

\bibitem[{Mas-Colell et~al.(1995)Mas-Colell, Whinston, \protect\BIBand{}
  Green}]{mascolell-book}
Mas-Colell A, Whinston MD, Green JR (1995) \emph{{Microeconomic Theory}}
  (Oxford University Press).

\bibitem[{Mehta(2013)}]{mehta2013survey}
Mehta A (2013) Online matching and ad allocation. \emph{Foundations and Trends
  in Theoretical Computer Science} 8(4):265--368.

\bibitem[{Mehta et~al.(2007)Mehta, Saberi, Vazirani, \protect\BIBand{}
  Vazirani}]{mehta2007adwords}
Mehta A, Saberi A, Vazirani U, Vazirani V (2007) Adwords and generalized online
  matching. \emph{Journal of the ACM} 54(5).

\bibitem[{Milgrom(2004)}]{milgrom2004theorytowork}
Milgrom P (2004) \emph{Putting Auction Theory to Work} (Cambridge University
  Press).

\bibitem[{Nesterov \protect\BIBand{} Shikhman(2018)}]{nesterov2018computation}
Nesterov Y, Shikhman V (2018) Computation of fisher--gale equilibrium by
  auction. \emph{Journal of the Operations Research Society of China}
  6(3):349--389.

\bibitem[{Nisan et~al.(2007)Nisan, Roughgarden, Tardos, \protect\BIBand{}
  Vazirani}]{nisan2007algorithmic}
Nisan N, Roughgarden T, Tardos E, Vazirani VV (2007) \emph{Algorithmic game
  theory} (Cambridge university press).

\bibitem[{Paes~Leme et~al.(2020)Paes~Leme, Sivan, \protect\BIBand{}
  Teng}]{paes2020competitive}
Paes~Leme R, Sivan B, Teng Y (2020) Why do competitive markets converge to
  first-price auctions? \emph{Proceedings of The Web Conference 2020},
  596--605.

\bibitem[{Peysakhovich \protect\BIBand{} Kroer(2019)}]{peysakhovich2019fair}
Peysakhovich A, Kroer C (2019) Fair division without disparate impact.
  \emph{Mechansim Design for Social Good workshop at EC 2019 (MD4SG)}.

\bibitem[{Powell(2018)}]{powell-lecturenotes}
Powell M (2018) {Economics 2010b: Microeconomic Theory II},
  \url{http://www.pareto-optimal.com/s/EC-2010B-Teaching-Notes-Full-GE-Feb-9-2018-blbz.pdf}
  . Last accessed 2021-07-13.

\bibitem[{Rusmevichientong \protect\BIBand{}
  Williamson(2006)}]{rusmevichientong2006}
Rusmevichientong P, Williamson DP (2006) An adaptive algorithm for selecting
  profitable keywords for search-based advertising services. \emph{Proceedings
  of the 7th ACM Conference on Electronic Commerce (EC)}, 260--269 (New York,
  NY: Association for Computing Machinery).

\bibitem[{Sluis(2017)}]{adexchanger2017firstprice}
Sluis S (2017) Big changes coming to auctions, as exchanges roll the dice on
  first-price.
  \url{https://adexchanger.com/platforms/big-changes-coming-auctions-exchanges-roll-dice-first-price/},
  last accessed: 2021-07-13.

\bibitem[{Xiao(2010)}]{xiao2010dual}
Xiao L (2010) Dual averaging methods for regularized stochastic learning and
  online optimization. \emph{Journal of Machine Learning Research}
  11:2543--2596.

\end{thebibliography}

\end{document}